\newtheorem{theorem}{Theorem}[section]
\newtheorem{proposition}[theorem]{Proposition}
\newtheorem{lemma}[theorem]{Lemma}
\newtheorem{corollary}[theorem]{Corollary}
\newtheorem{hypotheses}[theorem]{Hypotheses}
\theoremstyle{definition}
\newtheorem{example}[theorem]{Example}
\newtheorem{remark}[theorem]{Remark}
\def\thmhead@plain#1#2#3{%
  \thmname{#1}\thmnumber{\@ifnotempty{#1}{ }\@upn{#2}}%
  \thmnote{ {\the\thm@notefont#3}}}
\let\thmhead\thmhead@plain
\newcommand{\F}{\mathbb{F}}
\newcommand{\Fq}{\mathbb{F}_q}
\newcommand{\Fqq}{\mathbb{F}_{q^2}}
\newcommand{\Fqr}[1]{\mathbb{F}_{q^{#1}}}
\newcommand{\cC}{\mathcal{C}}
\newcommand{\bbC}{\mathbb{C}}
\newcommand{\car}{\mathrm{char}}
\newcommand{\mx}{\mathrm{x}}
\begin{document}

\renewcommand{\headrulewidth}{0pt}

\rhead{ }
\lhead{ }

\title{Dualities of dihedral and generalised quaternion codes and applications to quantum codes\renewcommand\thefootnote{\arabic{footnote}}\footnotemark[1]}

\author{\renewcommand\thefootnote{\arabic{footnote}}
Miguel Sales-Cabrera\footnotemark[2], 
\renewcommand\thefootnote{\arabic{footnote}} 
 Xaro Soler-Escriv\`a\footnotemark[2],
  \renewcommand\thefootnote{\arabic{footnote}} 
  V\'ictor Sotomayor\footnotemark[3]
 }

\footnotetext[1]{This study forms part of the Quantum Communication programme and was supported by MCIN with funding from European Union NextGenerationEU (PRTR-C17.I1) and by Generalitat Valenciana COMCUANTICA/008. This work is also partially supported by the Ministerio de Ciencia e Innovaci\'on project PID2022-142159OB-I00. The second author is supported by the Generalitat Valenciana project CIAICO/2022/167. The first and second authors are supported by the I+D+i projects VIGROB23-287 and UADIF23-132 of Universitat d'Alacant.}

\footnotetext[2]{Dpt.\ de Matem\`atiques, Universitat d'Alacant, Sant Vicent del Raspeig, Ap.\ Correus 99, 03080 Alacant (Spain).} 

\footnotetext[3]{Departamento de Álgebra, Facultad de Ciencias, Universidad de Granada, Av. Fuente Nueva s/n, 18071 Granada (Spain). \\  E-mail adresses: \texttt{miguel.sales@ua.es, xaro.soler@ua.es, vsotomayor@ugr.es}}

{\small \date{\today}} 

\maketitle


\begin{abstract}
Let $\Fq$ be a finite field of $q$ elements, for some prime power $q$, and let $G$ be a finite group. A (left) \emph{group code}, or simply a \emph{$G$-code}, is a (left) ideal of the group algebra $\Fq[G]$. In this paper, we provide a complete algebraic description for the hermitian dual code of any $D_n$-code over $\Fqq$, where $D_n$ is a dihedral group of order $2n$ with $n$ not divisible by $\car(\Fqq)$, through a suitable Wedderburn-Artin's decomposition of the group algebra $\Fqq[D_n]$, and we determine all distinct hermitian self-orthogonal $D_n$-codes over $\Fqq$. We also present a thorough representation of the euclidean dual code of any $Q_n$-code over $\Fq$, where $Q_n$ is a generalised quaternion group of order $4n$ not divisible by $\car(\Fq)$, via the Wedderburn-Artin's decomposition of the group algebra $\Fq[Q_n]$. In particular, since the semisimple group algebras $\Fqq[Q_n]$ and $\Fqq[D_{2n}]$ are isomorphic, then the hermitian dual code of any $Q_n$-code has also been fully described. As application of the hermitian dualities computed, we give a systematic construction, via the structure of the group algebra, to obtain quantum error-correcting codes, and in fact we rebuild some already known optimal quantum codes with this methodical approach.
\end{abstract}

\noindent \textbf{Keywords:} Linear codes, group algebras, dihedral and generalised quaternion groups, hermitian self-orthogonal codes, quantum codes.

\noindent \textbf{MSC 2020:} 94B05, 20C05, 11T71, 16D25, 81P73

\section{Introduction}\label{sec:Introduction}

Let $G$ be a group of order $n$, and let $\Fq$ be a finite field of $q$ elements, for some prime power $q$. Recall that the set $\Fq[G]$ of all $\Fq$-linear combinations of the elements of $G$ forms a $\Fq$-vector space with the elements of $G$ as basis, and endowed with multiplication as follows $$\left( \sum_{g\in G} a_gg\right)\cdot \left( \sum_{g\in G} b_gg\right) := \sum_{g\in G}\left(\sum_{h\in G} a_hb_{h^{-1}g}\right)g$$ is an algebra, called the \emph{group algebra} of $G$ over $\Fq$. A linear code $\mathcal{C}\subseteq \Fq^n$ is called a (left)  \emph{group code} (or simply a \emph{$G$-code}) if there exists an isomorphism $\phi: \Fq^n\longrightarrow \Fq[G]$ such that $\phi(\mathcal{C})$ is closed under (left) multiplication by every element of $G$. In other words, a $G$-code is any (left) ideal of $\Fq[G]$. Hereafter, when we talk about ideals of the algebra $\Fq[G]$, we always assume that they are left ideals; in fact, since there is a bijection between left and right ideals of $\Fq[G]$, it is enough to focus on left ideals only.

The starting point of this algebraic approach to study linear codes may be placed in the papers \cite{Berman} and \cite{MacWilliams}, where S. Berman and F. MacWilliams independently introduced the notion of a group code, respectively. In fact, it was discovered in \cite{Berman} that Reed-Muller codes over a finite field of characteristic $2$ may be seen as ideals in the group algebra of an elementary abelian $2$-group, and this was later extended for Generalised Reed-Muller codes (\emph{i.e.} for odd characteristic) in \cite{Charpin}. Several families of linear codes have been proved to be $G$-codes for certain finite groups $G$, and we refer the interested reader to the survey \cite{KS2001}. Nevertheless, not every linear code can be seen as a $G$-code for some group $G$, and a intrinsecal criterion for that fact is provided in \cite{bernal2009intrinsical}. There are several reasons to support the study of group codes. On one hand, they are asymptotically good (\emph{cf.} \cite{BorelloWillems}). Moreover, their algebraic structure can be deeply analysed using powerful tools from representation theory, some of which will be described in section~\ref{subsec:groupcodes}. An instance of this feature is given in \cite{LandrockManz}, which takes advantage of the underlying algebra structure to propose a new decoding algorithm for Reed-Muller codes. This illustrates once again that the more mathematical structure an object possesses, the better are the descriptions we can obtain.

Observe that group codes generalise cyclic codes, since any cyclic code of length $n$ can be seen as a $C_n$-code, where $C_n$ is a cyclic group of $n$ elements. When $G$ is abelian, then $G$-codes are also called abelian group codes and have been extensively studied from the inception of this theory. Nevertheless, the research on non-abelian group codes has been attracting increasing interest only in recent years (\emph{cf.} \cite{Gao2020, GaoYue21, Ricardo2023, VedDeu21}, to name a few). One of the main motivations for this is that quantum computers seem not able to break, so far, the security of public-key cryptosystems based on non-abelian groups codes (see \cite{Sendrier}). At this point, it is significant to emphasise that a code can be seen as a group code for two different groups, and indeed one of them may be abelian and the other may not: for instance $\mathcal{C}=\{000000,111111\}\subseteq \F_{2}^6$ is a $G$-code for any group $G$ of order 6. However, as it is mentioned in the expository paper \cite{gonzalez2019group}, non-abelian group codes that cannot be realised as $G$-codes for some abelian group $G$ have length and dimension at least $24$ and $4$, respectively. Moreover, some parameters cannot be attained by abelian group codes, as for example a binary code of length 24, dimension 6 and minimum distance 10 (see \cite{gonzalez2019group}).

Brochero-Martínez described in \cite{Bro15} the Wedderburn-Artin's decomposition of the group algebra $\Fq[D_n]$ of any dihedral group of order $2n$ in the semisimple case, \emph{i.e.} whenever the characteristic of the field $\Fq$ does not divide $2n$. Following Brochero-Martínez's spirit, Gao and Yue computed the explicit decomposition of the semisimple group algebra $\Fq[Q_n]$ for a generalised quaternion group $Q_n$ of order $4n$. Later on, the decomposition of $\Fq[D_n]$ provided in \cite{Bro15} was utilised by Vedenev and Deundyak to obtain in \cite{VedDeu21} a detailed algebraic description of all $D_n$-codes over $\Fq$ and their euclidean dual codes. Cao \emph{et al.} also determined algebraic expressions of $D_n$-codes, their euclidean dual codes and their hermitian dual codes but with a different approach: through the concatenated structure of dihedral codes (see \cite{CaoCaoMa22, CaoCaoFu23}).

The study of the dualities of a linear code is a fundamental topic within coding theory due to it significant applications. Among them is a topic that is extremely relevance at present: the construction of a particular kind of codes that correct errors occurring in quantum information processing, the so-called CSS (Calderbank-Shor-Steane) quantum codes (\emph{cf.} \cite{KKKK, Laguardia20}). Such construction, in the euclidean case, utilises two classical linear nested codes (or a self-orthogonal linear code) to address the problem of correcting phase and flip errors in a qubit. The authors of this paper applied that construction in \cite{Nostre24}  in the particular case of $D_n$-codes over $\Fq$, taking the advantage of the methodical representation of their euclidean dual codes computed in \cite{VedDeu21} in terms of the structure of $\Fq[D_n]$. There is also a CSS construction based on the hermitian metric, which utilises an hermitian self-orthogonal linear code, and experimental results generally yields quantum codes with better parameters since larger fields are considered. Motivated by this last fact, we address here, among other things, an algebraic characterisation of the hermitian dual of a $D_n$-code over $\Fqq$ based on the structure of $\Fqq[D_n]$. This will be used to provide a systematic construction of CSS quantum dihedral codes, avoiding thus brute force computations (compare, for instance, with \cite{Yu24}).

Our concrete contributions to the aforementioned research line are as follows. In Theorem~\ref{theo_dih_refined} we refine the Wedderburn-Artin's decomposition of $\Fqq[D_n]$ given by Brochero-Martínez to provide a full description of the hermitian dual code of any $D_n$-code over $\Fqq$ whenever the characteristic of the field does not divide $n$. As a consequence, all hermitian self-orthogonal $D_n$-codes have been determined in Theorem~\ref{dih-self-hermitic}. It is worth pointing out that, although hermitian self-orthogonal $D_n$-codes were already determined in \cite{CaoCaoFu23}, our alternative approach via the group algebra structure reduces considerably the laborious computations given there (see Example~\ref{ex_cao}). Further, since the semisimple group algebras $\Fqq[Q_n]$ and $\Fqq[D_{2n}]$ are isomorphic (see \cite{Flaviana2009} or Theorem~\ref{thm:alg_isom} below) then, in particular, the hermitian dual code of any $Q_n$-code is computed. Additionally, we make use of similar techniques to obtain in Theorem~\ref{quat_euclid_orthogonal} the euclidean dual code of any $Q_n$-code over $\Fq$ in the semisimple case; this, as far as we know, was still not addressed. In doing so, we also correct some few mistakes appearing in \cite{CaoCaoFu23, GaoYue21} (see Remark~\ref{remark_mistake} and Example~\ref{ex-error}). Finally, based on the hermitian CSS construction, we provide in section~\ref{sec:examples} numerical examples of some already known optimal quantum error-correcting codes that may be methodically rebuild from hermitian self-orthogonal dihedral codes.


\section{Preliminaries}

All groups considered in this paper are supposed to be finite. We denote by $\Fq$ the finite field of $q$ elements, where $q$ is a power of a prime $p$, and ${\cal M}_n(R)$ for the ring of $(n\times n)$-matrices over a ring $R$. Moreover, $\langle X\rangle_R$  denotes the (left) ideal of $R$ generated by a non-empty subset $X$ of $R$ (we will simply write $\langle X\rangle$ when the ambient ring $R$ is clear enough). The rest of the notation not explained here is standard in the context of group theory or coding theory.

\subsection{Group codes}\label{subsec:groupcodes}

A \emph{linear code} $\cC\subseteq \Fq^n$ is just an $\Fq$-vector subspace of $\Fq^n$. We assume that the reader is familiar with the fundamentals of linear codes, which can be found in any basic book on coding theory. As mentioned in the introduction, under certain conditions, some linear codes can be viewed as ideals of a group algebra, thus taking advantage of this algebraic structure for their analysis. Let us define in a slightly different way, following \cite{bernal2009intrinsical, gonzalez2019group}, this family of linear codes. Given a group $G=\{g_1,\dots ,  g_n\}$, we say that a linear code $\cC\subseteq \Fq^n$ is a (left) {\em $G$-code} if there exists a bijection $\phi: \{1,\dots, n\} \longrightarrow G$ such that the set
\begin{equation}\label{eq:Gcodi}
I_{\cC}:=\left\{\sum_{i=1}^n a_i \phi(i)\ |\  (a_1, \dots, a_n) \in  \cC \right\}
\end{equation}
is a (left) ideal of the group algebra $\Fq[G]$. In practice, we say that a linear code $\cC$ over $\Fq$ is a \emph{group code} if there exists a finite group $G$ such that $\cC$ is a $G$-code. As the reader may expect, the features of the group algebra $\Fq[G]$ become thus fundamental in the study of $G$-codes.

Recall that the group algebra $\Fq[G]$ is said to be {\em semisimple} if it can be realised as a direct sum of simple (also called irreducible) $\Fq[G]$-modules, and this occurs, in virtue of Maschke's theorem, if and only if the characteristic of the field does not divide the order of $G$ (\emph{cf.} \cite{doerk1992}). In this case, the well-known Wedderburn-Artin's theorem (\cite[B - Theorem 4.4]{doerk1992}) ensures that $\Fq[G]$ can be realised as a direct sum of matrix algebras. 

\begin{theorem}\emph{(Wedderburn-Artin's decomposition for finite group algebras).}
 \label{Wedderburn}
    Let $G$ be a finite group such that $\Fq[G]$ is a semisimple group algebra. Then $\Fq[G]$ is isomorphic, as $\Fq$-algebra, to the direct sum of some matrix rings over suitable extensions of $\Fq$. Specifically, one has: 
    \[
        \Fq[G] \cong \bigoplus_{i=1}^s {\cal M}_{n_i} \left( \Fqr{r_i} \right)
    \]
    satisfying $|G| = \displaystyle\sum_{i=1}^{s} n_i^2 r_i$.
\end{theorem}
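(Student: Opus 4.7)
The plan is to invoke the classical Wedderburn--Artin structure theorem for semisimple Artinian rings and specialise it to our setting, exploiting that the base field is finite.

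Since the group algebra $\Fq[G]$ is a finite dimensional (hence Artinian) $\Fq$-algebra and is semisimple by hypothesis, the general Wedderburn--Artin structure theorem for semisimple rings (\emph{cf.} \cite[B - Theorem 4.4]{doerk1992}) provides an isomorphism of $\Fq$-algebras
\[
\Fq[G] \;\cong\; \bigoplus_{i=1}^s {\cal M}_{n_i}(D_i),
\]
where each $D_i$ is a division ring. First, I would observe that every $D_i$ is a finite-dimensional $\Fq$-algebra, and therefore a finite ring. By Wedderburn's little theorem, every finite division ring is commutative, so each $D_i$ is in fact a finite field. Consequently, there exist positive integers $r_i$ such that $D_i \cong \Fqr{r_i}$, which yields the asserted form
\[
\Fq[G] \;\cong\; \bigoplus_{i=1}^s {\cal M}_{n_i}\bigl(\Fqr{r_i}\bigr).
\]

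The dimension identity is then purely a counting argument: one has $\dim_{\Fq} \Fq[G] = |G|$ because $G$ is an $\Fq$-basis of $\Fq[G]$, while on the right hand side $\dim_{\Fq} {\cal M}_{n_i}(\Fqr{r_i}) = n_i^2 \cdot r_i$, since $\Fqr{r_i}$ has $\Fq$-dimension $r_i$ and the matrix algebra has $\Fqr{r_i}$-dimension $n_i^2$. Comparing both sides gives $|G| = \sum_{i=1}^s n_i^2 r_i$.

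The only step requiring real mathematical input beyond citing the general structure theorem is the reduction from division rings to fields, which is where Wedderburn's little theorem enters; everything else is either a direct application of the cited result or a straightforward dimension count. Since the theorem is stated essentially as a translation of the general semisimple structure theorem to the group-algebra setting over $\Fq$, I do not anticipate any serious technical obstacle.
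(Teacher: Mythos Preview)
Your argument is correct and is exactly the standard derivation: apply the general Wedderburn--Artin theorem cited in \cite[B - Theorem 4.4]{doerk1992}, use Wedderburn's little theorem to force the finite division rings $D_i$ to be finite extensions $\Fqr{r_i}$, and compare $\Fq$-dimensions. The paper itself does not supply a proof but simply quotes the result from \cite{doerk1992}, so your proposal fills in precisely the specialisation that the citation leaves implicit.
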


Let $\mathcal{C}$ be a $G$-code, and $I_{\mathcal{C}}$ be the corresponding ideal in $\Fq[G]$ given in (\ref{eq:Gcodi}). Notice that when $\Fq[G]$ is semisimple, by the previous theorem, we can always write $I_{\mathcal{C}}$ as a (unique) sum $I_1\oplus \cdots \oplus I_s$, where $I_i$ is an ideal of $\mathcal{M}_{n_i}(\Fqr{r_i})$ for every $1\leq i \leq s$. Moreover, for every $1\leq i \leq s$, each $\mathcal{M}_{n_i}(\Fqr{r_i})$ is a principal ideal ring, so each ideal $I_i$ is generated by a unique matrix $M_i$ in row reduced echelon form, that is, $$I_i=\langle M_i\rangle = \{XM_i \; | \; X\in \mathcal{M}_{n_i}(\Fqr{r_i})\}.$$ In particular, the dimension of each ideal $I_i$ as $\Fq$-vector space is just $n_ir_i\operatorname{rank}(M_i)$, and consequently the sum of these dimensions yields the dimension of the group code $\mathcal{C}$ (see \cite[Lemma 3.7, Theorem 3.8]{Nostre24}), that is:
\begin{equation}\label{eq:dimension}
\operatorname{dim}_{\Fq}(\mathcal{C}) = \displaystyle\sum_{i=1}^s n_ir_i\operatorname{rank}(M_i). 
\end{equation}
This approach, which will be used without any further comment, is crucial along the paper, since it allows us to give all possible $G$-codes and their dimensions whenever the Wedderburn-Artin's decomposition of $\Fq[G]$ is known. Nonetheless, the explicit computation of the decomposition of $\Fq[G]$ for some concrete group $G$ is a challenging problem. In the next subsection we review some few families of groups for which this decomposition has been computed.

\subsection{The Wedderburn-Artin's decomposition of certain group algebras}

When the group algebra  $\F_{q}[G]$ is semisimple, the decomposition provided by the Wedderburn-Artin's theorem has been concretised in certain cases. The simplest one is when $G=C_n$ is a cyclic group of order $n$. If $\car(\Fq)$ does not divide $n$, then we can factorise the polynomial $\mx^n-1$  as a product of irreducible polynomials over $\Fq[\mx]$, that is, $\mx^n-1=f_1f_2\dots f_r$, in order to apply the Chinese Remainder Theorem and obtain

\[
  \Fq[C_n] \cong \dfrac{\Fq[\mx]}{\langle\mx^n-1\rangle} \cong  \bigoplus_{i=1}^{r} \dfrac{\Fq[\mx]}{\langle f_i\rangle} \cong \bigoplus_{i=1}^{r} \Fqr{\deg(f_i)}.
\]

In the more general case of having an abelian group, the Wedderburn-Artin's decomposition is also known (see \cite{Perlis1950}). Nevertheless, in this article we deal with non-abelian groups; specifically, we are interested in decompositions associated with dihedral groups and generalised quaternion groups, and for that reason we detail them below. 

With regard to dihedral groups $D_n$ of $2n$ elements, such a decomposition is given in \cite{Bro15}. For every non-zero polynomial $g\in \Fq[\mx]$, we denote by $g^*$ the {\em reciprocal} polynomial of $g$, i.e. $g^*(\mx) :=g(0)^{-1}\mx^{deg(g)}g(\mx^{-1})$.  The polynomial $g$ is said to be {\em self-reciprocal} if $g=g^*$. With this notation, the polynomial $\mx^n-1$ can be factorised over $\Fq[\mx]$ into irreducible monic polynomials as 
\begin{equation}
\label{decomp_polinomi_inicial}
\mx^n-1 = f_1f_2 \cdots f_r f_{r+1}f_{r+1}^* \cdots f_{r+s}f_{r+s}^*,
\end{equation}
where $f_1 := \mx-1$, $f_2:=\mx+1$ if $n$ is even, and $f_i=f_{i}^*$ for $1\leq i\leq r$. In this way, $r$ is the number of self-reciprocal factors in the factorisation and $2s$ the number of factors that are not self-reciprocal. Set $\zeta(n) := 2$ if $n$ is even and $\zeta(n) := 1$ otherwise. Let $\alpha_i$ be any root of $f_i$. If $\car(\Fq)$ does not divide $2n$, then \cite[Theorem 3.1]{Bro15} asserts that there exists an algebra isomorphism
    \begin{equation}\label{eq:isom_decom_diedric}
    \Fq[D_n] \overset{\rho}{\cong} \bigoplus_{i=1}^{r+s}A_i,
    \end{equation}
    where
    \begin{equation}\label{eq:decom_diedric}
    A_i:= 
    \begin{cases}
        \Fq \oplus \Fq & \text{if } 1 \leq i \leq \zeta(n) \\[1mm] 
        {\cal M}_2 \left(\Fq(\alpha_i+\alpha_i^{-1}) \right) & \text{if } \zeta(n) +1\leq i \leq r \\[2mm]
         {\cal M}_2 \left(\Fq(\alpha_i) \right) & \text{if } r+1 \leq i \leq r + s
    \end{cases}.
    \end{equation}

\begin{remark}\label{arrels_recipr}
Clearly, if $\alpha$ is a root of a polynomial $f\in \Fq[\mx]$,  then $\alpha^{-1}$ is a root of $f^*$ since $f^{*}(\alpha^{-1})=f(0)^{-1}\alpha^{-\operatorname{deg}(f)}f(\alpha)=0$. It follows that when $f$ is self-reciprocal and $\pm 1$ are not roots of $f$, then we can write $f$ as product of $(\mx-\alpha)(\mx-\alpha^{-1})$ for some roots $\alpha$ of $f$, and therefore $f$ has even degree.  As a consequence, polynomials $f_i$ of $(\ref{decomp_polinomi_inicial})$ with $\zeta(n)+1\leq i \leq r$ always have even degree, and moreover $\Fq(\alpha_i+\alpha_i^{-1})\cong \Fqr{\deg(f_i)/2}$ (see \cite[Remark 3.2]{Bro15}).
\end{remark}

Beyond the semisimple case, in the same paper, we find a similar decomposition of $\Fq[D_n]$ when $\car(\Fq)=2\nmid n$. In this case, $\zeta(n)=1$ and the isomorphism provided in (\ref{eq:isom_decom_diedric}) works whenever one simply replaces $A_1$ by $\Fq[C_2]$ (see \cite[Remark 3.4]{Bro15}).  In this way, for dihedral groups, we will not restrict ourselves just to the semisimple case, but we will also consider this situation, thus allowing, in particular, the construction of binary codes. We will make extensive use of the isomorphism $\rho$ provided in $(\ref{eq:isom_decom_diedric})$ in both sections 3 and 4, where we will explain its definition in detail (see Theorems \ref{theo_dih_refined} and \ref{theo_quat_GaoYue}).

Based on the ideas within \cite{Bro15}, the authors of \cite{GaoYue21} computed the decomposition of the semisimple group algebra $\Fq[Q_n]$ for generalised quaternion groups $Q_n$; for convenience, we will thoroughly describe that Wedderburn-Artin's decomposition in section~\ref{sec:quaternion}. There are other families of groups for which the decomposition of the corresponding group algebra over finite fields is also known. For instance, by adapting known results about $\mathbb{Q}[S_n]$ and $\mathbb{Q}[A_n]$, the Wedderburn-Artin's decompositions of $\Fq[S_n]$ and $\Fq[A_n]$ are obtained in \cite{Ricardo2023}. In \cite{Gao2020} and \cite{Brochero2022} it is computed the Wedderburn-Artin's decomposition for generalised dihedral group algebras and some metacyclic group algebras, respectively. Furthermore, in \cite{VedDeu19} it is also considered the group algebra of the direct product of two dihedral groups $D_n \times D_m$ of orders $2n$ and $2m$, respectively, such that $m$ divides $q-1$. This last result has been recently generalised by ourselves in \cite{Nostre24}, where we give a closed formula for the Wedderburn-Artin's decomposition of the group algebra $\Fq[G\times H]$ corresponding to the direct product of two groups $G$ and $H$ based on the decompositions of $\Fq[G]$ and $\Fq[H]$; as a consequence, the hypothesis ``$m$ divides $q-1$'' in the aforementioned result concerning the structure of $\Fq[D_n\times D_m]$ in \cite{VedDeu19} can be removed.


\subsection{Dualities of group codes}
\label{subsec:dualgroupcode}

Given two elements  $x=(x_1,x_2,...,x_n)$ and $y=(y_1,y_2,...,y_n)$ in $\mathbb{F}_q^n$, their {\em euclidean inner-product} is defined as $(x|y)_e:=\sum_{i=1}^n x_i y_i$.  Thus, the {\em euclidean dual code} of a linear code $\mathcal{C}$ of $\Fq^n$ is 
\begin{equation}\label{eq:dualEuc}
\mathcal{C}^{\perp_e}:=\{y\in\mathbb{F}_q^n \; \mid \; (x|y)_e =0, \;\forall \,  x\in\mathcal{C}\}.
\end{equation}
Note that this inner-product may be transferred to the group algebra $\Fq[G]$, which is useful to work with group codes. More concretely, if $a,b\in\Fq[G]$, then
\[
(a|b)_e=\left(\sum_{g\in G}a_g g \:\middle\vert\: \sum_{g\in G}b_g g\right)_e:=\sum_{g\in G}a_g b_g
\]
is defined as the \emph{euclidian inner-product}, and the \emph{euclidean dual} of a left ideal $I\subseteq \Fq[G]$ is $I^{\perp_e}:=\{b\in\Fq[G]\; | \; (a|b)_e=0, \: \forall\, a\in I\}$. 

We claim that $I^{\perp_e}$ is also a left ideal of $\Fq[G]$. Certainly, since the euclidean inner-product of $\Fq[G]$ is a symmetric bilinear form, it is enough to prove that $gb\in I^{\perp_e}$ for any $b\in I^{\perp_e}$ and for any $g\in G$; but this follows from the fact that multiplying $a,gb\in \Fq[G]$ by $g^{-1}$ simply produces in $(a|gb)_e$ a permutation of its summands, so in particular $(a|gb)_e=(g^{-1}a|b)_e=0$ for all $a\in I$ because $g^{-1}a\in I$.

Let $\cC$ be a $G$-code, and let $I_{\cC}$ be the corresponding ideal of $\Fq[G]$ (see (\ref{eq:Gcodi})). With the previous inner-product in $\Fq[G]$, it is straightforward to see that the ideal $I_{\cC^{\perp_e}}$ of $\Fq[G]$ associated with $\cC^{\perp_e}$ is effectively $(I_{\cC})^{\perp_e}$, that is, $I_{\cC^{\perp_e}}=(I_{\cC})^{\perp_e}$. In particular, we have deduced that the euclidean dual code of a $G$-code is also a $G$-code.

Another way to compute $I^{\perp_e}$ based on the multiplication in $\Fq[G]$, which will be crucial for our purposes, is given in \cite{Borello22}. To present it, we need to consider the algebra antiautomorphism of $\Fq[G]$, given by $\hat{a}:=\sum_{g\in G}a_g g^{-1}$ for all $a=\sum_{g\in G}a_g g$, and the concept of right annihilator of a left ideal $I$ in $\Fq[G]$:
\[
\operatorname{Ann}_r(I):=\{b\in \Fq[G]\ |\ ab=0, \forall a\in  I\},
\]
which is a right ideal of $\Fq[G]$. Now by \cite[Lemma 2.5]{Borello22} we get that
\begin{equation}
\label{eq:orthogonal_ideal_euclidean}
I^{\perp_e}=\widehat{\operatorname{Ann}_r(I)}:=\{ \hat{b}\in \Fq[G]\ |\ ab=0, \forall a\in  I\}.
\end{equation}

In this paper we are going to consider also the hermitian metric. Given two elements $x=(x_1,x_2,...,x_n)$ and $y=(y_1,y_2,...,y_n)$ in $\mathbb{F}_{q^2}^n$, their {\em hermitian inner-product} is defined as $(x|y)_h:=\sum_{i=1}^n x_i y_i^q$.  Thus, the {\em hermitian dual code} of a linear code $\mathcal{C}\subseteq\Fqq^n$ is 
\[
\mathcal{C}^{\perp_h}:=\{y\in\mathbb{F}_{q^2}^n \; \mid \; (x|y)_h =0, \;\forall \,  x\in\mathcal{C}\}.
\]
We can also transfer the hermitian metric to the group algebra. Analogously, given $a, b\in \Fqq [G]$, their \emph{hermitian inner-product} is
\[
(a|b)_h=\left(\sum_{g\in G}a_g g \:\middle\vert\: \sum_{g\in G}b_g g\right)_h:=\sum_{g\in G}a_g b_g^q,
\]
and the \emph{hermitial dual} of a left ideal $I\subseteq \Fqq[G]$ is $I^{\perp_h}:=\{b\in\Fqq[G]\; | \; (a|b)_h=0, \: \forall\, a\in I\}$, which turns out to be also a left ideal of $\Fqq[G]$. As in the euclidean case, here we also have that given a linear code $\mathcal{C}\subseteq\Fqq^n$, the ideal $I_{\cC^{\perp_h}}\subseteq\Fqq [G] $ associated with $\cC^{\perp_h}$ equals to $(I_{\cC})^{\perp_h}$. The code $\cC$ and the corresponding ideal $I_{\cC}$ are called {\em hermitian self-orthogonal} whenever $\cC\subseteq \mathcal{C}^{\perp_h}$, or equivalently whenever $I_{\cC}\subseteq I_{\cC^{\perp_h}}=(I_{\cC})^{\perp_h}$.

Finally, consider the automorphism $x\mapsto x^q$ of $\Fqr{2}$. It is not difficult to see from (\ref{eq:dualEuc}) that $\mathcal{C}^{\perp_h}$ can be computed from $\mathcal{C}^{\perp_e}$ by simply applying that automorphism to every component of each codeword of $\mathcal{C}^{\perp_e}$, that is, 
$$\mathcal{C}^{\perp_h}=(\mathcal{C}^{\perp_e})^q:=\{(y_1^q,...,y_n^q) \; | \; (y_1,...,y_n)\in C^{\perp_e}\}.$$
Similarly, to work in the group algebra, we denote by $\operatorname{conj}_q$ the natural linear extension to $\Fqr{2}[G]$ of the automorphism $x\mapsto x^q$ of $\Fqr{2}$, that is, 
\begin{equation}\label{defConj}
 \operatorname{conj}_q\left(\sum_{g\in G}b_g g\right):=\sum_{g\in G}b_g^q g\in \Fqr{2}[G]. 
\end{equation}
Therefore, one has 
\begin{equation} \label{eq:orthogonal_ideal_hermitian}
(I_{\cC})^{\perp_h}=((I_{\cC})^{\perp_e})^q=(\widehat{\operatorname{Ann}_r(I_{\cC})})^q:=\{ \operatorname{conj}_q(\hat{b})\in \Fqq[G]\ |\ ab=0, \forall a\in  I_{\cC}\}.
\end{equation}

Finally, notice that when the group algebra can be decomposed as a direct sum of matrix algebras over finite fields (see Theorem \ref{Wedderburn}), obtaining the dual of $I_{\cC}$, either euclidean or hermitian, can then be reduced to find the dual of each summand of $I_{\cC}$ in the corresponding matrix ring that appears in the decomposition of the group algebra.  In other words, 
if $I_{\cC}\cong I_{1}\oplus \dots\oplus I_{s}$, where $I_{i}$ is an ideal of ${\cal M}_{n_{i}}(\Fqr{r_{i}})$, for every $1\leq i\leq s$, then  
\begin{equation}\label{reduccioDual}
I_{{\cC}^{\perp_e}}=(I_{\cC})^{\perp_e}\cong I_{1}^{\perp_e}\oplus \dots\oplus I_{s}^{\perp_e} \quad \mbox{ and } \quad 
I_{{\cC}^{\perp_h}}=(I_{\cC})^{\perp_h}\cong I_{1}^{\perp_h}\oplus \dots\oplus I_{s}^{\perp_h}.
\end{equation}
This will be the strategy used hereafter to construct dualities.


\subsection{On the euclidean dual of a \texorpdfstring{$D_n$-code}{Dn-code}}
\label{euclidean_dihedral}

As already mentioned in the introduction, Vedenev and Deundyak utilised in \cite{VedDeu21} the Wedderburn-Artin's decomposition of $\Fq[D_n]$ through the isomorphism $\rho$ given in (\ref{eq:isom_decom_diedric}) to provide a precise description of the euclidean dual of a $D_n$-code. To do this, they break down the ideal associated with the code into ideals in each matrix ring $A_i$ (see (\ref{eq:decom_diedric})) and then apply the equality described in (\ref{reduccioDual}). In such a way, the situation is reduced to the computation of the euclidian dual of each ideal $ I_i$ in the corresponding matrix ring $A_i$. We collect this information in the theorem below, and we adapt it to the notation used in this paper. We recall that, by \cite[Remark 3.4]{Bro15}, in case $\car(\Fq)=2 \nmid n$ we set $A_1=\Fq[C_2]$ in (\ref{eq:isom_decom_diedric}), and we take $C_2$ as the cyclic group of order $2$ generated by the matrix 
$\left[\begin{smallmatrix} 
0&1\\1&0\end{smallmatrix}\right]$. It is easy to check that the only proper ideal of $A_1$ in this case is the one generated by $\left[\begin{smallmatrix} 
   1&1\\0&0\end{smallmatrix}\right]$.

\begin{theorem}\emph{(\cite[Theorem 5]{VedDeu21} rephrased).}
\label{dih_euclidean_dual}
Assume that $\car(\Fq) \nmid n$. Let $\mathcal{C}$ be a $D_n$-code over $\Fq$, and $I_{\cC}$ be its corresponding ideal in $\mathbb{F}_q[D_n]$. Using the isomorphism $\rho$ given in \emph{(\ref{eq:isom_decom_diedric})} and \emph{(\ref{eq:decom_diedric})}, the ideal $I_{\cC}$ can be expressed as 
$$
I_{\cC} \  \overset{\rho}{\cong} \  \displaystyle \bigoplus_{i=1}^{r+s} 
I_i  \subseteq \displaystyle \bigoplus_{i=1}^{r+s} A_i.
$$
Set $I_i= \langle M_i\rangle $, for $i=1$ and $\car(\Fq)= 2$, and for $\zeta(n)+1\leq i \leq r$, where $M_i$ is a matrix of $A_i$ in row reduced echelon form. Then $I_{\cC^{\perp_e}}\subseteq \mathbb{F}_q[D_n]$ verifies that 
$$
I_{\cC^{\perp_e}}  \  \overset{\rho}{ \cong} \  \displaystyle \bigoplus_{i=1}^{r+s}
I_i^{\perp_e} \subseteq \displaystyle \bigoplus_{i=1}^{r+s}A_i
$$ 
where: 
\begin{enumerate}
\item[\emph{i)}] for $1\leq i\leq \zeta(n)$ and $\car(\Fq)\neq 2$: 
$$
I_i^{\perp_e} =\left\lbrace\begin{array}{ll}
\Fq\oplus \Fq, & \text{if } 
I_i=\pmb{0}\oplus \pmb{0} \\*[4mm]
\Fq\oplus \pmb{0}, & \text{if } 
I_i=\pmb{0}\oplus \Fq \\*[4mm]
\pmb{0}\oplus \Fq, & \text{if } 
I_i=\Fq\oplus \pmb{0} \\*[4mm]
\pmb{0}\oplus \pmb{0}, & \text{if } 
I_i=\Fq\oplus \Fq \\*[4mm]
\end{array}\right. 
$$

\item[\emph{ii)}] for $\car(\Fq)= 2$ it holds  $ I_1^{\perp_e}=\langle M_1\rangle^{\perp_e}=\langle \widetilde{M_1}\rangle$ with one of the following options:
$$\widetilde{M}_1:=\left\lbrace\begin{array}{ll}
\begin{bmatrix} 1&0\\0&1 \end{bmatrix}, & \text{if } M_1=\begin{bmatrix} 0&0\\0&0 \end{bmatrix} \\*[4mm]
\begin{bmatrix} 1&1\\0&0 \end{bmatrix}, & \text{if } M_1=\begin{bmatrix} 1&1\\0&0 \end{bmatrix} \\*[4mm]
\begin{bmatrix} 0&0\\0&0 \end{bmatrix}, & \text{if } M_1=\begin{bmatrix} 1&0\\0&1 \end{bmatrix}
\end{array}\right. $$

\item[\emph{iii)}] for $\zeta(n)+1\leq i \leq r$, a root $\alpha_i$ of $f_i$, and $\lambda_i\in \Fq(\alpha_i+\alpha_i^{-1})$ it holds $I_i^{\perp_e}=\langle M_i\rangle^{\perp_e}=\langle \widetilde{M_i}\rangle$ with one of the following options:
$$\widetilde{M}_i:=\left\lbrace\begin{array}{ll}
\begin{bmatrix} 1&0\\0&1 \end{bmatrix}, & \text{if } M_i=\begin{bmatrix} 0&0\\0&0 \end{bmatrix} \\*[4mm]
\begin{bmatrix} 2&-\alpha_i-\alpha_i^{-1}\\0&0 \end{bmatrix}, & \text{if } M_i=\begin{bmatrix} 0&1\\0&0 \end{bmatrix} \\*[4mm]
\begin{bmatrix} \alpha_i+\alpha_i^{-1}+2\lambda_i &-2-(\alpha_i+\alpha_i^{-1})\lambda_i\\0&0 \end{bmatrix}, & \text{if } M_i=\begin{bmatrix}1&\lambda_i\\0&0 \end{bmatrix} \\*[4mm] 
\begin{bmatrix} 0&0\\0&0 \end{bmatrix}, & \text{if } M_i=\begin{bmatrix} 1&0\\0&1 \end{bmatrix}
\end{array}\right. $$

\item[\emph{iv)}] for $r+1\leq i \leq r+s$, and $\lambda_i\in \Fq(\alpha_i)$ it holds $I_i^{\perp_e}=\langle M_i\rangle^{\perp_e}=\langle \widetilde{M_i}\rangle$ with one of the following options:
$$\widetilde{M}_i:=\left\lbrace\begin{array}{ll}
\begin{bmatrix} 1&0\\0&1 \end{bmatrix}, & \text{if } M_i=\begin{bmatrix} 0&0\\0&0 \end{bmatrix} \\*[4mm]
\begin{bmatrix} 0&1\\0&0 \end{bmatrix}, & \text{if } M_i=\begin{bmatrix} 0&1\\0&0 \end{bmatrix} \\*[4mm] 
\begin{bmatrix}1&-\lambda_i\\0&0 \end{bmatrix}, & \text{if } M_i=\begin{bmatrix}1&\lambda_i\\0&0 \end{bmatrix}\\*[4mm] 
\begin{bmatrix} 0&0\\0&0 \end{bmatrix}, & \text{if } M_i=\begin{bmatrix} 1&0\\0&1 \end{bmatrix}
\end{array}\right. $$
\end{enumerate}
\end{theorem}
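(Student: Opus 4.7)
The plan is to invoke equation~(\ref{eq:orthogonal_ideal_euclidean}), which gives $I_{\cC^{\perp_E}} = \widehat{\operatorname{Ann}_r(I_{\cC})}$, and reduce the problem componentwise along the Wedderburn--Artin decomposition~(\ref{eq:isom_decom_diedric}). The first task is to verify that $\widehat{\,\cdot\,}$ preserves each summand $A_i$: since $D_n$ is ambivalent (rotations $r^k$ are conjugated to $r^{-k}$ by $s$, and reflections $sr^k$ are involutions), every conjugacy class is closed under inversion, so every class sum is fixed by $\widehat{\,\cdot\,}$. Consequently each primitive central idempotent $e_i$, being an $\Fq$-linear combination of class sums, is fixed, and $\widehat{A_i}=A_i$. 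Combined with the orthogonality $A_iA_j=0$ for $i\neq j$, which makes $\operatorname{Ann}_r$ decompose summand by summand, this reduces the theorem to computing $\widehat{\operatorname{Ann}_r^{A_i}(\langle M_i\rangle_{A_i})}$ inside each $A_i$ and re-expressing the result as $\langle \widetilde{M}_i\rangle_{A_i}$.

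Cases (i) and (ii) follow quickly. In case (i), $A_i=\Fq\oplus\Fq$ corresponds to one-dimensional representations on which $r$ acts by $\pm 1$, so $\widehat{r}=r^{-1}$ and $\widehat{s}=s$ both act as the identity; hence $\widehat{\,\cdot\,}|_{A_i}$ is trivial, and the right annihilator in a direct sum of fields simply swaps the two summands. In case (ii), the only proper ideal of $A_1\cong\Fq[C_2]$ in characteristic $2$ is generated by the nilpotent $1+\sigma$, which is self-annihilating because $(1+\sigma)^2=0$; once again $\widehat{\,\cdot\,}|_{A_1}$ is trivial since $\widehat{s}=s$.

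The main work lies in cases (iii)--(iv), where $A_i\cong\mathcal{M}_2(L_i)$. Because $\widehat{\,\cdot\,}$ fixes the centre $Z(\Fq[D_n])$ pointwise, its restriction to $A_i$ is $L_i$-linear, and so by Skolem--Noether there exists $P\in GL_2(L_i)$ with $\widehat{X}=PX^TP^{-1}$; in particular no Frobenius twist is required. The matrix $P$ is pinned down up to a scalar by the two equations $PR_i^T=R_i^{-1}P$ and $PS_i^T=S_iP$, where $R_i,S_i\in\mathcal{M}_2(L_i)$ are Brochero-Mart\'inez's explicit images of $r$ and $s$. With $P$ in hand, each of the four RREF generators $M_i$ produces the right annihilator $\{B:M_iB=0\}$, characterised by the condition that the columns of $B$ lie in $\ker M_i$; applying $X\mapsto PX^TP^{-1}$ and reducing to RREF yields the $\widetilde{M}_i$ in the statement. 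In case (iv), choosing a basis diagonalising $R_i$ makes $P=S_i$ up to scalar, and $\widehat{\,\cdot\,}$ becomes the elementary map exchanging the two diagonal entries of a $2\times 2$ matrix, which gives the claimed formulas immediately.

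The principal obstacle is the computation in case (iii): one has to determine $P$ explicitly from the two matrix equations above---it turns out to depend only on $\beta_i=\alpha_i+\alpha_i^{-1}$ and is invertible precisely because $\alpha_i\neq\pm 1$ in the range $\zeta(n)+1\leq i\leq r$---and then to conjugate the rank-one generator and verify that the resulting $\widetilde M_i$ remains rank one, which is exactly what forces the complementary sub-case $M_i=I_2$ to give $\widetilde M_i=0$.
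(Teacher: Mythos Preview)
The paper does not actually prove this theorem: it is stated as a rephrasing of \cite[Theorem~5]{VedDeu21} and no proof is given. Your proposal is a correct outline, and in fact it coincides with the methodology that the paper itself uses later, in Section~\ref{sec:quaternion}, when proving the analogous result for $Q_n$-codes (Theorems~\ref{theo_commutative_quaternion} and~\ref{quat_euclid_orthogonal}). There the antiautomorphism $\operatorname{inv}$ is transported through the Wedderburn isomorphism $\psi$ to obtain an explicit $\nu$; for the ``dihedral part'' the paper records $\varphi_i=\sigma_i\circ\mathcal{K}\circ\sigma_i^{-1}$ for $2\leq i\leq r$ and $\varphi_i=\mathcal{K}$ for $r+1\leq i\leq r+s$, where $\mathcal{K}\bigl(\begin{smallmatrix}a&b\\c&d\end{smallmatrix}\bigr)=\begin{smallmatrix}d&b\\c&a\end{smallmatrix}$ is exactly your ``swap the diagonal entries'' map, and the proof of Theorem~\ref{quat_euclid_orthogonal} then simply refers back to \cite[Lemma~1]{VedDeu21} for those components.

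The only difference is one of packaging: the paper (and presumably \cite{VedDeu21}) writes down the concrete conjugation $\sigma_i$ from the outset and verifies the commutative diagram by hand, whereas you invoke ambivalence of $D_n$ plus Skolem--Noether to assert the existence of $P$ with $\widehat{X}=PX^TP^{-1}$ before solving for it. Your argument that $\widehat{\,\cdot\,}$ is $L_i$-linear because it fixes class sums (hence the centre) is a clean conceptual shortcut that the paper's explicit computation does not need but also does not provide. In case~(iv) your identification $P=S_i$ recovers precisely $\mathcal{K}(X)=S_iX^TS_i$, matching Lemma~\ref{antiautomorphisms}. For case~(iii) you leave the determination of $P$ and the final RREF reduction as ``the principal obstacle''; this is indeed where the explicit entries involving $\alpha_i+\alpha_i^{-1}$ arise, and carrying it out amounts to the conjugation by $Z_i$ that the paper performs in the $J_1$ branch of Theorem~\ref{dih_hermitic_orthogonal}.
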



\subsection{CSS quantum codes}\label{pre:CSS}

A $[[n, k, d]]_q$ {\em quantum error-correcting code} ${\cal Q}$ is a subspace of the $n$-fold tensor product $(\bbC^q)^{\otimes n}$ of the complex vector space $\bbC^q$. The code ${\cal Q}$ has length $n$, dimension $\dim_{\bbC^q}({\cal Q}) = q^k$,  and minimum distance $d$, \emph{i.e.} any error acting on at most $d -1$ positions of the tensor factors (the so-called \emph{qubits}) can be detected or has no effect on the code. This kind of codes are useful to reduce the effects of environmental and operational noise in quantum information processing. Here we focus on the so-called \emph{stabiliser codes} based on linear codes over finite fields and, within them, we concretely deal with {\em quantum CSS codes}, whose name is due to Calderbank, Shor and Steane. As already mentioned in the Introduction, quantum CSS codes can be constructed from classical linear codes in two ways: either using the euclidean metric and two linear nested codes (or even an euclidean self-orthogonal code), or using an hermitian self-orthogonal code. The basic theory of this kind of codes can be found on \cite{KKKK}, and for more information on general quantum codes, see, for example, \cite{Laguardia20}. 

In \cite{Nostre24} we made use of the algebraic representation of the euclidean dual of a $D_n$-code over $\Fq$ given in \cite{VedDeu21}, which is based on the Wedderburn-Artin's decomposition of $\Fq[D_n]$, to build quantum CSS dihedral codes. However, we were not able to obtain optimal quantum codes; but on the upside, our systematic approach avoided brute force computations to check whether the considered pair of $D_n$-codes were nested. Note that the hermitian CSS construction of quantum codes involves computations in larger fields, which often yield better parameters in experimental results. Thus, in this paper, one of our main objectives is to apply that construction to hermitian self-orthogonal $D_n$-codes over $\Fqq$ that can be produced methodically via Theorem \ref{dih-self-hermitic}. 

Let present below the CSS construction of quantum codes based on the hermitian metric. In the following theorem, $\operatorname{wgt}(c)$ denotes the Hamming weight of a codeword $c$ in a linear code $\cC$.

\begin{theorem}\emph{(\cite[Proposition II.4]{Grassl24})}\label{theo:CSS}
Let $\cC\subseteq \Fqq^n$ be an hermitian self-orthogonal linear code, that is, $\cC\subseteq \mathcal{C}^{\perp_h}$. If $\cC$ is an $[n,k,d]_{q^2}$-code, then there exists an  $[[n, n-2k, d_{\cal Q}]]_q$ quantum stabiliser code ${\cal Q}$ with $d_{\cal Q}:=d$ if $\cC = \mathcal{C}^{\perp_h}$, and otherwise
\[
d_{\cal Q}:=\min\{\operatorname{wgt}(x-y) \; | \; x,y\in {\cal C}^{\perp_h}\smallsetminus \cC,\:x\neq y\} \geq d.
\]

\end{theorem}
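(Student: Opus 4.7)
The plan is to reduce the hermitian CSS construction to the general stabiliser-code framework over $\Fq$, which is classically phrased in terms of additive codes in $\Fq^{2n}$ that are self-orthogonal under the (trace-)symplectic form. The bridge is an $\Fq$-linear bijection $\Phi:\Fqq^n \to \Fq^{2n}$ obtained by expanding each coordinate in a chosen $\Fq$-basis $\{\beta_1,\beta_2\}$ of $\Fqq$. Under $\Phi$ the following hold: (i) the Hamming weight of $z\in \Fqq^n$ equals the symplectic weight of $\Phi(z)$, since a coordinate $z_i$ vanishes iff both of its $\Fq$-components vanish; (ii) an $\Fqq$-linear code $\cC$ of $\Fqq$-dimension $k$ maps to an $\Fq$-linear subspace $\Phi(\cC)\subseteq \Fq^{2n}$ of $\Fq$-dimension $2k$; and (iii), after composing $(\,\cdot\,|\,\cdot\,)_H$ with $\operatorname{Tr}_{\Fqq/\Fq}$, the hermitian orthogonality of two vectors in $\Fqq^n$ becomes the trace-symplectic orthogonality of their $\Phi$-images in $\Fq^{2n}$.

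First I would verify that $\Phi(\cC)\subseteq \Phi(\cC)^{\perp_s}$ whenever $\cC\subseteq \cC^{\perp_H}$, and that $\Phi(\cC^{\perp_H}) = \Phi(\cC)^{\perp_s}$. The first inclusion follows directly from (iii); for the second, the $\Fqq$-linearity of $\cC$ upgrades trace-symplectic orthogonality to full hermitian orthogonality (testing against $\beta\cdot y$ for all $\beta\in\Fqq$), after which a dimension count ($\dim_{\Fq}\Phi(\cC^{\perp_H})=2(n-k)=\dim_{\Fq}\Phi(\cC)^{\perp_s}$) finishes the identification. Second, I would invoke the standard stabiliser-code theorem (see e.g.\ \cite{KKKK}): to every $\Fq$-linear subspace $D\subseteq\Fq^{2n}$ of $\Fq$-dimension $2k$ with $D\subseteq D^{\perp_s}$ one associates an $[[n,n-2k,d_D]]_q$ quantum stabiliser code, whose minimum distance is the smallest symplectic weight appearing in $D^{\perp_s}\setminus D$ in the impure case, and in $D^{\perp_s}\setminus\{0\}$ when $D=D^{\perp_s}$. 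Applied to $D=\Phi(\cC)$ and translated back through $\Phi$ by (i), the symplectic weight becomes the Hamming weight on $\Fqq^n$ and $D^{\perp_s}$ becomes $\cC^{\perp_H}$; this yields exactly the formula for $d_{\cal Q}$ in the statement, and the degenerate case $\cC=\cC^{\perp_H}$ gives $d_{\cal Q}=d$ as the minimum nonzero weight of $\cC$.

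The main obstacle is not conceptual but a careful bookkeeping check: one must select the auxiliary basis $\{\beta_1,\beta_2\}$ of $\Fqq$ so that the form pulled back from $\operatorname{Tr}_{\Fqq/\Fq}\circ(\,\cdot\,|\,\cdot\,)_H$ is a genuinely alternating non-degenerate symplectic form on $\Fq^{2n}$, and track that no Frobenius twist or sign discrepancy sneaks into the identifications $\Phi(\cC^{\perp_H})=\Phi(\cC)^{\perp_s}$ and $\operatorname{wgt}(z)=\operatorname{wgt}_s(\Phi(z))$. Once this dictionary is fixed, the parameters $[[n,n-2k,d_{\cal Q}]]_q$ and the characterisation of $d_{\cal Q}$ follow immediately from the stabiliser formalism.
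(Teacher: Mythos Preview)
The paper does not prove this statement at all: it is quoted verbatim as \cite[Proposition II.4]{Grassl24} in the preliminaries (subsection~\ref{pre:CSS}) and used as a black box for the numerical examples in section~\ref{sec:examples}. So there is no ``paper's own proof'' to compare against.

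Your sketch is the standard route to this result and is correct in outline: pass from $\Fqq^n$ to $\Fq^{2n}$ via an $\Fq$-basis of $\Fqq$, identify the trace of the hermitian form with a symplectic (trace-alternating) form, match Hamming weight with symplectic weight, and invoke the stabiliser theorem of \cite{KKKK}. Two small cautions. First, $\operatorname{Tr}_{\Fqq/\Fq}\!\big((x|y)_H\big)$ is \emph{symmetric}, not alternating; the alternating form one actually needs is obtained from a suitable scalar multiple, e.g.\ $\operatorname{Tr}_{\Fqq/\Fq}\!\big(\gamma\,(x|y)_H\big)$ with $\gamma^q=-\gamma$, or equivalently by choosing the basis $\{\beta_1,\beta_2\}$ so that the resulting Gram matrix is $\left[\begin{smallmatrix}0&1\\-1&0\end{smallmatrix}\right]$. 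You flag exactly this bookkeeping issue in your last paragraph, so you are aware of it. Second, the minimum-distance formula you recover from the stabiliser theorem is $\min\{\operatorname{wgt}(c):c\in\cC^{\perp_H}\smallsetminus\cC\}$; the expression with $\operatorname{wgt}(x-y)$ in the quoted statement is just a (slightly redundant) rewriting of the same quantity for linear codes, so no extra work is needed there.
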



\section{On hermitian dualities of dihedral codes}

As aforementioned, Cao \emph{et al.} obtained expressions of $D_n$-codes, their euclidean dual codes (over $\Fq$) and their hermitian dual codes (over $\Fqq$) via the concatenated structure of dihedral codes (see \cite{CaoCaoMa22, CaoCaoFu23}) when $q$ does not divide $n$. Alternatively, Vedenev and Deundyak also analysed in \cite{VedDeu21} the structure of any $D_n$-code and its euclidean dual over $\Fq$, with $\car(\Fq)\nmid n$, but with a different approach: through the structure of the corresponding ideal in the group algebra $\Fq[D_n]$. Following Vedenev and Deundyak's spirit, our objective in this section is to obtain a complete algebraic description of the hermitian dual code of any $D_n$-code $\cC$ based on the decomposition of the corresponding ideal $I_{\cC}$ in the group algebra $\Fqq[D_n]$, whenever $\car(\Fqq)\nmid n$. The main result of this section is Theorem~\ref{dih_hermitic_orthogonal}. Additionally, we have detected some incorrect statements in \cite{CaoCaoFu23} (see Remark~\ref{remark_mistake}).

\subsection{A suitable Wedderburn-Artin's decomposition of \texorpdfstring{$\F_{q^2}[D_n]$}{Fq2[Dn]}}

In this section we are going to rewrite appropriately the isomorfism given in (\ref{eq:isom_decom_diedric}) to address the study of the hermitian dual code of any dihedral code over $\Fqq$. To do this, we need to refine the decomposition given in (\ref{decomp_polinomi_inicial}) of the polynomial $\mx^n-1$ as a product of irreducible polynomials over $\Fqq[\mx]$. In this case, in addition to the reciprocal polynomial, we will also take into account the conjugate polynomial of a given one. 

Given a polynomial $f=\sum_{i=0}^m a_i\mx^i\in\Fqr{2}[\mx]$, we denote by $\overline{f}:=\sum_{i=0}^m a_i^q\mx^i$ its \emph{conjugate polynomial}. In particular, $f$ is called \emph{self-conjugate} whenever $f=\overline{f}$. We write $f^{\dagger}$ for the conjugate of the reciprocal polynomial of $f$, that is, $f^{\dagger}:=\overline{f^{*}}$. 

Below we state some elementary properties about polynomials $f$, $f^*$ and $\overline{f}$ which will be useful latter.

\begin{lemma}
\label{propiedades_reci_conj}
Let $\alpha$ be an element in some field extension of $\Fqq$. Then for any polynomial  $f\in\Fqr{2}[\mx]$ it holds: 
\begin{itemize}
\setlength{\itemsep}{-1mm}
\item[\emph{(a)}] $\overline{f}(\alpha^q)=f(\alpha)^q$. In particular, $\overline{f}(1)=f(1)^q$, $\overline{f}(-1)=f(-1)^q$, and if $\alpha$ is a root of $f$, then $\alpha^q$ is a root  of $\overline{f}$.
\end{itemize}
Assume, in addition, that $f$ is irreducible on $\Fqr{2}[\mx]$.
\begin{itemize}
\setlength{\itemsep}{-1mm}
\item[\emph{(b)}] If  $f=\overline{f}$,  then $f\in\Fq[\mx]$, and $\deg(f)$ is odd.
\item[\emph{(c)}] If  $f^{*}=\overline{f}$, then $\deg(f)$ is odd; and if $\alpha$ is a root of $f$, then $\alpha^{-1}=\alpha^{q^{\deg(f)}}$.
\end{itemize}
\end{lemma}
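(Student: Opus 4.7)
The plan is to treat each part in turn; the arguments are essentially independent and each rests on Frobenius bookkeeping.

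For (a), I would expand $\overline{f}(\alpha^q)=\sum_i a_i^q\alpha^{qi}=\bigl(\sum_i a_i\alpha^i\bigr)^q=f(\alpha)^q$, using that $x\mapsto x^q$ is a ring endomorphism on any field of characteristic $p$ containing $\alpha$. For (b), the first claim is a direct substitution into the definition, $f^*(\alpha^{-1})=f(0)^{-1}\alpha^{-\deg(f)}f(\alpha)=0$, while the second follows instantly from (a) applied at $\alpha$.

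For (c), assume $f$ is irreducible in $\Fqq[\mx]$ with $f=f^*$, and let $\alpha$ be a root. By (b), $\alpha^{-1}$ is also a root of $f$. Irreducibility over a finite field gives separability, so all roots are distinct. Either $\alpha=\alpha^{-1}$, in which case $\alpha=\pm 1$ and monic irreducibility forces $f\in\{\mx-1,\mx+1\}$, or the roots pair up as $\{\beta,\beta^{-1}\}$ with $\beta\ne\beta^{-1}$, forcing $\deg(f)$ to be even.

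For (d), the identity $f=\overline{f}$ at the coefficient level reads $a_i^q=a_i$, i.e., $a_i\in\Fq$, so $f\in\Fq[\mx]$. For the parity of the degree I would invoke the standard fact that a monic irreducible $p\in\Fq[\mx]$ of degree $m$ either remains irreducible over $\Fqq$ (when $m$ is odd) or factors there as $g\cdot\overline{g}$ with $g\ne\overline{g}$ and $\deg(g)=m/2$ (when $m$ is even); in the second case no factor is self-conjugate. Since our $f$ is irreducible in $\Fqq[\mx]$ and self-conjugate, only the first case is possible, and $\deg(f)=m$ is odd.

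The main obstacle is (e), where the hypothesis $f^*=\overline{f}$ intertwines both involutions and the conclusion ties $\alpha^{-1}$ to a specific Frobenius iterate. I would argue as follows. By (b), $\alpha^{-1}$ is a root of $f^*=\overline{f}$, and again by (b) the roots of $\overline{f}$ are the $q$-th powers of the roots of $f$. Since $f$ is the minimal polynomial of $\alpha$ over $\Fqq$ of degree $d:=\deg(f)$, those roots form the single Frobenius-$q^2$ orbit $\{\alpha^{q^{2i}}:0\le i<d\}$, so there exists $j$ with $\alpha^{-1}=\alpha^{q^{2j+1}}$. Raising both sides to the $q^{2j+1}$ power yields $\alpha^{q^{2(2j+1)}}=\alpha$, and since $d$ is the length of the $q^2$-orbit of $\alpha$, this forces $d\mid 2j+1$; in particular $d$ is odd. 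Finally, writing $2j+1=ds$ with $s$ odd and observing that $(q^d)^2=q^{2d}\equiv 1$ modulo the multiplicative order of $\alpha$, the element $q^d$ has order dividing $2$ in that cyclic group; hence $q^{ds}\equiv q^d$ whenever $s$ is odd, which gives $\alpha^{q^d}=\alpha^{q^{2j+1}}=\alpha^{-1}$, as required.
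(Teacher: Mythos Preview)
Your proof is correct. Parts (a)--(c) coincide with the paper's arguments essentially verbatim. The only substantive differences are in (d) and (e). For (d), you invoke the standard dichotomy for how a monic irreducible in $\Fq[\mx]$ behaves over $\Fqq$; the paper instead argues directly by contradiction: if $r:=\deg(f)$ were even, then $\Fqq\leqslant\Fqr{r}$, so the splitting field of $f$ over $\Fqq$ would coincide with its splitting field $\Fqr{r}$ over $\Fq$, giving the absurdity $r=[\Fqr{r}:\Fqq]=r/2$. Your appeal to the standard fact is legitimate but less self-contained. For (e), both arguments first locate $\alpha^{-1}=\alpha^{q^{2j+1}}$ among the roots of $\overline{f}$ and deduce $d\mid 2j+1$, hence $d$ odd; to finish, the paper simply records $0\le j<d$, so $2j+1<2d$ forces $2j+1=d$ outright, whereas you allow $2j+1=ds$ with $s$ odd and close via the observation that $q^{2d}\equiv 1$ modulo the order of $\alpha$, so $(q^d)^s\equiv q^d$ for odd $s$. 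The paper's endgame is marginally shorter; yours has the small advantage of not needing to track the range of $j$.
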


\begin{proof}
Let us write $f=\displaystyle\sum_{i=0}^r a_i\mx^i\in \Fqq[\mx]$.

\noindent (a) $\overline{f}(\alpha^q)=\displaystyle\sum_{i=0}^r a_i^q(\alpha^q)^i=\sum_{i=0}^r (a_i\alpha^i)^q = f(\alpha)^q$, and the remaining assertions directly follow.

\medskip

\noindent (b) The first assertion follows immediately from the fact that $f$ is self-conjugate, i.e. $a_i=a_i^q$ for every $0\leq i\leq r$. Set $r:=\deg(f)$, and let $\alpha$ be a root of $f$. We claim that $r$ is odd. Since $f$ irreducible over $\Fqq[\mx]$, one has that $\Fqr{2}(\alpha)\cong \Fqr{2r}$ is the splitting field of $f$ over $\Fqq$. Nevertheless, since $f\in \Fq[\mx]$ it  is also irreducible over $\Fq[\mx]$, then $\Fq(\alpha)\cong\Fqr{r}$ is the splitting field of $f$ over $\Fq$. By contradiction, if $r$ is even, we have $\Fq \leqslant \Fqr{2} \leqslant \Fq(\alpha)=\Fqr{2}(\alpha)$ and  so
$
r=[\Fq(\alpha):\Fq]=[\Fqr{2}(\alpha):\Fq]=[\Fqr{2}(\alpha):\Fqr{2}][\Fqr{2}: \Fq]=2r
$, 
a contradiction.

\medskip

\noindent (c) In virtue of statement (a) and Remark \ref{arrels_recipr}, we get that $\alpha^{-1}$ and $\alpha^q$ are both roots of $f^{*}=\overline{f}$. Hence there exists a $k$-th power of the automorphism $\sigma: x\mapsto x^{q^2}$ such that $\alpha^{-1}=\sigma^k(\alpha^q)=\alpha^{q^{2k+1}}$. Set $m:=2k+1$. Note that we may assume $k<r=\deg(f)$, since otherwise $\alpha^{-1}=\alpha^{q^{2r+1}}=\alpha^{q}$ and thus $\alpha^{q^2}=\alpha\in\Fqr{2}$, so $r=1$ and the claim certainly holds. Now observe that $\alpha^{q^m}=\alpha^{-1}$ leads to $\alpha^{q^{2m}}=\alpha$, so $\alpha \in \Fqr{2m}$. Since all roots of $f$ are powers of $\alpha$, then $f$ splits over $\Fqr{2m}$, and so $\Fqr{2r}\leqslant \Fqr{2m}$ because $\Fqr{2r}$ is the splitting field of $f$ over $\Fqr{2}$. In particular $r$ divides $m$. The fact $m<2r$ leads to $m=r$, which finishes the proof.
\end{proof}

\medskip

\begin{hypotheses} 
\label{hipotesis_descomposicion_Fq2Dn}
Suppose that $\car(\Fqq)$ does not divide $n$,  and decompose the polynomial $\mx^n-1$ into irreducible factors in $\Fqr{2}[\mx]$ as follows:
\begin{equation}
\label{decomp_polinomi_detall}
\mx^n-1 = \prod_{j \in J_0} f_j
\prod_{j \in J_1} f_j \overline{f_j}
\prod_{j \in J_2 \cup J_3} f_j f_j^*
\prod_{j \in J_4} f_j f_j^* \overline{f_j} f_j^\dagger,
\end{equation}
where:
\begin{itemize}
\setlength{\itemsep}{-.5mm}
\item $j \in J_0$ if and only if $\overline{f_j} =f_j =  f_j^*$. In this case, $f_j \in \{ \mx-1, \mx+1 \}$ by \emph{Lemma~\ref{propiedades_reci_conj} (b)} and \emph{Remark~\ref{arrels_recipr}}.
\item $j \in J_1$ if and only if $\overline{f_j}\neq f_j = f_j^*$;
\item $j \in J_2$ if and only if  $\overline{f_j}= f_j \neq  f_j^*$;
\item $j \in J_3$ if and only if  $\overline{f_j} = f_j^*\neq f_j$;
\item $j \in J_4$ if and only if $f_j, \overline{f_j}$ and $f_j^*$ are all different.
\end{itemize}
Moreover, set $J := \displaystyle\bigcup_{i=0}^4 J_i$, $\alpha_j$ for a root of $f_j$, and $r_j:=\deg(f_j)$.
\end{hypotheses}

\medskip

In the following lemma, we explain the relationship between an arbitrary polynomial of $\Fqr{2}[\mx]$ and the roots of the polynomials $f_j$ we have just defined.

\begin{lemma}
\label{lemma_tech_rj}
Assume \emph{Hypotheses~\ref{hipotesis_descomposicion_Fq2Dn}}. For any $g\in\Fqr{2}[\mx]$ it holds:
\begin{itemize}
\setlength{\itemsep}{-1mm}
\item[\emph{(a)}] If $j\in J_1\cup J_4$, then $\overline{g}(\alpha_j)=g(\alpha_j^q)^{q^{2r_j-1}}$.
\item[\emph{(b)}] If $j\in J_2$, then $\overline{g}(\alpha_j)=g(\alpha_j)^{q^{r_j}}$.
\item[\emph{(c)}] If $j\in J_3$, then $\overline{g}(\alpha_j)=g(\alpha_j^{-1})^{q^{r_j}}$.
\end{itemize}
\end{lemma}

\begin{proof}
Let $g=\sum_{i=0}^s a_i\mx^i$. Before proving the statements, observe that for any positive integer $k$, and any value $\beta$ in some field extension of $\Fqq$, it holds 
\begin{equation} \label{eq_lemma_tech}
g(\beta^{q^{2k}})=\displaystyle\sum_{i=0}^s a_i \left( \beta^{q^{2k}} \right)^i = \displaystyle\sum_{i=0}^s (a_i)^{q^{2k}} \left( \beta^{i} \right)^{q^{2k}} = \left(\displaystyle\sum_{i=0}^s a_i \beta^{i}  \right)^{q^{2k}} = g(\beta)^{q^{2k}}
\end{equation} 
since the coefficients of $g$ are fixed by the automorphism $\mx \mapsto \mx^{q^2}$ of $\Fqq$. Now for showing (a), observe that $$\overline{g}(\alpha_j) = \overline{g}\left(\alpha_j^{q^{2r_j}}\right) = \overline{g}\left(\left(\alpha_j^{q^2}\right)^{q^{2(r_j-1)}}\right)  = \overline{g}\left(\alpha_j^{q^2}\right)^{q^{2(r_j-1)}} = g(\alpha_j^{q})^{q^{2r_j-1}}, $$ where the first equality is due to $\alpha_j \in \Fqr{2}(\alpha_j)\cong\Fqr{2r_j}$ (see Remark~\ref{remark_fields}), the third equality follows from (\ref{eq_lemma_tech}), and the last equality is a direct application of Lemma~\ref{propiedades_reci_conj} (a). 

Let us prove (b). By Lemma~\ref{propiedades_reci_conj} (b), we have that $\alpha_j=\alpha_j^{q^{r_j}}$  and $r_j=2k_j+1$ for certain positive integer $k_j$. Thus $$ \overline{g}(\alpha_j) = \overline{g}\left(\alpha_j^{q^{r_j}}\right)= \overline{g}\left(\alpha_j^{q}\right)^{q^{2k_j}} = g\left(\alpha_j\right)^{q^{2k_j+1}}=g(\alpha_j)^{q^{r_j}},$$ where the second equality follows by applying (\ref{eq_lemma_tech}), and the third equality is due to Lemma~\ref{propiedades_reci_conj} (a). 

Finally, note that a similar reasonament yields (c), since by Lemma~\ref{propiedades_reci_conj} (c) we also have that $r_j=2k_j+1$ for certain positive integer $k_j$, and in this case $\alpha_j^{-1}=\alpha_j^{q^{r_j}}$.
\end{proof}

Before rewriting the isomorphism given in (\ref{eq:isom_decom_diedric}) properly, we also need the following result.

\begin{lemma}
\label{lemma:sigmas}
Assume \emph{Hypotheses~\ref{hipotesis_descomposicion_Fq2Dn}}. If $j\in J_1$, then:
\begin{enumerate}[label=\emph{(\alph*)}]
\setlength{\itemsep}{-1mm}
\item $Z_j := \begin{bmatrix}
1 & -\alpha_j \\ 1 & -\alpha_j^{-1}
\end{bmatrix}$ and 
$\overline{Z_j} := \begin{bmatrix}
1 & -\alpha_j^q \\[2mm] 1 & -\alpha_j^{-q}
\end{bmatrix}$ are invertible matrices of ${\cal M}_2(\Fqq(\alpha_j))$.

\item The maps $\sigma_j, \overline{\sigma_j} : {\cal M}_2(\Fqq(\alpha_j))\to {\cal M}_2(\Fqq(\alpha_j))$ given by 
\[
\sigma_j(X) := Z_j^{-1} X Z_j , \quad \overline{\sigma_j}(X) := \overline{Z_j}^{-1} X \overline{Z_j}
\]
are automorphisms of ${\cal M}_2(\Fqq(\alpha_j))$.
\end{enumerate}
\end{lemma}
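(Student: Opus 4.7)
The statement consists of two parts: invertibility of the matrices $Z_j, \overline{Z_j}$, and the fact that conjugation by them yields algebra automorphisms. The second part is immediate once the first is established, since conjugation by any invertible element of a matrix algebra is an inner automorphism, so the real content lies in (a).

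For (a), I would simply compute the determinants. A direct expansion gives
\[
\det(Z_j) = -\alpha_j^{-1} + \alpha_j = \alpha_j - \alpha_j^{-1}, \qquad \det(\overline{Z_j}) = \alpha_j^q - \alpha_j^{-q}.
\]
Both expressions lie in $\Fqq(\alpha_j)$ (note $\alpha_j^q \in \Fqq(\alpha_j)$ since this field is closed under the Frobenius), and both are nonzero precisely when $\alpha_j \neq \pm 1$ (respectively $\alpha_j^q \neq \pm 1$, which is equivalent since the Frobenius is a bijection). Hence the heart of the argument is to rule out $\alpha_j \in \{1,-1\}$.

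To rule this out, I would use the hypothesis $j \in J_1$, namely $f_j = f_j^*$ and $f_j \neq \overline{f_j}$. The polynomials $\mx - 1$ and $\mx + 1$ have coefficients in $\Fq$, hence are self-conjugate; so if $\alpha_j \in \{1,-1\}$ then $f_j \in \{\mx-1, \mx+1\}$, which would force $f_j = \overline{f_j}$, contradicting $j \in J_1$. Therefore $\alpha_j^2 \neq 1$, and both determinants are nonzero, proving invertibility.

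For (b), I would just observe that for any invertible matrix $P \in {\cal M}_2(\Fqq(\alpha_j))$, the map $X \mapsto P^{-1} X P$ is a ring automorphism of ${\cal M}_2(\Fqq(\alpha_j))$ (it preserves sums, products and the identity, and has inverse $X \mapsto P X P^{-1}$), and it is $\Fqq(\alpha_j)$-linear, hence an $\Fqq(\alpha_j)$-algebra automorphism. Applying this to $P = Z_j$ and $P = \overline{Z_j}$ (which are invertible by part (a)) finishes the proof. I do not expect any real obstacle here; the only subtlety is the short verification in part (a) that the hypothesis $j \in J_1$ precludes $\alpha_j = \pm 1$.
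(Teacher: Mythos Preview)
Your proof is correct and follows essentially the same approach as the paper: compute the determinant of $Z_j$, observe it vanishes only if $\alpha_j^2=1$, and use the hypothesis $j\in J_1$ to exclude $f_j\in\{\mx-1,\mx+1\}$; part (b) then follows immediately since conjugation by an invertible matrix is an inner automorphism. The only cosmetic difference is that the paper phrases the exclusion as ``$f_j$ would divide $\mx^2-1$, impossible since $j\in J_1$'' (i.e.\ such $f_j$ lie in $J_0$), whereas you invoke the self-conjugacy of $\mx\pm1$ to contradict $f_j\neq\overline{f_j}$; both are equivalent one-line observations.
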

\begin{proof}
Certainly $Z_j$ has non-zero determinant; otherwise $\alpha_j=\alpha_j^{-1}$, so $\alpha_j^2=1$ and consequently $f$ is a divisor of $\mx^2-1$, which is impossible since $j\in J_1$. Similarly, $\overline{Z_j}$ is also invertible. Finally, the second statement follows from the first one due to the fact that $\sigma_j$ and $\overline{\sigma_j}$ are conjugation maps. 
\end{proof}

\medskip

Given the group algebra $\Fqq[D_n]$, we restate below the isomorphism $\rho$ given in (\ref{eq:isom_decom_diedric}), by considering the decomposition in (\ref{decomp_polinomi_detall}) and by grouping together those blocks associated to pairs of conjugate polynomials. In the next result, $C_2$ denotes the cyclic group of order $2$ generated by the matrix $\left[\begin{smallmatrix} 0&1\\1&0 \end{smallmatrix}\right]$.

\begin{theorem}
\label{theo_dih_refined}
Assume \emph{Hypotheses~\ref{hipotesis_descomposicion_Fq2Dn}}. There exists an isomorphism of $\Fqq$-algebras $\rho:\Fqr{2}[D_n] \longrightarrow \displaystyle\bigoplus_{j \in J} A_j$, where
$$ A_j := 
\begin{cases}
\Fqr{2} \oplus \Fqr{2} & \text{if } j \in J_0 \text{ and $\car(\Fqq)\neq 2$}  \\

\Fqr{2}[C_2] & \text{if } j \in J_0 \text{ and $\car(\Fqq)=2$}  \\

\mathcal{M}_2 \left( \Fqr{2}(\alpha_j + \alpha_j^{-1}) \right) \oplus \mathcal{M}_2 \left( \Fqr{2}(\alpha_j^q + \alpha_j^{-q}) \right) & \text{if } j \in J_1 \\

\mathcal{M}_2 \left( \Fqr{2}(\alpha_j) \right) & \text{if } j \in J_2 \cup J_3 \\

\mathcal{M}_2 \left( \Fqr{2}(\alpha_j) \right) \oplus \mathcal{M}_2 \left( \Fqr{2}(\alpha_j^q)\right) & \text{if } j \in J_4 \\
\end{cases}.$$ Further, $\rho=\bigoplus_{j \in J} \rho_j$, where each $\rho_j$ is given by the generators of $D_n=\langle a, b \, | \, a^{n} =b^2= 1, bab = a^{-1} \rangle$ as follows: 
\begin{enumerate}[label=\emph{\roman*)}]
\item For $j\in J_0$
$$\left\lbrace\begin{array}{llll}
\rho_j(a):=(1,1) & \text{and} & \rho_j(b):=(1,-1) &  \text{if } f_j = \mx-1 \text{ and $\car(\Fqq)\neq 2$.} \\*[1mm]
\rho_j(a):=(-1,-1) & \text{and} & \rho_j(b):=(1,-1) &  \text{if } f_j = \mx+1 \text{ and $\car(\Fqq)\neq 2$.} \\*[1mm]
\rho_j(a):= \begin{bmatrix} 1&0 \\ 0&1\end{bmatrix} & \text{and} & \rho_j(b):=\begin{bmatrix} 0&1 \\ 1&0\end{bmatrix} & \text{if }  \text{$\car(\Fqq)=2$.}
 \end{array}\right.$$
 
\item For $j \in J_1$
$$
\rho_j(a):=
\left(\sigma_j \left( \begin{bmatrix} \alpha_j & 0 \\ 0 & \alpha_j^{-1} \end{bmatrix} \right) , \:\overline{\sigma_j} \left( \begin{bmatrix} \alpha_j^q & 0 \\ 0 & \alpha_j^{-q} \end{bmatrix} \right)  \right) 
 \quad 
\text{and} \quad  \rho_j(b):=\left(\sigma_j \left( \begin{bmatrix} 0&1  \\  1 &0 \end{bmatrix} \right),\: \overline{\sigma_j} \left( \begin{bmatrix} 0&1  \\ 1&0 \end{bmatrix}\right)\right),
$$
where the maps $\sigma_j$ and $\overline{\sigma_j}$ are those in \emph{Lemma~\ref{lemma:sigmas} (b)}. 

\item For $j\in J_2\cup J_3$ 
$$\rho_j(a):= \begin{bmatrix} \alpha_j & 0 \\ 0 & \alpha_j^{-1} \end{bmatrix} \quad \text{and} \quad  \rho_j(b):= \begin{bmatrix} 0&1  \\  1 &0 \end{bmatrix}.$$

\item For $j\in J_4$  
$$\rho_j(a):=\left(\begin{bmatrix} \alpha_j & 0 \\ 0 & \alpha_j^{-1} \end{bmatrix} ,\: \begin{bmatrix} \alpha_j^q & 0 \\ 0 & \alpha_j^{-q} \end{bmatrix}\right) \quad \text{and} \quad  \rho_j(b):=\left(\begin{bmatrix} 0&1  \\  1 &0 \end{bmatrix} , \: \begin{bmatrix} 0&1  \\  1 &0 \end{bmatrix}\right).
$$
\end{enumerate}
\end{theorem}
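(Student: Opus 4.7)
The plan is to view (\ref{decomp_polinomi_detall}) as a repackaging of Brochero-Mart\'{\i}nez's factorisation (\ref{decomp_polinomi_inicial}) applied over the larger field $\Fqq$, and then regroup the Wedderburn summands of (\ref{eq:isom_decom_diedric}) accordingly. A preliminary step is to check that (\ref{decomp_polinomi_detall}) exhausts all irreducible divisors of $\mx^n-1$ in $\Fqq[\mx]$: since the maps $f\mapsto f^*$ and $f\mapsto \overline{f}$ are commuting involutions on this finite set, Lemma~\ref{propiedades_reci_conj} shows that every orbit under the Klein four-group they generate falls into exactly one of the five classes $J_0,\dots,J_4$ as described, and the linear case in $J_0$ is pinned down by parts (c)--(d) of that lemma.

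Next, I would match each summand $A_j$ with the corresponding (sub)sum of Brochero blocks from (\ref{eq:decom_diedric}) applied over $\Fqq$. For $j\in J_0$ we have $f_j\in\{\mx-1,\mx+1\}$, a single linear self-reciprocal factor, producing $\Fqq\oplus\Fqq$ (or $\Fqq[C_2]$ in characteristic $2$, following Remark 3.4 of \cite{Bro15}). For $j\in J_1$ both $f_j$ and $\overline{f_j}$ are distinct self-reciprocal irreducible factors of even degree, each contributing a block $\mathcal{M}_2(\Fqq(\beta))$ with $\beta$ a sum of a root and its inverse. For $j\in J_2\cup J_3$ the reciprocal pair $(f_j,f_j^*)$ yields a single $\mathcal{M}_2(\Fqq(\alpha_j))$, while for $j\in J_4$ the two distinct reciprocal pairs $(f_j,f_j^*)$ and $(\overline{f_j},f_j^{\dagger})$ yield two such blocks. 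A direct $\Fqq$-dimension count confirms that the total equals $|D_n|=2n$.

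I would then define $\rho=\bigoplus_{j\in J}\rho_j$ as in the statement and verify the defining dihedral relations $a^n=1$, $b^2=1$, $bab=a^{-1}$ on each block. In the $J_0$, $J_2\cup J_3$ and $J_4$ cases this is essentially immediate because $\alpha_j^n=1$ and the matrix $\left[\begin{smallmatrix} 0 & 1 \\ 1 & 0 \end{smallmatrix}\right]$ swaps $\alpha_j$ and $\alpha_j^{-1}$ on the diagonal. For $j\in J_1$ the same check applies to the unconjugated diagonal model and the relations are then preserved by the algebra automorphisms $\sigma_j,\overline{\sigma_j}$ of Lemma~\ref{lemma:sigmas}.

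The main obstacle is to justify that for $j\in J_1$ the image of $\rho_j$ actually lies in $\mathcal{M}_2(\Fqq(\alpha_j+\alpha_j^{-1}))\oplus\mathcal{M}_2(\Fqq(\alpha_j^q+\alpha_j^{-q}))$ rather than in the a priori larger algebra $\mathcal{M}_2(\Fqq(\alpha_j))\oplus\mathcal{M}_2(\Fqq(\alpha_j^q))$, which is precisely why the conjugations by $Z_j$ and $\overline{Z_j}$ are introduced. A short computation shows that $\sigma_j\bigl(\mathrm{diag}(\alpha_j,\alpha_j^{-1})\bigr)$ and $\sigma_j\bigl(\left[\begin{smallmatrix} 0 & 1 \\ 1 & 0 \end{smallmatrix}\right]\bigr)$ both have all entries in $\Fqq(\alpha_j+\alpha_j^{-1})$, with the analogous statement holding for $\overline{\sigma_j}$ after replacing $\alpha_j$ by $\alpha_j^q$. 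Once this is checked, $\rho$ is a homomorphism of $\Fqq$-algebras between two algebras of the same finite dimension whose image generates each simple summand, hence an isomorphism.
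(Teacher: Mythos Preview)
Your proposal is correct and follows essentially the same route as the paper: both arguments reduce the theorem to Brochero-Mart\'{\i}nez's decomposition \cite[Theorem 3.1]{Bro15} applied over $\Fqq$, regroup the summands according to the conjugate/reciprocal orbit structure of Notation~\ref{notacion_descomposicion_Fq2Dn}, and identify the one non-trivial verification as the $J_1$ case, where one must confirm that the conjugations $\sigma_j,\overline{\sigma_j}$ force the images of $a$ and $b$ to have entries in the subfield $\Fqq(\alpha_j+\alpha_j^{-1})$ rather than $\Fqq(\alpha_j)$. The paper carries out that matrix computation explicitly (obtaining $\sigma_j(\mathrm{diag}(\alpha_j,\alpha_j^{-1}))=\left[\begin{smallmatrix}0&1\\-1&\alpha_j+\alpha_j^{-1}\end{smallmatrix}\right]$ and $\sigma_j(\left[\begin{smallmatrix}0&1\\1&0\end{smallmatrix}\right])=\left[\begin{smallmatrix}1&-(\alpha_j+\alpha_j^{-1})\\0&-1\end{smallmatrix}\right]$), whereas you defer it to ``a short computation''; your surrounding scaffolding (Klein four-group orbits, explicit relation-checking, dimension count) is more detailed than the paper's, which simply absorbs those steps into the citation of \cite{Bro15}.
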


\begin{proof}
This is a direct application of \cite[Theorem 3.1]{Bro15}. For the sake of comprehensiveness, we prove that $\rho_j$ is well-defined for $j\in J_1$, \emph{i.e.} that its image belongs to $\mathcal{M}_2 \left( \Fqr{2}(\alpha_j + \alpha_j^{-1}) \right) \oplus \mathcal{M}_2 \left( \Fqr{2}(\alpha_j^q + \alpha_j^{-q}) \right)$. Observe that
\begin{eqnarray*}
\sigma_j\left(
		 \begin{bmatrix}
		\alpha_j & 0 \\ 0 & \alpha_j^{-1}
		\end{bmatrix}
		\right)&=& \begin{bmatrix}
1 & -\alpha_j \\ 1 & -\alpha_j^{-1}
\end{bmatrix}^{-1} \begin{bmatrix}
		\alpha_j & 0 \\ 0 & \alpha_j^{-1}
		\end{bmatrix}   \begin{bmatrix}
1 & -\alpha_j \\ 1 & -\alpha_j^{-1}
\end{bmatrix} \\ & = &
		\dfrac{1}{\alpha_j-\alpha_j^{-1}}\begin{bmatrix}
-\alpha_j^{-1} & \alpha_j \\ -1 & 1
\end{bmatrix}
		 \begin{bmatrix}
		\alpha_j & -\alpha_j^2 \\ \alpha_j^{-1} & -\alpha_j^{-2}
		\end{bmatrix} \\ & = & \begin{bmatrix}
		0 & 1 \\ -1 & \alpha_j + \alpha_j^{-1} 
		\end{bmatrix}\in \mathcal{M}_2 \left( \Fqr{2}(\alpha_j + \alpha_j^{-1}) \right),
\end{eqnarray*}
and similarly one can check
\[
\overline{\sigma_j}\left(
		 \begin{bmatrix}
		\alpha_j^q & 0 \\ 0 & \alpha_j^{-q}
		\end{bmatrix}
		\right)=		
		 \begin{bmatrix}
		0 & 1 \\ -1 & \alpha_j^q + \alpha_j^{-q}
		\end{bmatrix}
	\in \mathcal{M}_2 \left( \Fqr{2}(\alpha_j^q + \alpha_j^{-q}) \right).
\]
Further, we have 
\begin{eqnarray*}
\sigma_j\left(
		 \begin{bmatrix}
		0&1 \\ 1&0
		\end{bmatrix}
		\right)&=& \begin{bmatrix}
1 & -\alpha_j \\ 1 & -\alpha_j^{-1}
\end{bmatrix}^{-1} \begin{bmatrix}
		0&1 \\ 1&0
		\end{bmatrix}   \begin{bmatrix}
1 & -\alpha_j \\ 1 & -\alpha_j^{-1}
\end{bmatrix} \\ & = &
		\dfrac{1}{\alpha_j-\alpha_j^{-1}}\begin{bmatrix}
-\alpha_j^{-1} & \alpha_j \\ -1 & 1
\end{bmatrix}
		 \begin{bmatrix}
		1& -\alpha_j^{-1} \\ 1 & -\alpha_j
		\end{bmatrix} \\ & = & \begin{bmatrix}
		1 & -(\alpha_j+\alpha_j^{-1}) \\ 0 & -1
		\end{bmatrix}\in \mathcal{M}_2 \left( \Fqr{2}(\alpha_j + \alpha_j^{-1}) \right),
\end{eqnarray*}
and similarly 
$$\overline{\sigma_j}\left(
		 \begin{bmatrix}
		0 & 1 \\ 1 & 0
		\end{bmatrix}
		\right)=
		 \begin{bmatrix}
		1 & -(\alpha_j^q+\alpha_j^{-q}) \\ 0 & -1
		\end{bmatrix}
		\in \mathcal{M}_2 \left( \Fqr{2}(\alpha_j^q + \alpha_j^{-q}) \right).$$
\end{proof}

\begin{remark}
\label{remark_fields}
Assume Hypotheses~\ref{hipotesis_descomposicion_Fq2Dn}. By Remark \ref{arrels_recipr}, it holds $\Fqr{2}(\alpha_j+\alpha_j^{-1})\cong \Fqr{r_j}\cong\Fqr{2}(\alpha_j^q+\alpha_j^{-q})$ for $j\in J_1$, and $\Fqr{2}(\alpha_j)\cong \Fqr{2r_j}\cong \Fqr{2}(\alpha_j^q)$ for $j\in J\smallsetminus J_1$.
\end{remark}


\subsection{On the hermitian dual of a \texorpdfstring{$D_n$-code}{Dn-code}}\label{DualHerDn}

Given a dihedral code $\cC$ over $\Fqr{2}$, whenever $\car(\Fqq)\nmid n$, the algebra isomorphism $\rho$ of Theorem~\ref{theo_dih_refined} allows us to obtain a decomposition of $I_{\cC}\subseteq \Fqr{2}[D_n]$ as a  sum of principal ideals of matrix rings over finite fields, that is $\rho(I_{\cC})=\bigoplus_{j \in J}  I_j\subseteq \bigoplus_{j \in J} A_j$. Moreover, since the corresponding decomposition of the euclidean dual $(I_{\cC})^{\perp_e}$ is already given in Theorem~\ref{dih_euclidean_dual}, we are going to take into account (\ref{eq:orthogonal_ideal_hermitian}) to achieve the appropiate decomposition of $(I_{\cC})^{\perp_h}$, that is, $\rho((I_{\cC})^{\perp_h})$. In other words, our goal is to find $\varphi$ such that the following diagram is commutative.

\begin{equation}
\label{diagrama}
\begin{tikzcd}[row sep=large, column sep=large]
\Fqr{2}[D_n] \arrow[d, "\rho"'] \arrow[r, "\operatorname{conj}_q"] & \Fqr{2}[D_n] \arrow[d, "\rho"]     \\
\displaystyle\bigoplus_{j \in J} A_j \arrow[r, "\varphi"]  & \displaystyle\bigoplus_{j \in J} A_j
\end{tikzcd}
\end{equation}
In this way, we can find $\rho((I_{\cC})^{\perp_h})$ through $\varphi$:
\[
\rho((I_{\cC})^{\perp_h})=(\rho \circ \operatorname{conj}_q)((I_{\cC})^{\perp_e})=\varphi(\rho((I_{\cC})^{\perp_e})).
\]

In order to define such $\varphi$, we proceed as explained in $(\ref{reduccioDual})$, so that we reduce the problem of finding the dual of $ I_{\cC}$ to that of finding the dual of its summands $I_{j}$ in the respective rings $A_j$. To do so, we need first to introduce the next two maps. Observe that for $j\in J_{1}\cup\dots \cup J_{4}$, the possible matrices appearing in $A_{j}$ take values in one of the following four fields: $\Fqr{2}(\alpha_j)$, $\Fqr{2}(\alpha_j^{q})$, $\Fqr{2}(\alpha_j+\alpha_j^{-1})$ and $\Fqr{2}(\alpha_j^q+\alpha_j^{-q})$. Hence, for any positive integer $m$, we define the following maps $\mathcal{H}^m$ and $\mathcal{T}^m$ for square matrices over any one of these four fields.

\begin{equation}\label{HiT}
\mathcal{H}^m : \begin{bmatrix}
x_1 & x_2 \\ x_3 & x_4
\end{bmatrix} \mapsto \begin{bmatrix}
x_1^{q^m} & x_2^{q^m} \\ x_3^{q^m} & x_4^{q^m}
\end{bmatrix}
\hspace{1.5cm}
\mathcal{T}^m : \begin{bmatrix}
x_1 & x_2 \\ x_3 & x_4
\end{bmatrix} \mapsto \begin{bmatrix}
x_4^{q^m} & x_3^{q^m} \\ x_2^{q^m} & x_1^{q^m}
\end{bmatrix}
\end{equation}

Now, taking into account Remark \ref{remark_fields}, we have the following. 
\begin{lemma}
\label{automorphisms}
Assume \emph{Hypotheses~\ref{hipotesis_descomposicion_Fq2Dn}}. The maps $\mathcal{H}^m$ and $\mathcal{T}^m$ defined in \emph{(\ref{HiT})} are automorphisms. In fact
\[
(\mathcal{H}^m)^{-1}=\mathcal{H}^{r_j-m} \mbox{ and } (\mathcal{T}^m)^{-1}=\mathcal{T}^{r_j-m}, \mbox{ for } j\in J_1 \mbox{ and } 1\leq m\leq r_j,
\]
and
\[
(\mathcal{H}^m)^{-1}=\mathcal{H}^{2r_j-m} \mbox{ and } (\mathcal{T}^m)^{-1}=\mathcal{T}^{2r_j-m}, \mbox{ for } j\in J_{2}\cup J_3\cup J_{4} \mbox{ and } 1\leq m\leq 2r_j.
\]
\end{lemma}

\begin{proof}
Take $j\in J_1$. Note that the inverse of $\mathcal{H}^m$ is well-defined, and indeed it is $\mathcal{H}^{r_j-m}$:
\begin{eqnarray*}\left(\mathcal{H}^m \circ \mathcal{H}^{r_j-m}\right) \left( \begin{bmatrix} x_1 & x_2 \\ x_3 & x_4 \end{bmatrix} \right) 
&=& \mathcal{H}^{m} \left( \begin{bmatrix} x_1^{q^{r_j-m}} & x_2^{q^{r_j-m}} \\ x_3^{q^{r_j-m}} & x_4^{q^{2s_j-m}} \end{bmatrix} \right) \\
&=& \begin{bmatrix} x_1^{q^{r_j}} & x_2^{q^{r_j}} \\ x_3^{q^{r_j}} & x_4^{q^{r_j}} \end{bmatrix} 
= \begin{bmatrix} x_1 & x_2 \\ x_3 & x_4 \end{bmatrix}. 
\end{eqnarray*}
In all other cases, we obtain the inverse maps of the statement in a similar manner.  Finally, it is straightforward to check that $\mathcal{H}^m$ and $\mathcal{T}^m$ are homomorphisms. 
\end{proof}

\medskip

\begin{theorem}
\label{theo_commutative}
Assume \emph{Hypotheses~\ref{hipotesis_descomposicion_Fq2Dn}}, and consider the algebra isomorphism $\rho$ of \emph{Theorem~\ref{theo_dih_refined}}. Then the  diagram \emph{(\ref{diagrama})} is commutative if 
$$
\varphi = \bigoplus_{j \in J} \varphi_j , \mbox{ where } \varphi_{j}:A_{j} \rightarrow A_{j} \mbox{ is given by }
$$

$$
\begin{cases}
\varphi_j(x , y) := (x^q , \: y^q) & \text{if } j \in J_0 \text{ and $\car(\Fqq)\neq 2$} \\

\varphi_j(X) := \mathcal{H}^1(X) & \text{if } j \in J_0 \text{ and $\car(\Fqq)=2$} \\

\varphi_j(X , Y) := \left( \left(\sigma_j \circ \mathcal{H}^{2r_j-1} \circ \overline{\sigma_j}^{-1} \right)(Y) , \: \left( \overline{\sigma_j} \circ \mathcal{H}^{1} \circ \sigma_j^{-1} \right)(X) \right) & \text{if } j \in J_1 \\

\varphi_j(X) := \mathcal{H}^{r_j}(X) & \text{if } j \in J_2 \\

\varphi_j(X) := \mathcal{T}^{r_j}(X) & \text{if } j \in J_3 \\

\varphi_j(X , Y) := \left(\mathcal{H}^{2r_j-1}(Y) , \: \mathcal{H}^{1}(X)\right) & \text{if } j \in J_4 \\
\end{cases},$$ 

\smallskip 

\noindent where $\sigma_j$ and $\overline{\sigma_j}$ are the maps in \emph{Lemma~\ref{lemma:sigmas} (b)}, and $\mathcal{H}^m$ and $\mathcal{T}^m$ are the maps in \emph{(\ref{HiT})}.
\end{theorem}

\begin{proof}
Consider the presentation $D_n=\langle a, b \, | \, a^{n} =b^2= 1, bab = a^{-1} \rangle$. Since $$D_n=\{1,a,a^2,...,a^{n-1}, b,ab, a^2b, ... , a^{n-1}b\},$$ any element $u \in \Fqr{2}[D_n]$ can be written as $P(a) + Q(a)b$ for certain polynomials $P, Q\in \Fqr{2}[\mx]$ of degree at most $n-1$. Then
\[
\operatorname{conj}_{q}(u)=\operatorname{conj}_{q}(P(a) + Q(a)b)=\overline{P}(a) + \overline{Q}(a)b,
\]
where $\operatorname{conj}_q$ is the map defined in (\ref{defConj}). We aim to show that the map $\varphi$ defined in the statement verifies that $\varphi \circ \rho=\rho \circ \operatorname{conj}_{q}$, that is,  $\rho(\operatorname{conj}_{q}(u)) = \varphi(\rho(u))$, for all $u \in \Fqr{2}[D_n]$. Let us argue case by case.

\medskip

\noindent \underline{$j \in J_0$ and $\car(\Fqq)\neq 2$.} If $f_j = \mx-1$, then by Lemma \ref{propiedades_reci_conj}(a):
\begin{align*}
\rho_j(\operatorname{conj}_{q}(u)) &= \rho_j\left(\overline{P}(a) + \overline{Q}(a)b \right) \\ 
						     &=\overline{P}(\rho_j(a))+ \overline{Q}(\rho_j(a))\rho_j(b) \\
						     &= \left(\overline{P}(1)+\overline{Q}(1), \overline{P}(1)-\overline{Q}(1)\right) \\ 
   					             &= \left(P(1)^q+Q(1)^q , P(1)^q-Q(1)^q\right)
\end{align*}
\begin{align*}
\varphi_j(\rho_j(u)) &= \varphi_j \big(P(1)+Q(1) , P(1)-Q(1) \big) \\ 
                               &= \left(P(1)^q+Q(1)^q , P(1)^q-Q(1)^q\right)
\end{align*}

If $f_j = \mx+1$, then similarly:
\begin{align*}
\rho_j(\operatorname{conj}_{q}(u)) &= \rho_j\left(\overline{P}(a) + \overline{Q}(a)b\right) \\ 
                                                        &=\overline{P}(\rho_j(a))+ \overline{Q}(\rho_j(a))\rho_j(b) \\
                                                      &= \left(\overline{P}(-1)+\overline{Q}(-1), \overline{P}(-1)-\overline{Q}(-1)\right) \\ 
                                                  &= \left(P(-1)^q+Q(-1)^q , P(-1)^q-Q(-1)^q\right)
\end{align*}
\begin{align*}
\varphi_j(\rho_j(u)) &= \varphi_j \big(P(-1)+Q(-1) , P(-1)-Q(-1) \big) \\ 
                              &= \left(P(-1)^q+Q(-1)^q ,  P(-1)^q-Q(-1)^q\right)
\end{align*}
\noindent \underline{$j \in J_0$ and $\car(\Fqq)=2$.} In this case we obtain:
\begin{center}
\begin{minipage}{0.45\textwidth}
\begin{align*}
\rho_j(\operatorname{conj}_{q}(u)) &= \rho_j\left(\overline{P}(a) + \overline{Q}(a)b \right) \\ &= \begin{bmatrix} \overline{P}(1)&\overline{Q}(1)\\\overline{Q}(1)&\overline{P}(1)  \end{bmatrix} \\ &=  \begin{bmatrix} P(1)^q& Q(1)^q\\ Q(1)^q& P(1)^q  \end{bmatrix} 
\end{align*}
\end{minipage}
\begin{minipage}{0.45\textwidth}
\begin{align*}
\varphi_j(\rho_j(u)) &= \varphi_j \left(\begin{bmatrix} P(1)& Q(1)\\ Q(1)& P(1)  \end{bmatrix}  \right) \\ &= \begin{bmatrix} P(1)^q& Q(1)^q\\ Q(1)^q& P(1)^q  \end{bmatrix} 
\end{align*}
\end{minipage}
\end{center}

\noindent \underline{$j \in J_1$.} Note that $\alpha_j^{-1}$ is also a root of $f_j$, so applying Lemma~\ref{lemma_tech_rj} (a) and Lemma~\ref{propiedades_reci_conj} (a) we get:
\begin{align*}
\rho_j(\operatorname{conj}_{q}(u)) &= \rho_j\left(\overline{P}(a) + \overline{Q}(a)b\right) \\
&= \left(\sigma_j \left( \begin{bmatrix} \overline{P}(\alpha_j) & \overline{Q}(\alpha_j) \\ \overline{Q}(\alpha_j^{-1}) & \overline{P}(\alpha_j^{-1}) \end{bmatrix} \right) , \overline{\sigma_j} \left( \begin{bmatrix} \overline{P}(\alpha_j^q) & \overline{Q}(\alpha_j^q) \\ \overline{Q}(\alpha_j^{-q}) & \overline{P}(\alpha_j^{-q}) \end{bmatrix} \right)\right) \\
&= \left(\sigma_j \left( \begin{bmatrix} P(\alpha_j^{q})^{q^{2r_j-1}} & Q(\alpha_j^{q})^{q^{2r_j-1}} \\ Q(\alpha_j^{-q})^{q^{2r_j-1}} & P(\alpha_j^{-q})^{q^{2r_j-1}} \end{bmatrix} \right) , \overline{\sigma_j} \left( \begin{bmatrix} P(\alpha_j)^q & Q(\alpha_j)^q \\ Q(\alpha_j^{-1})^q & P(\alpha_j^{-1})^q \end{bmatrix} \right) \right)
\end{align*}

\begin{align*}
\varphi_j(\rho_j(u)) &= \varphi_j \left( \sigma_j \left( \begin{bmatrix} P(\alpha_j) & Q(\alpha_j) \\ Q(\alpha_j^{-1}) & P(\alpha_j^{-1}) \end{bmatrix} \right) , \overline{\sigma_j} \left( \begin{bmatrix} P(\alpha_j^q) & Q(\alpha_j^q) \\ Q(\alpha_j^{-q}) & P(\alpha_j^{-q}) \end{bmatrix} \right) \right)\\
&= \left(\left( \sigma_j \circ \mathcal{H}^{2r_j-1} \right) \left( \begin{bmatrix} P(\alpha_j^q) & Q(\alpha_j^q) \\ Q(\alpha_j^{-q}) & P(\alpha_j^{-q}) \end{bmatrix} \right) , \left( \overline{\sigma_j} \circ \mathcal{H}^{1} \right) \left( \begin{bmatrix} P(\alpha_j) & Q(\alpha_j) \\ Q(\alpha_j^{-1}) & P(\alpha_j^{-1}) \end{bmatrix} \right) \right) \\
&= \left(\sigma_j \left( \begin{bmatrix} P(\alpha_j^q)^{q^{2r_j-1}} & Q(\alpha_j^q)^{q^{2r_j-1}} \\ Q(\alpha_j^{-q})^{q^{2r_j-1}} & P(\alpha_j^{-q})^{q^{2r_j-1}} \end{bmatrix} \right) , \overline{\sigma_j} \left( \begin{bmatrix} P(\alpha_j)^q & Q(\alpha_j)^q \\ Q(\alpha_j^{-1})^q & P(\alpha_j^{-1})^q \end{bmatrix} \right)\right)
\end{align*}

\noindent \underline{$j \in J_2$.} In this case we apply Lemma~\ref{lemma_tech_rj} (b) and we get:

\begin{minipage}{0.45\textwidth}
\begin{align*}
\rho_j(\operatorname{conj}_{q}(u)) &= \rho_j\left(\overline{P}(a) + \overline{Q}(a)b\right) \\
&= \begin{bmatrix} \overline{P}(\alpha_j) & \overline{Q}(\alpha_j) \\ \overline{Q}(\alpha_j^{-1}) & \overline{P}(\alpha_j^{-1}) \end{bmatrix} \\
&= \begin{bmatrix} P(\alpha_j)^{q^{r_j}} & Q(\alpha_j)^{q^{r_j}} \\ Q(\alpha_j^{-1})^{q^{r_j}} & P(\alpha_j^{-1})^{q^{r_j}} \end{bmatrix}
\end{align*}
\end{minipage}
\begin{minipage}{0.45\textwidth}
\begin{align*}
\varphi_j(\rho_j(u)) &= \varphi_j \left( \begin{bmatrix} P(\alpha_j) & Q(\alpha_j) \\ Q(\alpha_j^{-1}) & P(\alpha_j^{-1}) \end{bmatrix} \right)\\
&= \mathcal{H}^{r_j} \left( \begin{bmatrix} P(\alpha_j) & Q(\alpha_j) \\ Q(\alpha_j^{-1}) & P(\alpha_j^{-1}) \end{bmatrix} \right) \\
&= \begin{bmatrix} P(\alpha_j)^{q^{r_j}} & Q(\alpha_j)^{q^{r_j}} \\ Q(\alpha_j^{-1})^{q^{r_j}} & P(\alpha_j^{-1})^{q^{r_j}} \end{bmatrix}
\end{align*}
\end{minipage}

\medskip

\noindent \underline{$j \in J_3$.} Using Lemma~\ref{lemma_tech_rj} (c) it follows:

\begin{minipage}{0.45\textwidth}
\begin{align*}
\rho_j(\operatorname{conj}_{q}(u)) &= \rho_j\left(\overline{P}(a) + \overline{Q}(a)b\right) \\
&= \begin{bmatrix} \overline{P}(\alpha_j) & \overline{Q}(\alpha_j) \\ \overline{Q}(\alpha_j^{-1}) & \overline{P}(\alpha_j^{-1}) \end{bmatrix} \\
&= \begin{bmatrix} P(\alpha_j^{-1})^{q^{r_j}} & Q(\alpha_j^{-1})^{q^{r_j}} \\ Q(\alpha_j)^{q^{r_j}} & P(\alpha_j)^{q^{r_j}} \end{bmatrix}
\end{align*}
\end{minipage}
\begin{minipage}{0.45\textwidth}
\begin{align*}
\varphi_j(\rho_j(u)) &= \varphi_j \left( \begin{bmatrix} P(\alpha_j) & Q(\alpha_j) \\ Q(\alpha_j^{-1}) & P(\alpha_j^{-1}) \end{bmatrix} \right)\\
&= \mathcal{T}^{r_j} \left( \begin{bmatrix} P(\alpha_j) & Q(\alpha_j) \\ Q(\alpha_j^{-1}) & P(\alpha_j^{-1}) \end{bmatrix} \right) \\
&= \begin{bmatrix} P(\alpha_j^{-1})^{q^{r_j}} & Q(\alpha_j^{-1})^{q^{r_j}} \\ Q(\alpha_j)^{q^{r_j}} & P(\alpha_j)^{q^{r_j}} \end{bmatrix}
\end{align*}
\end{minipage}

\medskip

\noindent \underline{$j \in J_4$.} Here we can use Lemma~\ref{lemma_tech_rj} (a) and Lemma~\ref{propiedades_reci_conj} (a) again to get:

\begin{align*}
\rho_j(\operatorname{conj}_{q}(u)) &= \rho_j\left(\overline{P}(a) + \overline{Q}(a)b\right) \\
&=\left( \begin{bmatrix} \overline{P}(\alpha_j) & \overline{Q}(\alpha_j) \\ \overline{Q}(\alpha_j^{-1}) & \overline{P}(\alpha_j^{-1}) \end{bmatrix} , \begin{bmatrix} \overline{P}(\alpha_j^q) & \overline{Q}(\alpha_j^q) \\ \overline{Q}(\alpha_j^{-q}) & \overline{P}(\alpha_j^{-q}) \end{bmatrix} \right) \\
&= \left( \begin{bmatrix} P(\alpha_j^{q})^{q^{2r_j-1}} & Q(\alpha_j^{q})^{q^{2r_j-1}} \\ Q(\alpha_j^{-q})^{q^{2r_j-1}} & P(\alpha_j^{-q})^{q^{2r_j-1}} \end{bmatrix} , \begin{bmatrix} P(\alpha_j)^q & Q(\alpha_j)^q \\ Q(\alpha_j^{-1})^q & P(\alpha_j^{-1})^q \end{bmatrix}\right)
\end{align*}

\begin{align*}
\varphi_j(\rho_j(u)) &= \varphi_j \left( \begin{bmatrix} P(\alpha_j) & Q(\alpha_j) \\ Q(\alpha_j^{-1}) & P(\alpha_j^{-1}) \end{bmatrix} , \begin{bmatrix} P(\alpha_j^q) & Q(\alpha_j^q) \\ Q(\alpha_j^{-q}) & P(\alpha_j^{-q}) \end{bmatrix} \right)\\
&= \left( \mathcal{H}^{2r_j-1} \left( \begin{bmatrix} P(\alpha_j^q) & Q(\alpha_j^q) \\ Q(\alpha_j^{-q}) & P(\alpha_j^{-q}) \end{bmatrix} \right) , \mathcal{H}^{1} \left( \begin{bmatrix} P(\alpha_j) & Q(\alpha_j) \\ Q(\alpha_j^{-1}) & P(\alpha_j^{-1}) \end{bmatrix} \right)\right) \\
&= \left(\begin{bmatrix} P(\alpha_j^q)^{q^{2r_j-1}} & Q(\alpha_j^q)^{q^{2r_j-1}} \\ Q(\alpha_j^{-q})^{q^{2r_j-1}} & P(\alpha_j^{-q})^{q^{2r_j-1}} \end{bmatrix} , \begin{bmatrix} P(\alpha_j)^q & Q(\alpha_j)^q \\ Q(\alpha_j^{-1})^q & P(\alpha_j^{-1})^q \end{bmatrix}\right).
\end{align*}
\vspace*{-2mm}
\end{proof}

\medskip

As mentioned at the beginning of the subsection, in order to obtain the hermitian dual of a $D_n$-code $\cal C$, it is now enough to apply the map $\varphi$ in the above result to $\rho((I_{\cC})^{\perp_{e}})$. We do this in the following theorem, which is the main result of this section.

\begin{theorem}
\label{dih_hermitic_orthogonal}
Assume \emph{Hypotheses~\ref{hipotesis_descomposicion_Fq2Dn}}. Let $\mathcal{C}$ be a $D_n$-code over $\Fqq$, and let $I_{\cC}$ be its corres\-ponding ideal in $\Fqq[D_n]$ expressed, using the isomorphism $\rho$ given in \emph{Theorem~\ref{theo_dih_refined}}, as 
$$
I_{\cC}\overset{\rho}{\cong} \displaystyle\bigoplus_{j\in J} I_j\subseteq \bigoplus_{j\in J}A_{j}.
$$ 
Set $I_j:=\langle X\rangle \oplus \langle Y \rangle$ for $j\in J_1\cup J_4$, and $I_j:=\langle X\rangle$ for $j\in J_2\cup J_3$, where $X,Y$ are square matrices in row reduced echelon form over the appropriate fields stated in \emph{Theorem~\ref{theo_dih_refined}}. Then $I_{\cC^{\perp_h}}\subseteq \Fqq[D_n]$ verifies that 
$$
I_{\cC^{\perp_h}}\overset{\rho}{\cong} \displaystyle\bigoplus_{j\in J} {I}_j^{\perp_h} \subseteq \bigoplus_{j\in J}A_{j},
$$
where:
\begin{enumerate}
\item[\emph{i)}] for $j\in J_0$ and $\car(\Fqq)\neq 2$:
$$
{I}_j^{\perp_h}=\begin{cases}
\Fqr{2}\oplus \Fqr{2} & \text{if } I_j=\pmb{0}\oplus\pmb{0} \\
\pmb{0}\oplus \Fqr{2} & \text{if } I_j=\Fqr{2}\oplus\pmb{0} \\
\Fqr{2}\oplus \pmb{0} & \text{if } I_j=\pmb{0}\oplus\Fqr{2} \\
\pmb{0}\oplus\pmb{0} & \text{if } I_j=\Fqr{2}\oplus \Fqr{2} \\
\end{cases}$$

\item[\emph{ii)}] for $j\in J_0$ and $\car(\Fqq)=2$:
$$
{I}_j^{\perp_h}=\begin{cases}
\Fqr{2}[C_2] & \text{if } I_j=\pmb{0} \\*[2mm]
\left\langle \begin{bmatrix}1&1\\0&0\end{bmatrix}\right\rangle & \text{if } I_j=\left\langle \begin{bmatrix}1&1\\0&0\end{bmatrix}\right\rangle \\*[5mm]
\pmb{0} & \text{if } I_j=\Fqr{2}[C_2] \\
\end{cases}$$

\item[\emph{iii)}] for $j\in J_1$ it holds ${I}_j ^{\perp_h}=  \langle \widetilde{Y} \rangle \oplus \langle \widetilde{X} \rangle$ with one of the following options for $\widetilde{X}$ and $\widetilde{Y}$: 
$$
\widetilde{X}:=\begin{cases}  
I-X & \text{if } X=\pmb{0} \text{ or } X=I\\*[3mm]  \begin{bmatrix} 2& -(\alpha_j+\alpha_j^{-1})^q\\0&0 \end{bmatrix} & \text{if } X=\begin{bmatrix} 0& 1\\0&0 \end{bmatrix}\\*[5mm] 
 \begin{bmatrix} (2\lambda_j+\alpha_j+\alpha_j^{-1})^q& (-2-(\alpha_j+\alpha_j^{-1})\lambda_j)^q\\0&0 \end{bmatrix} & \text{if } X=\begin{bmatrix} 1& \lambda_j\\0&0 \end{bmatrix}, \; \lambda_j\in  \Fqr{2}(\alpha_j+\alpha_j^{-1})
 \end{cases}$$
$$
\widetilde{Y}:=\begin{cases}
I-Y & \text{if } Y=\pmb{0} \text{ or } Y=I\\*[3mm]  \begin{bmatrix} 2& -(\alpha_j+\alpha_j^{-1})\\0&0 \end{bmatrix} & \text{if } Y=\begin{bmatrix} 0& 1\\0&0 \end{bmatrix}\\*[5mm] 
 \begin{bmatrix} 2\lambda_j^{q^{r_j-1}}+\alpha_j+\alpha_j^{-1}& -2-(\alpha_j+\alpha_j^{-1})\lambda_j^{q^{r_j-1}}\\0&0 \end{bmatrix} & \text{if } Y=\begin{bmatrix} 1& \lambda_j\\0&0 \end{bmatrix}, \; \lambda_j\in  \Fqr{2}(\alpha_j^q+\alpha_j^{-q})
\end{cases}
$$

\item[\emph{iv)}] for $j\in J_2$ it holds  ${I}_j ^{\perp_h}=  \langle \widetilde{X} \rangle$ with one of the following options: 
$$
\widetilde{X}:=\begin{cases}
I-X & \text{if } X=\pmb{0} \text{ or } X=I\\*[3mm]
\begin{bmatrix} 0& 1\\0&0 \end{bmatrix} & \text{if } X=\begin{bmatrix} 0& 1\\0&0 \end{bmatrix} \\*[5mm] 
\begin{bmatrix} 1 & -\lambda_j^{q^{r_j}}\\ 0&0 \end{bmatrix} & \text{if } X=\begin{bmatrix} 1& \lambda_j\\0&0 \end{bmatrix}, \; \lambda_j\in \Fqr{2}(\alpha_j)
\end{cases}
$$ 

\item[\emph{v)}] for $j\in J_3$ it holds  ${I}_j ^{\perp_h}=  \langle \widetilde{X} \rangle$ with one of the following options: 
$$
\widetilde{X}:=\begin{cases}
I-X & \text{if } X=\pmb{0} \text{ or } X=I\\*[3mm]
\begin{bmatrix} 1&0\\0&0 \end{bmatrix} & \text{if } X=\begin{bmatrix} 0& 1\\0&0 \end{bmatrix} \\*[5mm] 
 \begin{bmatrix} -\lambda_j^{q^{r_j}}&1\\ 0&0 \end{bmatrix} & \text{if } X=\begin{bmatrix} 1& \lambda_j\\0&0 \end{bmatrix}, \; \lambda_j\in  \Fqr{2}(\alpha_j)
\end{cases}
$$ 

\item[\emph{vi)}] for $j\in J_4$ it holds   ${I}_j^{\perp_h} =  \langle \widetilde{Y} \rangle \oplus \langle \widetilde{X} \rangle$  with one of the following options for $\widetilde{X}$ and $\widetilde{Y}$: 
$$\widetilde{X}:=\begin{cases}  
I-X & \text{if } X=\pmb{0} \text{ or } X=I\\*[3mm] 
 \begin{bmatrix} 0& 1\\0&0 \end{bmatrix} & \text{if } X=\begin{bmatrix} 0& 1\\0&0 \end{bmatrix}\\*[5mm] 
 \begin{bmatrix} 1 & -\lambda_j^q\\0&0 \end{bmatrix} & \text{if } X=\begin{bmatrix} 1& \lambda_j\\0&0 \end{bmatrix}, \; \lambda_j\in  \Fqr{2}(\alpha_j)\end{cases}$$
$$\widetilde{Y}:=\begin{cases}
I-Y & \text{if } Y=\pmb{0} \text{ or } Y=I\\*[3mm]  \begin{bmatrix} 0& 1\\0&0 \end{bmatrix} & \text{if } Y=\begin{bmatrix} 0& 1\\0&0 \end{bmatrix}\\*[5mm] 
 \begin{bmatrix} 1 & -\lambda_j^{q^{2r_j-1}}\\0&0 \end{bmatrix} & \text{if } Y=\begin{bmatrix} 1& \lambda_j\\0&0 \end{bmatrix}, \; \lambda_j\in  \Fqr{2}(\alpha_j^q)
\end{cases}
$$ 
\end{enumerate}
\end{theorem}

\begin{proof}
Let us consider the diagram (\ref{diagrama}), which is commutative by Theorem~\ref{theo_commutative}. Since $I_{\cC^{\perp_h}}=\operatorname{conj}_q(I_{\mathcal{C}^{\perp_e}})$ (see (\ref{eq:orthogonal_ideal_hermitian})), then it is enough to apply the map $\varphi$ in Theorem~\ref{theo_commutative} to  $\rho(I_{\mathcal{C}^{\perp_e}})$ in order to get the desired conclusions. In virtue of $(\ref{reduccioDual})$ the problem is therefore reduced to apply $\varphi_j$ to $I_j^{\perp_e}$ for any $j\in J$. Thus, we proceed case by case.

\medskip

i) For $j\in J_{0}$ and $\car(\Fqq)\neq 2$, take an ideal $z_{j_1}\Fqr{2}\oplus z_{j_2}\Fqr{2}$, where $z_{j_1},z_{j_2}\in \{0,1\}$. Its euclidean dual is $(1-z_{j_1})\mathbb{F}_{q^2}\oplus (1-z_{j_2})\mathbb{F}_{q^2}$ (see Theorem~\ref{dih_euclidean_dual}~(i)). Now, since $\varphi_j(x, y)=(x^q, y^q)$, for any $(x,y)\in \Fqr{2}\oplus \Fqr{2}$, we easily get that the hermitian dual is $(1-z_{j_1})\mathbb{F}_{q^2}\oplus (1-z_{j_2})\mathbb{F}_{q^2}$.

\medskip

ii) For $j\in J_{0}$ and $\car(\Fqq)= 2$, it is enough to apply $\varphi_j=\mathcal{H}^1$ to the euclidean dualities that appear in  Theorem~\ref{dih_euclidean_dual}~(ii), for each ideal of $\Fqr{2}[C_2]$.

\medskip

iii)  For $j\in J_{1}$, notice that $f_{j}$ and $\overline{f_j}$ are both self-reciprocal and different polynomials. Thus, we will first apply Theorem~\ref{dih_euclidean_dual}~(iii) for both $\langle X\rangle$ and $\langle Y\rangle $ in order to obtain their euclidean duals.
If $X\in \{\pmb{0},I\}$, it holds $\langle X\rangle^{\perp_e}=\langle I-X\rangle$. In view of $\varphi_j$ in Theorem~\ref{theo_commutative}, it is easy to see now that $\overline{\sigma_j}\circ \mathcal{H}^1\circ \sigma_j^{-1}$, which is an automorphism of $A_j$ by Lemma~\ref{automorphisms}, fixes $I-X$, so $\widetilde{X}=I-X$. An analogous argument also shows that $\widetilde{Y}=I-Y$ whenever $Y\in \{\pmb{0},I\}$.

Let us suppose now that $X =\left[\begin{smallmatrix} 0&1\\0&0\end{smallmatrix}\right]$, so in virtue of  Theorem~\ref{dih_euclidean_dual}~(iii) we get $$\langle X\rangle^{\perp_e}=\left\langle\begin{bmatrix} 2 & -\alpha_j-\alpha_j^{-1}\\0&0\end{bmatrix}\right\rangle.$$ 
Set $M_j$ for this last generator matrix. Applying $\varphi_j$, and the definition of the automorphism $\mathcal{H}^1$, we deduce
\begin{eqnarray*} 
(\overline{\sigma_j}\circ \mathcal{H}^1\circ \sigma_j^{-1} )(M_j) &=&  \overline{\sigma_j}\circ \mathcal{H}^1(Z_jM_jZ_j^{-1}) \\ 
&=& \overline{\sigma_j} (\mathcal{H}^1(Z_j)\mathcal{H}^1(M_j)\mathcal{H}^1(Z_j^{-1})) \\
& = & \overline{Z}_j^{-1}\mathcal{H}^1(Z_j)\mathcal{H}^1(M_j)\mathcal{H}^1(Z_j)^{-1}\overline{Z}_j \\
& = & \overline{Z}_j^{-1}\overline{Z}_j\mathcal{H}^1(M_j)\overline{Z}_j^{-1}\overline{Z}_j \\
& = & \mathcal{H}^1(M_j) \\
& = & \begin{bmatrix} 2& -(\alpha_j+\alpha_j^{-1})^q\\0&0 \end{bmatrix}.
\end{eqnarray*}
A similar reasonament also proves the remaining case, where $X = \left[ \begin{smallmatrix}1 & \lambda_j\\ 0&0\end{smallmatrix}\right]$ with $\lambda_j\in\Fqr{2}(\alpha_j+\alpha_j^{-1})$.

We finally show the case where $ Y = \left[\begin{smallmatrix} 1 & \lambda_j\\0&0\end{smallmatrix}\right]$ with $\lambda_j\in\Fqr{2}(\alpha_j^q+\alpha_j^{-q})$, since the other one can be analogously argued as before. Since $Y \in \mathcal{M}_2\left(\Fqr{2}(\alpha_j^q+\alpha_j^{-q})\right)$, which corresponds to the self-reciprocal polynomial $\overline{f_j}$, and $\alpha_j^q$ is a root of $\overline{f_j}$, then by Theorem~\ref{dih_euclidean_dual}~(iii) we get $$\langle Y\rangle^{\perp_e}=\left\langle\begin{bmatrix}  \alpha_j^q+\alpha_j^{-q}+2\lambda_j &-2-(\alpha_j^q+\alpha_j^{-q})\lambda_j \\ 0&0 \end{bmatrix}\right\rangle.$$ Set $M_j$ for this last generator matrix. Now applying $\varphi_j$ in Theorem~\ref{theo_commutative} we obtain 
\begin{eqnarray*} 
(\sigma_j\circ \mathcal{H}^{2r_j-1}\circ \overline{\sigma_j}^{\,-1}) (M_j) &=&  \sigma_j\circ \mathcal{H}^{2r_j-1}(\overline{Z_j}M_j\overline{Z_j}^{\,-1}) \\ 
&=& \sigma_j (\mathcal{H}^{2r_j-1}(\overline{Z_j})\mathcal{H}^{2r_j-1}(M_j)\mathcal{H}^{2r_j-1}(\overline{Z_j}^{\,-1})) \\
& = & Z_j^{-1}\mathcal{H}^{2r_j-1}(\overline{Z_j})\mathcal{H}^{2r_j-1}(M_j)\mathcal{H}^{2r_j-1}(\overline{Z_j}^{\,-1})Z_j \\
& = & Z_j^{-1}Z_j\mathcal{H}^{2r_j-1}(M_j)Z_j^{\,-1}Z_j \\
& = & \mathcal{H}^{2r_j-1}(M_j) \\
& = &  \begin{bmatrix} (\alpha_j^q+\alpha_j^{-q}+2\lambda_j)^{q^{2r_j-1}} & (-2-(\alpha_j^q+\alpha_j^{-q})\lambda_j)^{q^{2r_j-1}} \\ 0&0 \end{bmatrix} \\
& = &  \begin{bmatrix} \alpha_j^{q^{2r_j}}+(\alpha_j^{-1})^{q^{2r_j}}+2\lambda_j^{q^{2r_j-1}} & -2-(\alpha_j^{q^{2r_j}}+(\alpha_j^{-1})^{q^{2r_j}})\lambda_j^{q^{2r_j-1}} \\ 0&0 \end{bmatrix} \\
& =&  \begin{bmatrix} \alpha_j+\alpha_j^{-1}+2\lambda_j^{q^{r_j-1}} & -2-(\alpha_j+\alpha_j^{-1})\lambda_j^{q^{r_j-1}} \\ 0&0 \end{bmatrix},
\end{eqnarray*}
where in the last equality we have utilised $\alpha_j\in\Fqq(\alpha_j)\cong\Fqr{2r_j}$ and $\lambda_j\in\Fqr{2}(\alpha_j^q+\alpha_j^{-q})\cong\Fqr{r_j}$.

\medskip

iv) For $j\in J_{2}$ we have that $\overline{f_j}=f_j\neq  f_j^{*}$ and Theorem~\ref{dih_euclidean_dual}~(iv) applies in this case.  For $X\in \{\pmb{0}, I\}$ we have that $ \langle X\rangle^{\perp_e}= \langle I-X\rangle$ and consequently we take $\widetilde{X}=I-X$.

If $X=\left[\begin{smallmatrix} 0&1\\ 0&0\end{smallmatrix}\right]$, then $\langle X\rangle = \langle X\rangle^{\perp_e}$ by Theorem~\ref{dih_euclidean_dual}~(iv), so applying $\varphi_j=\mathcal{H}^{r_j}$ in Theorem~\ref{theo_commutative} yields $\langle X\rangle^{\perp_h} = \langle X\rangle$.

Finally, if $X= \left[\begin{smallmatrix} 1&\lambda_j\\ 0&0\end{smallmatrix}\right]$ with $\lambda_j\in \Fqq(\alpha_j)$, then $\langle X\rangle^{\perp_e}=\left\langle \left[\begin{smallmatrix} 1&-\lambda_j\\ 0&0\end{smallmatrix}\right]\right\rangle$, and the image by $\varphi_j=\mathcal{H}^{r_j}$ yields the desired conclusion.

\medskip

v) For $j\in J_{3}$ we have that $f_j\neq  f_j^{*}=\overline{f_j}$, so the euclidean orthogonals are, as in case $j\in J_2$, computed with Theorem~\ref{dih_euclidean_dual}~(iv).
 However, in this case $\varphi_j=\mathcal{T}^{r_j}$ (see Theorem~\ref{theo_commutative}), and so: $$\varphi_j\left(\left\langle \begin{bmatrix}0&1\\0&0\end{bmatrix}\right\rangle\right) = \left\langle \begin{bmatrix}0&0\\ 1&0\end{bmatrix}\right\rangle = \left\langle \begin{bmatrix}1&0\\0&0\end{bmatrix}\right\rangle$$ and $$\varphi_j\left(\left\langle \begin{bmatrix}1&-\lambda_j\\0&0\end{bmatrix}\right\rangle\right) = \left\langle \begin{bmatrix}0&0\\ -\lambda_j^{q^{r_j}}&1\end{bmatrix}\right\rangle = \left\langle \begin{bmatrix}-\lambda_j^{q^{r_j}}&1\\0&0\end{bmatrix}\right\rangle,$$ so the claim is clear.

\medskip

vi) For $j\in J_{4}$,  since $f_j, \overline{f_j}$ and $f_j^{*}$ are pairwise different, we can argue as in iii) for $j\in J_1$, but taking into account that the euclidean orthogonals correspond now to polynomials that are not self-reciprocal, so Theorem~\ref{dih_euclidean_dual}~(iv) applies.
\end{proof}


\subsection{Computing hermitian self-orthogonal \texorpdfstring{$D_n$}{Dn}-codes}

The algebraic description of the hermitian dual of a $D_n$-code obtained in Theorem \ref{dih_hermitic_orthogonal}  is useful in several code constructions. For example, it can be used to derive, for every $D_n$-code over $\Fqq$, the explicit representation of its hermitian hull (\emph{i.e.} the intersection with its hermitian dual) and so the number of all distinct hermitian Linear Complementary Dual $D_n$-codes over $\Fqq$ (\emph{cf.} \cite[Theorems 4.1 and 4.2]{CaoCaoFu23}). 

Another application of Theorem \ref{dih_hermitic_orthogonal}, which is fundamental in this paper, is that it allows to detect $D_n$-codes that are hermitian self-orthogonal. Recall that hermitian self-orthogonal codes have a significant role in the construction of quantum CSS codes (see subsection~\ref{pre:CSS}). With this objective in mind, we proceed to characterise, through the Wedderburn-Artin's decomposition of $\Fqq[D_n]$ with $\car(\Fqq)\nmid n$ (see Theorem \ref{theo_dih_refined}), the structure of hermitian self-orthogonal $D_n$-codes.

\begin{theorem}
\label{dih-self-hermitic}
Assume \emph{Hypotheses~\ref{hipotesis_descomposicion_Fq2Dn}}. Following the notation in \emph{Theorem~\ref{dih_hermitic_orthogonal}}, consider the expression $I_{\cC}\  \overset{\rho}{\cong}\bigoplus_{j\in J} I_j\subseteq \bigoplus_{j\in J} A_j$ of an ideal $I_{\cC}\subseteq\Fqq[D_n]$ associated with a $D_n$-code $\mathcal{C}$ over $\Fqq$. Then $\mathcal{C}\subseteq \mathcal{C}^{\perp_h}$ if and only if the following conditions are satisfied:
\begin{enumerate}[label=\emph{\roman*)}]
\setlength{\itemsep}{1mm}
\item If $j\in J_0$ and $\car(\Fqq)\neq 2$, then $I_j= \pmb{0}\oplus \pmb{0}$.
\item If $j\in J_0$ and $\car(\Fqq)= 2$, then $I_j= \pmb{0}$ or $I_j=\langle \left[ \begin{smallmatrix} 1&1\\0&0\end{smallmatrix}\right]\rangle$.
\item If $j\in J_1$, then either one of the summands of $I_j$ is $\pmb{0}$, or one of the following options holds for $I_j$:
	\begin{itemize}
		\item[$\blacktriangleright$] $\left\langle\begin{bmatrix} 0&1\\0&0 \end{bmatrix}\right\rangle \oplus \left\langle \begin{bmatrix} 2&-(\alpha_j^q+\alpha_j^{-q})\\0&0\end{bmatrix}\right\rangle$.
		\item[$\blacktriangleright$] $\left\langle\begin{bmatrix} 1&\lambda_j\\0&0 \end{bmatrix}\right\rangle \oplus \left\langle \begin{bmatrix} (2\lambda_j+\alpha_j+\alpha_j^{-1})^q & (-2-\lambda_j(\alpha_j+\alpha_j^{-1}))^q)\\0&0\end{bmatrix}\right\rangle$, $\lambda_j\in \Fqr{2}(\alpha_j+\alpha_j^{-1})$.
	\end{itemize}
\item If $j\in J_2$, then $I_j$ is either $\pmb{0}$, $\left\langle\left[ \begin{smallmatrix} 0&1\\0&0 \end{smallmatrix}\right]\right\rangle$, or $\left\langle\left[ \begin{smallmatrix} 1&\lambda_j\\0&0 \end{smallmatrix}\right]\right\rangle$ with $\lambda_j=-\lambda_j^{q^{r_j}}$, $\lambda_j\in\Fqr{2}(\alpha_j)$.
\item If $j\in J_3$, then $I_j$ is either $\pmb{0}$, or $\left\langle\left[ \begin{smallmatrix} 1&\lambda_j\\0&0 \end{smallmatrix}\right]\right\rangle$ with $0\neq \lambda_j=-\lambda_j^{-q^{r_j}}$, $\lambda_j\in\Fqr{2}(\alpha_j)$.
\item If $j\in J_4$, then either one of the summands of $I_j$ is $\pmb{0}$, or one of the following options holds for $I_j$:
	\begin{itemize}
		\item[$\blacktriangleright$] $\left\langle\begin{bmatrix} 0&1\\0&0 \end{bmatrix}\right\rangle \oplus \left\langle \begin{bmatrix} 0&1\\0&0\end{bmatrix}\right\rangle$.
		\item[$\blacktriangleright$] $\left\langle\begin{bmatrix} 1&\lambda_j\\0&0 \end{bmatrix}\right\rangle \oplus \left\langle \begin{bmatrix} 1&-\lambda_j^q\\0&0\end{bmatrix}\right\rangle$, $\lambda_j\in \Fqr{2}(\alpha_j)$.
	\end{itemize}
\end{enumerate}
\end{theorem}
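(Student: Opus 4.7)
My starting point is the elementary observation that a $G$-code $\cC$ satisfies $\cC\subseteq\cC^{\perp_H}$ if and only if the corresponding ideal verifies $I_{\cC}\subseteq I_{\cC^{\perp_H}}$. Since the Wedderburn--Artin isomorphism $\rho$ of Theorem~\ref{theo_dih_refined} decomposes $\Fqq[D_n]$ into a direct sum of two-sided ideals, and $I_{\cC^{\perp_H}}$ is expressed by Theorem~\ref{dih_hermitic_orthogonal} as $\bigoplus_{j\in J}\widetilde{\mathcal{I}}_j$, the condition reduces to checking, for every $j\in J$, whether $\mathcal{I}_j\subseteq\widetilde{\mathcal{I}}_j$ (with the extra subtlety that, for $j\in J_1$ and $j\in J_4$, this splits into two simultaneous containments on the two summands).

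Next I would record a small linear-algebra fact that will be used throughout: in $\mathcal{M}_2(K)$, the principal ideal generated by a matrix $M$ coincides with the set of matrices whose rows lie in the row space of $M$; consequently, if $M,M'$ are the RREF representatives of two ideals, then $\langle M\rangle\subseteq\langle M'\rangle$ holds if and only if $\operatorname{rowspace}(M)\subseteq\operatorname{rowspace}(M')$. Since the only RREF matrices in $\mathcal{M}_2(K)$ are $\pmb{0}$, $\left[\begin{smallmatrix}0&1\\0&0\end{smallmatrix}\right]$, $\left[\begin{smallmatrix}1&\mu\\0&0\end{smallmatrix}\right]$ and $I$, this reduces to a short finite inspection.

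Then I would do the case analysis. Cases (i) and (ii) amount to direct inspection of the three or four possible ideals and the table of $\widetilde{\mathcal{I}}_j$ given in Theorem~\ref{dih_hermitic_orthogonal}. For (iv), matching the RREF of $X$ with that of $X_1=\left[\begin{smallmatrix}1&-\lambda_j^{q^{r_j}}\\0&0\end{smallmatrix}\right]$ forces $\lambda_j=-\lambda_j^{q^{r_j}}$; and for (v), the extra swap performed by $\mathcal{T}^{r_j}$ turns the condition into $\lambda_j=-\lambda_j^{-q^{r_j}}$, where the case $\lambda_j=0$ (i.e.\ $X=\left[\begin{smallmatrix}1&0\\0&0\end{smallmatrix}\right]$) is automatically excluded because $\widetilde{\mathcal{I}}_j$ is then generated by $\left[\begin{smallmatrix}0&1\\0&0\end{smallmatrix}\right]$. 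For cases (iii) and (vi), whenever one of $X,Y$ is $\pmb{0}$ the corresponding element in $\widetilde{\mathcal{I}}_j$ is $I$ and containment is automatic, so the genuine content lies in the two-nonzero subcase. In case (vi) the conditions $\langle X\rangle\subseteq\langle Y_1\rangle$ and $\langle Y\rangle\subseteq\langle X_1\rangle$ become $\mu=-\nu^q$ and $\nu=-\mu^{q^{2r_j-1}}$ (with $X,Y$ of type $\left[\begin{smallmatrix}1&\nu\\0&0\end{smallmatrix}\right],\left[\begin{smallmatrix}1&\mu\\0&0\end{smallmatrix}\right]$), and these two equations are equivalent thanks to $\nu^{q^{2r_j}}=\nu$ in $\Fqr{2r_j}$; parameterising by $\lambda_j:=\nu$ yields the stated form.

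The main obstacle is case (iii). Here one must work inside $\Fqq(\alpha_j+\alpha_j^{-1})\cong\Fqr{r_j}$ and verify by hand that the generator $Y_1$ produced from $Y=\left[\begin{smallmatrix}1&\mu\\0&0\end{smallmatrix}\right]$ (say the RREF of the matrix appearing in the statement) indeed has the same row space as $X=\left[\begin{smallmatrix}1&\lambda_j\\0&0\end{smallmatrix}\right]$. The key identity is $(\alpha_j+\alpha_j^{-1})^{q^{r_j}}=\alpha_j+\alpha_j^{-1}$ together with $\lambda_j^{q^{r_j}}=\lambda_j$, which transforms $\mu^{q^{r_j-1}}$ into a rational expression in $\lambda_j$ and $\alpha_j+\alpha_j^{-1}$; after clearing denominators the numerators of both entries of $Y_1$ factor as $(\alpha_j-\alpha_j^{-1})^{2}$ times $1$ and $\lambda_j$ respectively, and since $j\in J_1$ forces $\alpha_j\neq\alpha_j^{\pm1}$ the factor $(\alpha_j-\alpha_j^{-1})^2$ is nonzero and the proportionality $Y_1\sim X$ follows. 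The remaining subcases (both $X$ and $Y$ of form $\left[\begin{smallmatrix}0&1\\0&0\end{smallmatrix}\right]$, or mixed types) are ruled out by the same row-space inspection, which exhausts all possibilities and establishes the theorem.
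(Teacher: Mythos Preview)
Your overall strategy coincides with the paper's: reduce $\cC\subseteq\cC^{\perp_H}$ to the componentwise containments $\mathcal{I}_j\subseteq\widetilde{\mathcal{I}}_j$ and then read off the admissible $\mathcal{I}_j$ from the tables of Theorem~\ref{dih_hermitic_orthogonal}. Cases (i), (ii), (iv), (v) are handled identically, and your treatment of (vi) via the identity $\nu^{q^{2r_j}}=\nu$ is exactly the paper's observation that the two maps involved compose to the identity, written in coordinates.

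The only substantive difference is in case (iii). The paper does not perform your explicit factorisation with $(\alpha_j-\alpha_j^{-1})^2$; instead it argues abstractly that, since $(\sigma_j\circ\mathcal{H}^{2r_j-1}\circ\overline{\sigma_j}^{-1})\circ(\overline{\sigma_j}\circ\mathcal{H}^{1}\circ\sigma_j^{-1})=\operatorname{Id}$, the two rank-one constraints $\langle X\rangle=\langle Y_1\rangle$ and $\langle Y\rangle=\langle X_1\rangle$ are equivalent, so one simply sets $\mathcal{I}_j=\langle X\rangle\oplus\langle X_1\rangle$ and reads the two bullets directly from Theorem~\ref{dih_hermitic_orthogonal}. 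This covers \emph{both} bullets (and all excluded pairings) at once. Your computational route is valid---the factorisation you describe does occur, and I checked that both entries of $Y_1$ come out to $(\alpha_j-\alpha_j^{-1})^2/(2\lambda_j+\alpha_j+\alpha_j^{-1})$ times $1$ and $\lambda_j$ respectively---but as written it only verifies the second bullet (where $X=\left[\begin{smallmatrix}1&\lambda_j\\0&0\end{smallmatrix}\right]$). The first bullet, with $X=\left[\begin{smallmatrix}0&1\\0&0\end{smallmatrix}\right]$ and $Y$ the RREF of $\left[\begin{smallmatrix}2&-(\alpha_j^q+\alpha_j^{-q})\\0&0\end{smallmatrix}\right]$, is a genuine ``mixed type'' that you appear to be ruling out in your last sentence; it needs the same check (or the paper's composition argument) rather than being discarded.
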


\begin{proof}
In virtue of the isomorphism $\rho$ provided in Theorem \ref{theo_dih_refined} and  the formula $(\ref{reduccioDual})$, one has that $\mathcal{C}\subseteq \mathcal{C}^{\perp_h}$ if and only if ${\cal I}_j\subseteq {\cal I}_j^{\perp_h}$, for all $j\in J$. Therefore, we will now analyse  when this feature occurs, for each $j\in J$. 

For $j\in J_0$, the assertion is trivial in view of Theorem~\ref{dih_hermitic_orthogonal}. For $j\in J_1$, if $I_j=\langle X\rangle \oplus \langle Y\rangle$ then $I_j^{\perp_h}=\langle \widetilde{Y}\rangle \oplus \langle \widetilde{X} \rangle$ (see Theorem \ref{dih_hermitic_orthogonal})  and therefore $I_j\subseteq {I}_j^{\perp_h}$ if and only if 
$$  
\langle X \rangle \subseteq \langle \widetilde{Y} \rangle = (\sigma_j\circ \mathcal{H}^{2r_j-1}\circ \overline{\sigma_j}^{\, -1}) (\langle Y\rangle^{\perp_e})
$$ 
and 
$$  
\langle Y \rangle \subseteq \langle \widetilde{X} \rangle = (\overline{\sigma_j}\circ \mathcal{H}^{1}\circ \sigma_j^{-1}) (\langle X\rangle^{\perp_e}).
$$ 
Thus, if one of the summands of $I_j$ is $\pmb{0}$, as its euclidean dual would be the whole matrix ring, then both inclusions are certainly satisfied and we are done. So we may suppose that none of the summands of $I_j$  is $\pmb{0}$. Observe that neither $X$ nor $Y$ are the identity matrix, since otherwise either $\langle \widetilde{X}\rangle$ or $\langle \widetilde{Y}\rangle$ would be $\pmb{0}$, which would force that one of the summands of $I_j$ is $\pmb{0}$, a contradiction. It follows that $X$ and $Y$ are matrices of rank 1, and therefore it must be verified: 
$$ 
 \langle X \rangle = \langle \widetilde{Y}\rangle =  (\sigma_j\circ \mathcal{H}^{2r_j-1}\circ \overline{\sigma_j}^{\, -1} )(\langle Y\rangle^{\perp_e})
 $$
 and 
 $$
 \langle Y \rangle = \langle \widetilde{X}\rangle = (\overline{\sigma_j}\circ \mathcal{H}^{1}\circ \sigma_j^{-1}) (\langle X\rangle^{\perp_e}).
 $$ 
 It is easy to see that these equalities are equivalent, since $$\left(\sigma_j\circ \mathcal{H}^{2r_j-1}\circ \overline{\sigma_j}^{\, -1} \right) \circ \left(\overline{\sigma_j}\circ \mathcal{H}^{1}\circ \sigma_j^{-1}\right) = \operatorname{Id}.$$ Consequently $I_j= \langle X \rangle \oplus \langle \widetilde{X}\rangle$, and the possible combinations follow from Theorem~\ref{dih_hermitic_orthogonal}.

For $j\in J_2$,  in view of  Theorem~\ref{dih_hermitic_orthogonal}, we have that  ${\cal I}_j\subseteq {\cal I}_j^{\perp_h}$ if and only if one of the situations claimed in the statement  occurs. Take now $I_j$ for $j\in J_3$. Then $I_j\subseteq {I}_j^{\perp_h}$ if and only if $I_j=\pmb{0}$, or 
 $$
 \left\langle \begin{bmatrix}1&\lambda_j\\0&0\end{bmatrix}\right\rangle = I_j = {I}_j^{\perp_h}= \left \langle\begin{bmatrix}-\lambda_j^{q^{r_j}}&1\\0&0\end{bmatrix}\right\rangle = \left\langle \begin{bmatrix}1&-\lambda_j^{-q^{r_j}}\\0&0\end{bmatrix}\right\rangle,
 $$ 
 where the second equality only holds for $\lambda_j\neq 0$, as wanted. Finally, the case $j\in J_4$ can be reasoned similarly as the situation $j\in J_1$.
\end{proof}

\medskip

As an immediate consequence, we can count the number of hermitian self-orthogonal $D_n$-codes over $\Fqq$. In order to simplify the computations, we will use the next result.

\begin{proposition}
\label{prop_solutions}
Let $x\in \Fqr{2r}$ with $r\in \mathbb{N}$, and let $\xi\in \Fqr{2r}$ be a primitive element.
\begin{enumerate}
\setlength{\itemsep}{0mm}
\item[\emph{(a)}] If $q$ is odd, then $x=-x^{q^r}$ if and only if $x=0$ or $x=\xi^{\left(k+\frac{1}{2}\right)\left(q^r+1\right)}$ for $k\in \{0,1,...,q^r-2\}$.
\item[\emph{(b)}] If $q$ is odd, then $x=-x^{-q^r}$ if and only if $x=\xi^{\left(k+\frac{1}{2}\right)\left(q^r-1\right)}$ for $k\in \{0,1,...,q^r\}$.
\item[\emph{(c)}] If $q$ is a power of $2$, then $x=x^{-q^r}$ if and only if $x=\xi^{k\left(q^r-1\right)}$ for $k\in \{0,1,...,q^r\}$.
\end{enumerate}
\end{proposition}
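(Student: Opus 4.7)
The plan is to translate each equation into a congruence on the discrete logarithm of $x$ relative to the primitive element $\xi$ of the cyclic group $\Fqr{2r}^{\times}$ of order $q^{2r}-1$, after separating $x=0$ when that is admissible. Two identifications drive the argument. First, when $q$ is odd, $-1$ has order two and therefore equals $\xi^{(q^{2r}-1)/2}$. Second, the factorisation $q^{2r}-1=(q^r-1)(q^r+1)$ lets the resulting linear congruence be simplified by cancelling whichever of the two factors appears as coefficient.

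For (a), $x=0$ trivially satisfies $x=-x^{q^r}$; for $x=\xi^t\neq 0$ the equation becomes $x^{q^r-1}=-1$, i.e.\ $t(q^r-1)\equiv \tfrac{1}{2}(q^r-1)(q^r+1)\pmod{(q^r-1)(q^r+1)}$. Cancelling $q^r-1$ yields $t\equiv\tfrac{1}{2}(q^r+1)\pmod{q^r+1}$, whose $q^r-1$ solutions are exactly $t=(k+\tfrac{1}{2})(q^r+1)$ for $k\in\{0,\ldots,q^r-2\}$. For (b), with $x=\xi^t$, multiplying $x=-x^{-q^r}$ by $x^{q^r}$ gives $x^{q^r+1}=-1$, whence $t(q^r+1)\equiv \tfrac{1}{2}(q^r-1)(q^r+1)\pmod{(q^r-1)(q^r+1)}$; cancelling $q^r+1$ (which is even as $q$ is odd) yields $t\equiv\tfrac{1}{2}(q^r-1)\pmod{q^r-1}$, whose $q^r+1$ solutions are $t=(k+\tfrac{1}{2})(q^r-1)$ for $k\in\{0,\ldots,q^r\}$. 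For (c), with $q$ even the equation $x=x^{-q^r}$ becomes $x^{q^r+1}=1$, so the solutions form the unique subgroup of order $q^r+1$ in $\Fqr{2r}^{\times}$, which is generated by $\xi^{(q^{2r}-1)/(q^r+1)}=\xi^{q^r-1}$, giving the claimed parameterisation.

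There is no real obstacle here; the only points requiring care are verifying that the halves $(q^r\pm 1)/2$ are genuine integers when $q$ is odd (they are, since $q^r$ is odd so both $q^r-1$ and $q^r+1$ are even), and checking that the prescribed ranges of $k$ produce exactly the expected cardinality of the solution set, namely $q^r$ solutions in (a), $q^r+1$ in (b), and $q^r+1$ in (c); this matches the fact that (a) describes the $\Fqr{r}$-linear kernel of $x\mapsto x+x^{q^r}$ of dimension $r$, while (b) and (c) describe cosets/subgroups of $\mathrm{ker}(x\mapsto x^{q^r+1}\mp 1)$, which has order $q^r+1$.
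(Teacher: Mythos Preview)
Your proof is correct and follows essentially the same discrete-logarithm approach as the paper: write $x=\xi^{t}$, use $-1=\xi^{(q^{2r}-1)/2}$ when $q$ is odd, and solve the resulting linear congruence modulo $q^{2r}-1=(q^{r}-1)(q^{r}+1)$ by cancelling the appropriate factor. The paper only spells out case~(a) and declares (b) and (c) analogous, while you treat all three explicitly and add a cardinality sanity check; this is a mild expository improvement but not a different method.
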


\begin{proof}
We only demonstrate (a), since (b) and (c) are analogous. We may assume that $x\neq 0$, and $x=\xi^t$ for certain integer $t\in\{0,1\dots, q^{2r}-2\}$. Hence we should analyse when it holds $\xi^{t\left(q^r-1\right)}=-1$. But $\xi$ has order $q^{2r}-1$, so we can write $-1=\xi^{\frac{1}{2}\left(q^{2r}-1\right)}$. If we compare both equalities, we obtain that $t\left(q^r-1\right)=\frac{1}{2}\left(q^{2r}-1\right)$ modulo $q^{2r}-1$, that is, 
$$
t\left( q^r-1\right) = \tfrac{1}{2}\left(q^{2r}-1\right) + k \left(q^{2r}-1\right) \; \Longleftrightarrow \; t=\left(k+\tfrac{1}{2}\right)\left( q^r+1\right),
$$ for some integer $k$. Finally, it is not difficult to see that then $t\in\{0,1\dots, q^{2r}-2\}$ if and only if $k\in \{0,1,...,q^r-2\}$.
\end{proof}

\medskip

\begin{corollary}
\label{cor_dih_sel}
Assume \emph{Hypotheses~\ref{hipotesis_descomposicion_Fq2Dn}}. The number of hermitian self-orthogonal $D_n$-codes over $\Fqq$ is 
$$ \epsilon\displaystyle\prod_{j\in J_1} \left(3q^{r_j}+6\right)  \displaystyle\prod_{j\in J_2\cup J_3} \left(q^{r_j}+2\right)    \displaystyle\prod_{j\in J_4} \left(3q^{2r_j}+6\right) ,
$$
with $\epsilon=1$ if $q$ is odd and $\epsilon=2$ otherwise.
\end{corollary}

\begin{proof}
It is enough to count how many ideals we can take in Theorem~\ref{dih-self-hermitic}, for each $j\in J$:
\begin{itemize}
	\item For $j\in J_0$ there is only one possibility if $q$ is odd, and two possibilities if $q$ is a power of $2$.
	\item For $j\in J_1$ we have that $A_j$ is isomorphic to $\mathcal{M}_2\left(\Fqr{r_j}\right) \oplus \mathcal{M}_2\left(\Fqr{r_j}\right)$. By Theorem~\ref{dih-self-hermitic} iii), we must first compute  the number of ideals $I_j$ in $A_j$ such that one summand is $\pmb{0}$. Since  there are $q^{r_j}+3$ possible ideals in $\mathcal{M}_2\left(\Fqr{r_j}\right)$ and  the ideal $\pmb{0}\oplus\pmb{0}$ cannot be computed twice, we obtain $2\left(q^{r_j}+3\right)-1$ possibilities for $I_j$ of this type. Apart from these cases, in view of Theorem~\ref{dih-self-hermitic} iii), we see that there are $q^{r_j}+1$ additional possibilities for $I_j$. To sum up, for $j\in J_1$, there are $3q^{r_j}+6$ possibilities.
	\item For $j\in J_2$, taking into account Theorem~\ref{dih-self-hermitic} iv) we see that, apart from $\pmb{0}$ and $\left\langle\left[ \begin{smallmatrix} 0&1\\0&0 \end{smallmatrix}\right]\right\rangle$, we must distinguish  the parity of $q$ to compute the other possibilities for $I_j$. If $q$ is odd, then Proposition~\ref{prop_solutions} (a)  yields $q^{r_j}-1$ additional ideals  $I_j$; in contrast, when $q$ is a power of $2$, the number of solutions of $\lambda_j=\lambda_j^{q^{r_j}}$ in $\Fqr{2r_j}$ is just the cardinality of $\Fqr{r_j}$, and so there are also $q^{r_j}-1$ additional ideals in this situation. To sum up, there are $q^{r_j}+2$ possibilities for $I_j$ with $j\in J_2$.
	\item For $j\in J_3$ we can reason similarly using Proposition~\ref{prop_solutions} (b-c), and we obtain $q^{r_j}+2$ possibilities in both cases when $q$ is odd or a power or $2$.
	\item For $j\in J_4$, we can reason as for $j\in J_1$, just noting that in this case the matrix rings take values in $\Fqr{2r_j}$.
\end{itemize}

\end{proof}

\smallskip

\begin{remark}
\label{remark_mistake}
In \cite{CaoCaoFu23} it is also computed the number of hermitian self-orthogonal $D_n$-codes over $\Fqq$. Unfortunately, in view of the previous corollary, the computation of that number in \cite[Theorem 4.3]{CaoCaoFu23} does not seem to be correct.  In fact, we illustrate the inaccuracy of their formula in Example \ref{example-caocao}.
\end{remark}


\section{On euclidean dualities of generalised quaternion codes}
\label{sec:quaternion}

Inspired by the above ideas, our goal in this section is to present the concrete representation of the euclidean dual code of any $Q_n$-code, where $Q_{n}$ is the generalised quaternion group of order $4n$ ($n\geq 1$), in terms of the decomposition of the corresponding ideal in the group algebra $\Fq[Q_n]$ in the semisimple case, that is, when $\car(\Fq)$ does not divide $4n$. Up to our knowledge, this has not been addressed previously anywhere. Regarding the hermitian dualities, as we will explain below (see Theorem \ref{thm:alg_isom} and Remark \ref{rem:q2igual}), the algebras $\Fqq[Q_n]$ and $\Fqq[D_{2n}]$ are isomorphic when $\car(\Fq)\nmid 4n$. Therefore, the computation of the hermitian dual of a $Q_n$-code can be done using the information provided in section 3. Our main result here will be Theorem~\ref{quat_euclid_orthogonal}.

Throughout this section $\car(\Fq)\nmid 4n$, so in particular $q$ is always an odd number.

\subsection{The Wedderburn-Artin's decomposition of  \texorpdfstring{$\F_{q}[Q_n]$}{Fq[Qn]}}

The purpose of this subsection is to present the Wedderburn-Artin's decomposition of the semisimple group algebra $\Fq[Q_n]$ provided in \cite[Theorems 3.1 and 3.6]{GaoYue21}. However, before doing that,  it is worth noting the existence of the following criterion to decide when the semisimple group algebras $\Fq[Q_{n}]$ and $\Fq[D_{2n}]$ are isomorphic or not:

\begin{theorem}\emph{(\cite[Theorem 5.1]{Flaviana2009})}
\label{thm:alg_isom}
If $\car(\Fq)$ does not divide $4n$, then the algebras $\Fq[Q_{n}]$ and $\Fq[D_{2n}]$ are isomorphic if and only if $n$ is even or $q\equiv 1 \text{\emph{ (mod $4$)}}$.
\end{theorem}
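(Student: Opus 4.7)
The plan is to compare the Wedderburn-Artin decompositions of $\Fq[Q_n]$ and $\Fq[D_{2n}]$ by splitting each algebra into its \emph{abelian block}, arising from the one-dimensional irreducible representations, and its \emph{non-abelian block}, arising from the two-dimensional ones. I will show that the non-abelian blocks are always isomorphic as $\Fq$-algebras, so that the whole isomorphism question reduces to comparing the group algebras of the abelianisations, which is a direct finite-field computation.

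For the abelian part, one checks that $[D_{2n}, D_{2n}] = [Q_n, Q_n] = \langle a^2\rangle$ in both groups, so $D_{2n}^{\mathrm{ab}} \cong C_2 \times C_2$ always, whereas $Q_n^{\mathrm{ab}}$ is $C_2 \times C_2$ when $n$ is even and $C_4$ when $n$ is odd: indeed, in the odd case the relation $b^2 = a^n$ descends to $\bar b^2 = \bar a$ in the abelianisation (as $\bar a$ has order $2$), forcing $\bar b$ to have order $4$. Since $\car \Fq \neq 2$, we have $\Fq[C_2 \times C_2] \cong \Fq^{\oplus 4}$, while $\Fq[C_4] \cong \Fq[x]/(x^4-1)$ decomposes as $\Fq^{\oplus 4}$ if $x^2+1$ splits over $\Fq$ (equivalently, $q \equiv 1 \pmod 4$) and as $\Fq \oplus \Fq \oplus \Fqq$ otherwise. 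Hence the abelian blocks coincide precisely when $n$ is even or $q \equiv 1 \pmod 4$.

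For the non-abelian part, I would use the explicit models of the two-dimensional irreducibles over $\overline{\Fq}$ obtained by inducing non-self-inverse characters $\chi$ of the common cyclic subgroup $\langle a\rangle \cong C_{2n}$: in both groups $a$ acts as $\mathrm{diag}(\chi(a), \chi(a)^{-1})$, whereas $b$ acts as $\left[\begin{smallmatrix}0 & 1 \\ 1 & 0\end{smallmatrix}\right]$ in $D_{2n}$ and as $\left[\begin{smallmatrix}0 & 1 \\ \chi(a^n) & 0\end{smallmatrix}\right]$, with $\chi(a^n) = \pm 1$, in $Q_n$. The crucial observation is that since $\langle a\rangle$ is normal of index $2$, the character of any representation induced from it vanishes on the non-trivial coset, so $\chi(a^j b) = 0$ for every $j$; consequently the character of the induced representation is \emph{identical} as a function on the whole group in both cases, and only its restriction $\chi \oplus \chi^{-1}$ to $\langle a\rangle$ matters. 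It follows that the $\mathrm{Gal}(\overline{\Fq}/\Fq)$-orbits of two-dimensional characters match bijectively for the two groups, and by Wedderburn's little theorem (which rules out non-commutative finite division algebras) each orbit contributes a matrix ring $M_2(F_\chi)$ over the same field $F_\chi = \Fq(\chi(a)+\chi(a)^{-1})$ on both sides. Combining this with the abelian analysis yields both implications of the theorem.

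The main obstacle I anticipate is the rigorous verification that the equality of characters really implies an isomorphism of the corresponding simple factors as $\Fq$-algebras, and not just an equality of their $\Fq$-dimensions. This passes through two points: first, that the central primitive idempotent attached to a Galois orbit is a function of the character alone (clear from the orthogonality formula); and second, that the endomorphism ring of the resulting simple $\Fq[G]$-module is a field---the residue field $F_\chi$ above---which is again guaranteed by Wedderburn's little theorem. Once this matching is established, the parity-based dichotomy of the abelianisations closes the argument cleanly.
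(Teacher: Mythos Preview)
The paper does not supply its own proof of Theorem~\ref{thm:alg_isom}; the result is simply quoted from \cite[Theorem~5.1]{Flaviana2009} and used as a black box, so there is no in-paper argument to compare your attempt against.

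That said, your proposal is correct and follows what is essentially the standard route to this kind of result. Splitting each group algebra into the commutative block coming from the abelianisation and the non-abelian block coming from the induced two-dimensional representations is exactly the right move, and your computation of the abelianisations ($C_2\times C_2$ for $D_{2n}$, versus $C_2\times C_2$ or $C_4$ for $Q_n$ according to the parity of $n$) is accurate and pinpoints precisely where the condition ``$n$ even or $q\equiv 1\pmod 4$'' enters. For the non-abelian block, the observation that the induced characters vanish on the non-trivial coset of $\langle a\rangle$, so that the character tables of $D_{2n}$ and $Q_n$ agree on all two-dimensional rows, is the key point; since over a finite field the Schur index is always $1$ by Wedderburn's little theorem, the simple component attached to a Galois orbit of degree-two characters is $\mathcal{M}_2$ over the field of character values in both cases, and these fields coincide because the character values do. One small remark: when you say the characters are ``identical as a function on the whole group'' you are implicitly using that the bijection between two-dimensional irreducibles of $D_{2n}$ and of $Q_n$ (both parametrised by pairs $\{\chi,\chi^{-1}\}$ of non-self-inverse characters of $C_{2n}$) is equivariant for the $\mathrm{Gal}(\overline{\Fq}/\Fq)$-action; this is immediate because that action is through the character values, which you have already matched, but it is worth making explicit since it is what guarantees that the \emph{number} and \emph{sizes} of Galois orbits agree, not just the fields attached to individual characters.
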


The authors of \cite{GaoYue21} were apparently not aware of that criterion, and they computed in \cite[Theorem 3.1]{GaoYue21} the structure of $\Fq[Q_n]$ when $q\equiv 1 \text{ (mod $4$)}$, which actually was already known by \cite[Theorem 3.1]{Bro15} (see (\ref{eq:isom_decom_diedric})). 

Consequently, since we are particularly assuming that $q$ is odd, then by Theorem~\ref{thm:alg_isom} we may certainly restrict ourselves hereafter to the situation $q\equiv 3 \text{ (mod $4$)}$ and $n$ odd.

\begin{remark}\label{rem:q2igual}
We highlight that the hermitian dual code of a $Q_n$-code over $\Fqq$, when $\car(\Fq) \nmid 4n$, is isomorphic to the hermitian dual code of a $D_{2n}$-code over $\Fqq$ due to Theorem~\ref{thm:alg_isom}, since in both cases when $q$ is congruent with $1$ or $3$ modulo $4$ it holds $q^2\equiv 1 \text{ (mod 4)}$. Therefore, we may apply Theorem~\ref{dih_hermitic_orthogonal} for computing the hermitian dual of a $Q_n$-code over $\Fqq$, and so along this section we focus in the euclidean duality of $Q_n$-codes. 
\end{remark}

Prior to presenting the isomorphism provided in \cite[Theorem 3.6]{GaoYue21} for $q\equiv 3 \text{ (mod $4$)}$, we first stablish the notation for rest of this section, and we prove some preliminary lemmas.

\smallskip

\begin{hypotheses} 
\label{hipotesis_descomposicion_FqQn}
Suppose that $n$ is odd, that $\car(\Fq)\nmid 4n$ and that $q\equiv 3 \text{\emph{ (mod $4$)}}$. Decompose the polynomials $\mx^n-1$ and $\mx^n+1$ into irreducible monic factors in $\Fq[\mx]$ as follows:
\begin{eqnarray*}
\mx^n-1 &=& f_1f_2 \cdots f_r f_{r+1}f_{r+1}^* \cdots f_{r+s}f_{r+s}^*, \\
\mx^n+1&=& g_1g_2 \cdots g_t g_{t+1}g_{t+1}^* \cdots g_{t+k}g_{t+k}^*,
\end{eqnarray*}
where $f_1 := \mx-1$, $g_1 := \mx+1$, $f_i=f_{i}^*$ and $g_j=g_j^*$ for $1\leq i\leq r$ and $1\leq j \leq t$, respectively. Further, denote by $\alpha_i$ any root of $f_i$, and by $\beta_j$ any root of $g_j$. 
\end{hypotheses}

\smallskip

\begin{lemma}
\label{raiz-1}
If $\car(\Fq)\neq 2$, then $\sqrt{-1}\in \Fq$ if and only if $q\equiv 1 \text{ \emph{(mod $4$)}}$. In particular, if $q\equiv 3 \text{\emph{ (mod $4$)}}$, then $\sqrt{-1}\in\Fqr{2}\smallsetminus \Fq$.
\end{lemma}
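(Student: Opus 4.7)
The plan is to use the fact that, for $\car(\Fq)\neq 2$, the multiplicative group $\Fq^{*}$ is cyclic of order $q-1$, so the question ``does $-1$ have a square root in $\Fq$?'' translates into ``does $\Fq^{*}$ contain an element of order $4$?''. An element $x\in\Fq^{*}$ satisfies $x^{2}=-1$ precisely when $x$ has order $4$ (it cannot have order $1$ or $2$ since then $x^{2}=1\neq -1$, and any such $x$ trivially has order dividing $4$). By the standard structure theorem for finite cyclic groups, $\Fq^{*}$ contains an element of order $4$ if and only if $4\mid q-1$, i.e.\ $q\equiv 1\pmod{4}$. That establishes the equivalence.

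For the ``in particular'' clause, assume $q\equiv 3\pmod{4}$. The first part of the statement, applied to $\Fqr{2}$ in place of $\Fq$, reduces matters to checking $q^{2}\equiv 1\pmod{4}$. Since $q$ is odd, $q^{2}\equiv 1\pmod{4}$ always holds, so $\sqrt{-1}\in\Fqr{2}$; combining this with the fact that $\sqrt{-1}\notin\Fq$ (again by the first part, as $q\equiv 3\pmod 4$) yields $\sqrt{-1}\in\Fqr{2}\smallsetminus\Fq$.

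The proof is essentially routine; there is no real obstacle beyond invoking cyclicity of $\Fq^{*}$ and a small modular arithmetic check. The only minor care needed is to observe that $-1\neq 1$ in $\Fq$ (which uses $\car(\Fq)\neq 2$) so that an element squaring to $-1$ genuinely has order $4$ rather than $1$ or $2$.
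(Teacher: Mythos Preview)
Your proof is correct and follows essentially the same approach as the paper: both reduce the question to whether the cyclic group $\Fq^{\times}$ of order $q-1$ contains an element of order $4$, and both handle the ``in particular'' clause by noting $q\equiv 3\pmod 4$ forces $q^{2}\equiv 1\pmod 4$ and applying the first part to $\Fqr{2}$.
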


\begin{proof}
Note that $\sqrt{-1}\in \Fq$ implies that the multiplicative group $\Fq^{\times}$ contains an element of order $4$, so $4$ divides $q-1$. Reciprocally, if $q-1=4k$ for some positive integer $k$, then $(\sqrt{-1})^{q-1}=(\sqrt{-1})^{4k}=1$, so $\sqrt{-1}\in \Fq$. Finally, observe that $q\equiv 3 \text{ (mod $4$)}$ leads to $q^2\equiv 1 \text{ (mod $4$)}$, and therefore $\sqrt{-1}\in \Fqr{2}\smallsetminus \Fq$ by the first assertion.
\end{proof}

Note that, under the assumptions in Hypotheses~\ref{hipotesis_descomposicion_FqQn} the above result tells us in particular that  
\begin{equation}\label{eq:-1}
\sqrt{-1}\in \Fqq\leqslant  \Fq(\beta_j), \; \mbox{ for any } 2\leq j\leq t,
\end{equation}
since in this case $g_{j}$ is a self-reciprocal polynomal with even degree (see Remark \ref{arrels_recipr}).

\begin{lemma}
\label{lemma:gamma}
Assume \emph{Hypotheses~\ref{hipotesis_descomposicion_FqQn}}. The matrices $$Z_i:=\left[\begin{smallmatrix} 1 & -\alpha_j \\ 1&-\alpha_j^{-1}\end{smallmatrix}\right]\in {\cal M}_2\left(\Fq(\alpha_i)\right) \mbox{ and } \; Y_j:=\left[\begin{smallmatrix} \sqrt{-1} & -\beta_j \\ \sqrt{-1} & -\beta_j^{-1} \end{smallmatrix}\right]\in{\cal M}_2\left(\Fq(\beta_j)\right),$$ for $2\leq i \leq r$ and $2\leq j\leq t$, respectively, are invertible. Moreover, the maps given by $$\sigma_i(X) := Z_i^{-1}X Z_i \; \mbox{ and } \;\gamma_j(X) := Y_j^{-1} X Y_j$$ are automorphisms of ${\cal M}_2\left(\Fq(\alpha_i)\right)$ and ${\cal M}_2\left(\Fq(\beta_j)\right)$, respectively.
\end{lemma}

\begin{proof}
In Lemma~\ref{lemma:sigmas} it was already proved that $Z_i$ is invertible. Moreover, it is clear from $(\ref{eq:-1})$ that $Y_j\in \mathcal{M}_2(\Fq(\beta_j))$, and its determinant is $\sqrt{-1}(\beta_j-\beta_j^{-1})$, which cannot be zero: otherwise $\beta_j=\beta_j^{-1}$ so $\beta_j^2-1=0$, which is a contradiction since $g_j$ is not a divisor of $\mx^2-1$. Finally, the maps $\sigma_i$ and $\gamma_j$ are certainly automorphisms because they are conjugation maps. 
\end{proof}

The next elementary lemma is well-known. Nevertheless, we include its proof for the sake of comprehensiveness.

\begin{lemma}
\label{solutions_FF}
In any finite field $\Fq$, the equation $u^2+v^2=k$ has solutions, for every $k\in \Fq$.
\end{lemma}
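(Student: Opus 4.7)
The plan is a standard pigeonhole argument split by characteristic. First I would dispose of the even characteristic case: if $\car(\Fq)=2$, then the Frobenius map $x\mapsto x^2$ is a bijection of $\Fq$, so every element of $\Fq$ is a square, and in particular $u^2=k$ has a solution; taking $v=0$ yields the claim immediately.

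For odd characteristic, the main step is a counting argument. Consider the two subsets of $\Fq$ given by
\[
A := \{u^2 \mid u\in\Fq\}, \qquad B := \{k - v^2 \mid v\in\Fq\}.
\]
Since $\car(\Fq)$ is odd, the squaring map $\Fq^{\times}\to\Fq^{\times}$ is $2$-to-$1$ onto its image (as $u^2=w^2$ iff $u=\pm w$, and $u\neq -u$ for $u\neq 0$), so $|A|=1+(q-1)/2=(q+1)/2$. The map $v\mapsto k-v^2$ is an affine shift of the squaring map, hence $|B|=(q+1)/2$ as well. Therefore $|A|+|B|=q+1>q=|\Fq|$, and by the pigeonhole principle $A\cap B\neq\emptyset$. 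Any common element $u^2=k-v^2$ produces a solution $(u,v)$ of $u^2+v^2=k$.

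I do not anticipate any genuine obstacle here: the argument is elementary and the only mild subtlety is recalling that $|A|=(q+1)/2$ in odd characteristic, which one verifies by noting that $0$ is a square and every nonzero square has exactly two preimages. The proof is thus essentially a two-line pigeonhole after handling characteristic two separately.
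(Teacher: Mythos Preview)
Your proof is correct and follows essentially the same approach as the paper: both handle characteristic $2$ by noting that squaring is bijective, and in odd characteristic both use the pigeonhole argument on the sets of squares $\{u^2\}$ and shifted squares $\{k-v^2\}$, each of size $(q+1)/2$, forcing a nonempty intersection. The only cosmetic difference is that you set $v=0$ in the even case while the paper sets $u=0$.
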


\begin{proof}
If $\car(\Fq)=2$, then the multiplicative group $\Fq^{\times}$ has odd order. Hence the homomorphism $f:\Fq^{\times}\longrightarrow\Fq^{\times}$ given by $f(x)=x^2$ has trivial kernel, which implies that $f$ is an automorphism and thus every element in $\Fq^{\times}$ is a square. In particular, taking $u=0$, then $v^2=k$ has solutions for every $k\in \Fq$.

Suppose now that $\car(\Fq)\neq 2$. In this case the above defined homomorphism $f$ has kernel of order $2$ (because $\Fq^{\times}$ is cyclic), so the number of squares in $\Fq^{\times}$ is $\tfrac{q-1}{2}$. Therefore, joint with $0$, we get that the number of squares in $\Fq$ is $\tfrac{q+1}{2}$. Fix $k\in \Fq$ and set $S:=\{u^2\,|\, u\in \Fq\}$ and $T:=\{k-u^2\,|\, u\in \Fq\}$. Note that both sets have size $\frac{q+1}{2}$, and since they cannot be disjoint, they must coincide in some element, that is, there exist $u,v\in\Fq$ such that  $u^2=k-v^2$, as wanted.
\end{proof}

\begin{lemma}
\label{lemma:theta}
Assume \emph{Hypotheses~\ref{hipotesis_descomposicion_FqQn}}. If $2 \leq j \leq t$ and $4 \nmid \deg(g_j)$, then the next statements hold:
\begin{enumerate}[label=\emph{(\alph*)}]
\item The set 
 $$S_j:=\left\{ \begin{bmatrix}
w & z \\ -z^{q^{\deg(g_j)/2}} & w^{q^{\deg(g_j)/2}}
\end{bmatrix} \; \middle\vert \; w,z \in \Fq(\beta_j)\right\}$$
is a subring of $\mathcal{M}_2(\Fq(\beta_j))$ and it satisfies that 
\[
S_j=\left\{  \begin{bmatrix}
x_1+x_2\sqrt{-1} & x_3+x_4\sqrt{-1} \\ -(x_3-x_4\sqrt{-1}) & x_1-x_2\sqrt{-1}
\end{bmatrix} \; \middle\vert \;   x_1, x_2, x_3, x_4\in \Fq(\beta_j+\beta_j^{-1})  \right\}.
\]

\item Fix  $u, v \in \Fq(\beta_j+\beta_j^{-1})$ such that $u^2+v^2 = -1$. The map $\theta_j: S_j\longrightarrow \mathcal{M}_2\left(\Fq(\beta_j+\beta_j^{-1})\right)$ such that 
\[
\theta_j:
\begin{bmatrix}
x_1+x_2\sqrt{-1} & x_3+x_4\sqrt{-1} \\ -(x_3-x_4\sqrt{-1}) & x_1-x_2\sqrt{-1}
\end{bmatrix}
\mapsto
\begin{bmatrix}
x_1+x_2u-x_4v & x_3+x_2v+x_4u \\ -x_3+x_2v+x_4u & x_1-x_2u+x_4v
\end{bmatrix}
\]
is a ring isomorphism. 
\end{enumerate}
\end{lemma}

\begin{proof}
It is straightforward to prove that $S_{j}$ is a ring. To conclude statement (a) we show below that any matrix in $S_j$ can be written as 
$$\begin{bmatrix}
x_1+x_2\sqrt{-1} & x_3+x_4\sqrt{-1} \\ -(x_3-x_4\sqrt{-1}) & x_1-x_2\sqrt{-1}
\end{bmatrix}$$ 
for certain $x_1, x_2, x_3, x_4\in \Fq(\beta_j+\beta_j^{-1})$. 

First, since $\deg(g_j)$ is an even number not divisible by $4$ and $\Fq(\beta_j+\beta_j^{-1})\cong \Fqr{\deg(g_j)/2}$, then $\sqrt{-1}\in \Fq(\beta_j)\smallsetminus \Fq(\beta_j+\beta_j^{-1})$ by Lemma~\ref{raiz-1}. So $\Fq(\beta_j)/ \Fq(\beta_j+\beta_j^{-1})$ is a field extension of degree $2$ with basis $\{1, \sqrt{-1}\}$, and it follows that any element of $\Fq(\beta_j)$ can be written as $x+y\sqrt{-1}$ for certain $x,y\in \Fq(\beta_j+\beta_j^{-1})$. 

Furthermore, the fact $\sqrt{-1}\notin \Fq(\beta_j+\beta_j^{-1})$ implies that $(\sqrt{-1})^{q^{\deg(g_j)/2}}\neq \sqrt{-1}$. In addition, since $q^{\deg(g_j)/2}$ is an odd prime power,  $(\sqrt{-1})^{q^{\deg(g_j)/2}}$ should be the unique other element of order $4$ different from $\sqrt{-1}$ in the cyclic group $\Fqr{\deg(g_j)}^{\times}$, so it should be equal to $-\sqrt{-1}$. Consequently,
$$
(x+y\sqrt{-1})^{q^{\deg(g_j)/2}}=x+y(\sqrt{-1})^{q^{\deg(g_j)/2}}=x-y\sqrt{-1},
$$ 
for any $x,y\in \Fq(\beta_j+\beta_j^{-1})$.  Now, the double expression of the elements of $S_{j}$ stated in $(a)$ is clear, as claimed.

Statement (b) is demonstrated in \cite[Lemma 3.5]{GaoYue21};  we only highlight that the existence of $u,v\in \Fq(\beta_j+\beta_j^{-1})$ is due to Lemma~\ref{solutions_FF}.
\end{proof}

\begin{lemma}
\label{tech-power}
Assume \emph{Hypotheses~\ref{hipotesis_descomposicion_FqQn}}. For any $2\leq j\leq t$ and any root $\beta_j$ of $g_j$ it holds $\beta_j^{-1}=\beta_j^{q^{\deg(g_j)/2}}$. 
\end{lemma}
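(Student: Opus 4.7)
The plan is to exploit the self-reciprocal assumption $g_j = g_j^*$ together with the Galois theory of finite fields, mirroring the argument used for statement (e) of Lemma~\ref{propiedades_reci_conj}. Set $d := \deg(g_j)$. Since $g_j$ is irreducible over $\Fq$, we have $\Fq(\beta_j) = \Fqr{d}$ and the full set of roots of $g_j$ is the Frobenius orbit $\{\beta_j, \beta_j^q, \ldots, \beta_j^{q^{d-1}}\}$. By Lemma~\ref{propiedades_reci_conj} (b), $\beta_j^{-1}$ is a root of $g_j^* = g_j$, so there exists an integer $i$ with $0 \leq i \leq d-1$ such that $\beta_j^{-1} = \beta_j^{q^i}$.

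Next, I would apply Frobenius once more to this identity: raising both sides to the power $q^i$ gives $\beta_j^{-q^i} = \beta_j^{q^{2i}}$, and substituting $\beta_j^{-1} = \beta_j^{q^i}$ on the left yields $\beta_j = \beta_j^{q^{2i}}$. Hence $\beta_j$ lies in $\Fqr{2i}$, which forces $\Fqr{d} = \Fq(\beta_j) \subseteq \Fqr{2i}$ and therefore $d \mid 2i$. Combined with the bound $0 \leq 2i \leq 2(d-1) < 2d$, this leaves only $i = 0$ or $i = d/2$.

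It then remains to exclude $i = 0$. If $i = 0$, then $\beta_j^{-1} = \beta_j$, so $\beta_j^2 = 1$ and $\beta_j \in \{1, -1\}$. But $\beta_j^n = -1$ (as $\beta_j$ is a root of $\mx^n+1$), while $\car(\Fq) \neq 2$ (which is ensured by the hypothesis $\gcd(q,4n)=1$), so $\beta_j \neq 1$; also $\beta_j = -1$ would give $g_j = \mx+1 = g_1$, contradicting $j \geq 2$. Thus $i = d/2$, yielding the desired identity $\beta_j^{-1} = \beta_j^{q^{d/2}}$ (and, as a by-product, that $d$ is even).

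The argument is essentially routine once the correct combination of ingredients is identified; I do not anticipate a main obstacle beyond being careful to rule out the $i=0$ case via the two exclusions $\beta_j \neq 1$ and $g_j \neq \mx+1$. This lemma will then serve the same role for $\Fq[Q_n]$ that Lemma~\ref{propiedades_reci_conj} (e) played for $\Fqq[D_n]$, namely controlling the action of Frobenius on the reciprocal roots that appear in the Wedderburn--Artin decomposition.
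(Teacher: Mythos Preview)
Your proof is correct and follows essentially the same route as the paper's: both use that $\beta_j^{-1}$ lies in the Frobenius orbit of $\beta_j$ (since $g_j=g_j^*$), write $\beta_j^{-1}=\beta_j^{q^i}$, deduce $\beta_j\in\Fqr{2i}$ so that $d\mid 2i$, and conclude $2i=d$ from the bound $2i<2d$. Your version is in fact more careful than the paper's, which simply asserts ``positive integer $k$'' without justifying why $\beta_j^{-1}\neq\beta_j$; you explicitly rule this out via $\beta_j^n=-1$, $\car(\Fq)\neq 2$, and $g_j\neq g_1$.
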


\begin{proof}
For $2\leq j\leq t$ the polynomial $g_j$ is self-reciprocal. Then it has even degree and $\beta_j^{-1}$ is also a root of $g_j$ (see Remark \ref{arrels_recipr}). Thus, there exists $k\in \{0,\dots, \deg(g_j)-1\}$ such that $\beta_j^{-1}=\beta_j^{q^k}$. It follows that $\beta_j=\beta_j^{q^{2k}}$ and therefore $\beta_j\in \Fqr{2k}$. This implies that $\deg(g_j)$ must divide $2k$ and we obtain that necessarily $2k=\deg(g_j)$.
\end{proof}

We are now ready to present the Wedderburn-Artin's decomposition of the semisimple algebra $\Fq[Q_n]$ provided in \cite[Theorem 3.6]{GaoYue21} when $q\equiv 3 \text{ (mod 4)}$, with a misprint corrected (concretely, in $B_1$ below). It should be noted that this decomposition is closely related to the decomposition of the dihedral group algebra $\Fq[D_n]$, (see  (\ref{eq:isom_decom_diedric})). This is why, with some technical variations,  the isomorphism $\rho$ of (\ref{eq:isom_decom_diedric}) (see also Theorem \ref{theo_dih_refined}) forms the ``dihedral part'' of the following algebra isomorphism $\psi$, corresponding to the decomposition of $\Fq[Q_n]$.

\begin{theorem}
\label{theo_quat_GaoYue}
Assume \emph{Hypotheses~\ref{hipotesis_descomposicion_FqQn}}. There exists an isomorphism of $\Fq$-algebras \begin{equation}\label{eq:isom_quaterni}
\psi:\Fq[Q_n] \longrightarrow \bigoplus_{i=1}^{r+s} A_i \oplus \bigoplus_{j=1}^{t+k} B_j,
\end{equation}
where every $A_i$ is given as in the dihedral case (see \emph{(\ref{eq:decom_diedric})}) and 
\begin{equation}\label{eq:decom_quaterni}
    B_j:= 
    \begin{cases}
        \Fq(\sqrt{-1}) & \text{if } j=1 \\[1mm]
        {\cal M}_2\left(\Fq(\beta_j+\beta_j^{-1}) \right) & \text{if } 2\leq j \leq t \\[2mm]
         {\cal M}_2\left(\Fq(\beta_j) \right) & \text{if } t+1 \leq j \leq t + k
    \end{cases} .
\end{equation}
Further, $\psi = \bigoplus_{i=1}^{r+s} \rho_i \oplus \bigoplus_{j=1}^{t+k} \eta_j$, where each $\rho_i$ and each $\eta_j$ is defined by the generators of $Q_{n} = \langle a, b \, | \, a^{2n} = 1, b^2 = a^n, b^{-1}ab = a^{-1} \rangle$ as follows: 
\begin{enumerate}[label=\emph{\roman*)}]
\item for $i=1$
$$\rho_i(a):=(1,1) \quad \text{ and } \rho_i(b):=(1,-1).$$

\item for $2\leq i \leq r$
$$\rho_i(a):=\sigma_i \left( \begin{bmatrix} \alpha_i & 0 \\ 0 & \alpha_i^{-1} \end{bmatrix} \right)  \quad \text{and} \quad  \rho_i(b):=\sigma_i \left( \begin{bmatrix} 0&1  \\  1 &0 \end{bmatrix} \right),$$where the map $\sigma_i$ is the one defined in \emph{Lemma~\ref{lemma:gamma}}. 

\item for $r+1\leq i \leq r+s$ 
$$\rho_i(a):= \begin{bmatrix} \alpha_i & 0 \\ 0 & \alpha_i^{-1} \end{bmatrix} \quad \text{and} \quad  \rho_i(b):= \begin{bmatrix} 0&1  \\  1 &0 \end{bmatrix}.$$

\item for $j=1$
$$
\eta_j(a):=-1 \quad \text{ and } \eta_j(b):=\sqrt{-1}.$$
 
\item for $2\leq j \leq t$ and $4\,\mid\, \deg(g_j)$
$$\eta_j(a):=\gamma_j \left( \begin{bmatrix} \beta_j & 0 \\ 0 & \beta_j^{-1} \end{bmatrix} \right)  \quad \text{and} \quad  \eta_j(b):=\gamma_j \left( \begin{bmatrix} 0&\sqrt{-1}  \\  \sqrt{-1} &0 \end{bmatrix} \right),$$where the map $\gamma_j$ is the one defined in \emph{Lemma~\ref{lemma:gamma}}. 

\item for $2\leq j \leq t$ and $4\,\nmid\, \deg(g_j)$
$$\eta_j(a):=\theta_j \left( \begin{bmatrix} \beta_j & 0 \\ 0 & \beta_j^{-1} \end{bmatrix} \right)  \quad \text{and} \quad  \eta_j(b):=\theta_j \left( \begin{bmatrix} 0&\sqrt{-1}  \\  \sqrt{-1} &0 \end{bmatrix} \right),$$where the map $\theta_j$ is the one defined in \emph{Lemma~\ref{lemma:theta}}. 

\item for $t+1\leq i \leq t+k$ 
$$\eta_j(a):= \begin{bmatrix} \beta_j & 0 \\ 0 & \beta_j^{-1} \end{bmatrix} \quad \text{and} \quad  \eta_j(b):= \begin{bmatrix} 0&-1  \\  1 &0 \end{bmatrix}.$$
\end{enumerate}
\end{theorem}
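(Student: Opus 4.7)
My plan is to build $\psi$ from its action on the generators of $Q_n$, verify case by case that the images satisfy the defining relations $a^{2n}=1$, $b^2=a^n$, $b^{-1}ab=a^{-1}$, and then conclude by matching $\Fq$-dimensions. Since $\gcd(q,4n)=1$, Maschke's theorem ensures that $\Fq[Q_n]$ is semisimple; hence once $\psi$ is a well-defined $\Fq$-algebra homomorphism landing in a semisimple algebra of the correct $\Fq$-dimension, it will automatically be an isomorphism (injectivity is immediate on each simple block, and surjectivity follows from the dimension match).

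The first step is to split $\Fq[Q_n]$ via the central idempotent $e:=(1-a^n)/2$, which makes sense because $\car(\Fq)\neq 2$ and $a^n$ is a central involution in $Q_n$. On $(1-e)\Fq[Q_n]$, the element $a^n$ acts as $+1$, so $b^2=a^n=1$ and this summand is isomorphic to $\Fq[D_n]$; cases (i)--(iii) thus follow directly from Brochero-Martínez's decomposition of $\Fq[D_n]$ in \cite[Theorem 3.1]{Bro15}, which reproduces exactly the maps $\rho_i$ stated in the theorem. On $e\Fq[Q_n]$, where $a^n$ acts as $-1$, the minimal polynomial of $a$ divides $\mx^n+1$, so I would use the factorisation $\mx^n+1=g_1\cdots g_t g_{t+1}g_{t+1}^*\cdots g_{t+k}g_{t+k}^*$ to produce further central idempotents splitting $e\Fq[Q_n]$ into $\bigoplus_{j=1}^{t+k}B_j$, pairing each non-self-reciprocal $g_j$ with $g_j^*$.

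Case (iv) is immediate: $B_1$ is commutative, generated by $a\mapsto-1$ and $b$ with $b^2=-1$, so $B_1\cong\Fq(\sqrt{-1})=\Fqq$ thanks to Lemma~\ref{raiz-1}. For case (v) with $4\mid\deg(g_j)$, the field $\Fq(\beta_j)$ already contains $\sqrt{-1}$, so the construction mimics the self-reciprocal dihedral case via $\gamma_j$ of Lemma~\ref{lemma:gamma-theta}(a); a direct calculation then gives $\eta_j(b)^2=\gamma_j\!\left(\bigl[\begin{smallmatrix}-1&0\\ 0&-1\end{smallmatrix}\bigr]\right)=-I=\eta_j(a)^n$ together with the conjugation relation. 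Case (vii) is analogous to the non-self-reciprocal dihedral case, with $b\mapsto\bigl[\begin{smallmatrix}0&-1\\ 1&0\end{smallmatrix}\bigr]$ giving $b^2=-I$ as required.

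The hard part will be case (vi): here $4\nmid\deg(g_j)$, so by Lemma~\ref{raiz-1} $\sqrt{-1}\notin\Fq(\beta_j+\beta_j^{-1})$ and the block cannot be $\mathcal{M}_2(\Fq(\beta_j))$. I would argue that the image of $\Fq[Q_n]$ inside $\mathcal{M}_2(\Fq(\beta_j))$ under the naive diagonal choice for $a$ and $b\mapsto\bigl[\begin{smallmatrix}0&\sqrt{-1}\\ \sqrt{-1}&0\end{smallmatrix}\bigr]$ lands in the Galois-fixed subring $S_j\subseteq\mathcal{M}_2(\Fq(\beta_j))$ of Lemma~\ref{lemma:gamma-theta}(b), and then transport it via the explicit isomorphism $\theta_j:S_j\to\mathcal{M}_2(\Fq(\beta_j+\beta_j^{-1}))$, whose well-definedness crucially relies on Lemma~\ref{solutions_FF} supplying $u,v\in\Fq(\beta_j+\beta_j^{-1})$ with $u^2+v^2=-1$. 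Finally, a dimension count using $\deg(\mx^n-1)=\deg(\mx^n+1)=n$ yields $\sum_i\dim_{\Fq}A_i+\sum_j\dim_{\Fq}B_j=2n+2n=4n=|Q_n|$, which together with the well-definedness of $\psi$ forces it to be an isomorphism.
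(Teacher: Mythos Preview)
Your proposal is correct and follows essentially the same strategy as the paper, which in turn cites \cite[Theorem~3.1]{Bro15} and \cite[Theorems~3.1, 3.6]{GaoYue21} for the main construction and then supplies only the explicit verification that, for $2\le j\le t$, the images $\eta_j(a)$ and $\eta_j(b)$ actually land in $\mathcal{M}_2(\Fq(\beta_j+\beta_j^{-1}))$ (via $\gamma_j$ when $4\mid\deg(g_j)$, and via membership in $S_j$ and Lemma~\ref{lemma:gamma-theta}(b) when $4\nmid\deg(g_j)$). Your central-idempotent splitting $e=(1-a^n)/2$, case-by-case relation check, and final dimension count are exactly the skeleton underlying those references; the only point to tighten is your parenthetical ``injectivity is immediate on each simple block'' --- what you really need is that each $\rho_i,\eta_j$ is an \emph{irreducible} representation (hence surjective onto its block), after which the dimension equality forces $\psi$ to be an isomorphism.
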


\begin{proof}
The maps $\rho_i$ are proved to be well-defined and algebra isomorphisms in \cite[Theorem 3.1]{Bro15}. Regar\-ding the maps $\eta_j$, they are proved to be well-defined and algebra isomorphisms in \cite[Theorems 3.1, 3.6]{GaoYue21}. However,  we point out a minor misprint within the proof of the injectivity of $\eta_1$ in \cite[Theorem 3.1]{GaoYue21}, concretely in the second direct summand, where it should appear $-\sqrt{-1}$ instead of $\sqrt{-1}$. Moreover, for the sake of comprehensiveness, we provide additional details to those in \cite{GaoYue21} concerning the fact that $\eta_j$ is well-defined for $2\leq j \leq t$, \emph{i.e.} that its image effectively belongs to  $\mathcal{M}_2 \left( \Fq(\beta_j + \beta_j^{-1}) \right)$.

On the one hand, if $4$ divides $\deg(g_j)$, then  $\sqrt{-1}\in\Fqq\leqslant\Fq(\beta_j+\beta_j^{-1})\cong \Fqr{\deg(g_j)/2}$ by Lemma~\ref{raiz-1}  (see also $(\ref{eq:-1})$). Hence by Lemma~\ref{lemma:gamma} we get:
\begin{eqnarray*}
\eta_j(a) &=& \begin{bmatrix}
\sqrt{-1} & -\beta_j \\ \sqrt{-1} & -\beta_j^{-1}
\end{bmatrix}^{-1} \begin{bmatrix}
		\beta_j & 0 \\ 0 & \beta_j^{-1}
		\end{bmatrix}   \begin{bmatrix}
\sqrt{-1} & -\beta_j \\ \sqrt{-1} & -\beta_j^{-1}
\end{bmatrix} \\ & = &
		\dfrac{1}{\sqrt{-1}(\beta_j-\beta_j^{-1})}\begin{bmatrix}
-\beta_j^{-1} & \beta_j \\ -\sqrt{-1} & \sqrt{-1}
\end{bmatrix}
		 \begin{bmatrix}
		\sqrt{-1}\beta_j & -\beta_j^2 \\ \sqrt{-1}\beta_j^{-1} & -\beta_j^{-2}
		\end{bmatrix} \\ & = & \begin{bmatrix}
		0 & (\sqrt{-1})^{-1} \\ (\sqrt{-1})^{-1} & \beta_j + \beta_j^{-1} 
		\end{bmatrix}\in \mathcal{M}_2 \left( \Fq(\beta_j + \beta_j^{-1}) \right),
\end{eqnarray*}
and similarly
\begin{eqnarray*}
\eta_j (b) &=& \begin{bmatrix}
\sqrt{-1} & -\beta_j \\ \sqrt{-1} & -\beta_j^{-1}
\end{bmatrix}^{-1} \begin{bmatrix}
		0&\sqrt{-1} \\ \sqrt{-1}&0
		\end{bmatrix}   \begin{bmatrix}
\sqrt{-1} & -\beta_j \\ \sqrt{-1} & -\beta_j^{-1}
\end{bmatrix} \\ & = &
		\dfrac{1}{\sqrt{-1}(\beta_j-\beta_j^{-1})}\begin{bmatrix}
-\beta_j^{-1} & \beta_j \\ -\sqrt{-1} & \sqrt{-1}
\end{bmatrix}
		 \begin{bmatrix}
		-1& -\beta_j^{-1}\sqrt{-1} \\ -1 & -\beta_j\sqrt{-1}
		\end{bmatrix} \\ & = & \begin{bmatrix}
		-(\sqrt{-1})^{-1} & -(\beta_j+\beta_j^{-1})\\ 0 & (\sqrt{-1})^{-1}
		\end{bmatrix}\in \mathcal{M}_2 \left( \Fq(\beta_j + \beta_j^{-1}) \right).
\end{eqnarray*}

On the other hand, if $4$ does not divide $\deg(g_j)$, then as $\beta_j^{-1}=\beta_j^{q^{\deg(g_j)/2}}$ by Lemma~\ref{tech-power} we deduce that $\left[\begin{smallmatrix} \beta_j & 0 \\ 0& \beta_j^{-1}\end{smallmatrix}\right]\in S_j$, where $S_j$ is the matrix ring in Lemma~\ref{lemma:theta}~(a). In particular, the image by $\theta_j$ of this last matrix lies in $\mathcal{M}_2(\Fq(\beta_j+\beta_j^{-1}))$ by the same lemma. Finally, as remarked within the proof of Lemma~\ref{lemma:theta} (a), it holds $-(\sqrt{-1})^{q^{\deg(g_j)/2}}=-(-\sqrt{-1})=\sqrt{-1}$, so $\left[\begin{smallmatrix} 0& \sqrt{-1} \\ \sqrt{-1}& 0\end{smallmatrix}\right]\in S_j$ and its image by $\theta_j$ also belongs to $\mathcal{M}_2(\Fq(\beta_j+\beta_j^{-1}))$.
\end{proof}

\subsection{On the euclidean dual of a \texorpdfstring{$Q_n$-code}{Qn-code}}

In this subsection we will obtain a full description of the euclidean dual of a $Q_n$-code $\cC$ over $\Fq$ based on the expression of $I_{\cC}$ as a direct sum of ideals within the Wedderburn-Artin's decomposition of the semisimple algebra $\F_{q}[Q_n]$  provided by Theorem \ref{theo_quat_GaoYue}. Specifically, if 
$$
\psi(I_{\cC})=\bigoplus_{i=1}^{r+s} I_i\oplus \bigoplus_{j=1}^{t+k} L_j \subseteq 
\bigoplus_{i=1}^{r+s} A_i \oplus \bigoplus_{j=1}^{t+k} B_j,
$$
then, similarly to how we worked in section \ref{DualHerDn}, here we also use $(\ref{reduccioDual})$ to reduce the problem of computing $(I_{\cC})^{\perp_e}$ to compute $(I_i)^{\perp_e}$ and $(L_j)^{\perp_e}$ in their corresponding matrix rings, for all $i$ and $j$. 
Note that  by (\ref{eq:orthogonal_ideal_euclidean}) in $\Fq[Q_n]$ the euclidean dual of  $I_{\cC}$ can be computed as $I_{\cC^{\perp_e}}=\widehat{\operatorname{Ann}_r(I_{\cC})}$. Therefore, considering the algebra antiautomorphism $\operatorname{inv}: x\mapsto \hat{x}$ of $\Fq[Q_n]$, our goal is to find $\nu$ such that the next diagram is commutative.


\begin{equation}
\label{diagrama_cuaternio}
\begin{tikzcd}[row sep=large, column sep=large]
\Fq[Q_n] \arrow[d, "\psi"'] \arrow[r, "\operatorname{inv}"] & \Fq[Q_n] \arrow[d, "\psi"]     \\
\displaystyle\bigoplus_{i=1}^{r+s}  A_i \oplus \bigoplus_{j=1}^{t+k} B_j \arrow[r, "\nu"]  & \displaystyle\bigoplus_{i=1}^{r+s} A_i \oplus \bigoplus_{j=1}^{t+k} B_j 
\end{tikzcd}
\end{equation}
In this manner,  we obtain
\begin{eqnarray*}
\bigoplus_{i=1}^{r+s}(I_i)^{\perp_e} \oplus \bigoplus_{j=1}^{t+k} (L_j)^{\perp_e}&=&\psi (I_{\cC^{\perp_e}})= (\psi \circ \operatorname{inv} )(\operatorname{Ann}_r(I_{\cC})) = (\nu \circ \psi )(\operatorname{Ann}_r(I_{\cC}))\\ &=& \nu \left(\bigoplus_{i=1}^{r+s}\operatorname{Ann}_r(I_i) \ \oplus \ \bigoplus_{j=1}^{t+k} \operatorname{Ann}_r(L_j) \right)
\end{eqnarray*}
and so we will be able to compute each $(I_i)^{\perp_e}$ and $(L_j)^{\perp_e}$ in terms of the images through $\nu$ of $\operatorname{Ann}_r(I_i)$ and $\operatorname{Ann}_r(L_j)$, respectively.

In order to define such a $\nu$, we need the following map $\mathcal{K}$ for square matrices over any of the fields appearing in $A_i$, that is, $\Fq(\alpha_i)$ or $\Fq(\alpha_i + \alpha_i^{-1})$, and also the map $\cal S$, for square matrices over any of the fields appearing in $B_j$, that is, $\Fq(\beta_j)$ or $\Fq(\beta_j+\beta_j^{-1})$, for $2\leq i\leq r+s$ and $2\leq j\leq t+k$ respectively: 

\begin{equation}\label{eq:KiS}
\mathcal{K} : \begin{bmatrix}
a & b \\ c & d
\end{bmatrix} \mapsto \begin{bmatrix}
d & b \\ c & a
\end{bmatrix} ,
\hspace{1.5cm}
\mathcal{S} : \begin{bmatrix}
a & b \\ c & d
\end{bmatrix} \mapsto \begin{bmatrix}
d & -b \\ -c & a
\end{bmatrix}.
\end{equation}

\begin{lemma}
\label{antiautomorphisms}
The maps $\mathcal{K}$ and $\mathcal{S}$ in \emph{(\ref{eq:KiS})} are antiautomorphisms in their corresponding matrix rings.
\end{lemma}

\begin{proof}
Notice that $\mathcal{K}^2=\operatorname{id}=\mathcal{S}^2$, so they are one-to-one. Moreover, observe that 
\begin{gather*}
\mathcal{K} \left( \begin{bmatrix} a_1 & b_1 \\ c_1 & d_1 \end{bmatrix} \right) \mathcal{K} \left( \begin{bmatrix} a_2 & b_2 \\ c_2 & d_2 \end{bmatrix} \right)
= \begin{bmatrix} d_1 & b_1 \\ c_1 & a_1 \end{bmatrix}\begin{bmatrix} d_2 & b_2 \\ c_2 & a_2 \end{bmatrix}
= \begin{bmatrix} d_1d_2 + b_1c_2 & d_1b_2+b_1a_2 \\ c_1d_2 + a_1c_2 & c_1b_2+a_1a_2 \end{bmatrix} \\
= \mathcal{K} \left( \begin{bmatrix} c_1b_2+a_1a_2 & d_1b_2+b_1a_2 \\ c_1d_2 + a_1c_2 & d_1d_2 + b_1c_2 \end{bmatrix} \right) = 
\mathcal{K} \left( \begin{bmatrix} a_2 & b_2 \\ c_2 & d_2 \end{bmatrix} \begin{bmatrix} a_1 & b_1 \\ c_1 & d_1 \end{bmatrix} \right) ,
\end{gather*}
and
\begin{gather*}
\mathcal{S} \left( \begin{bmatrix} a_1 & b_1 \\ c_1 & d_1 \end{bmatrix} \right) \mathcal{S} \left( \begin{bmatrix} a_2 & b_2 \\ c_2 & d_2 \end{bmatrix} \right)
= \begin{bmatrix} d_1 & -b_1 \\ -c_1 & a_1 \end{bmatrix}\begin{bmatrix} d_2 & -b_2 \\ -c_2 & a_2 \end{bmatrix}
\\ = \begin{bmatrix} d_1d_2 + b_1c_2 & -d_1b_2-b_1a_2 \\ -c_1d_2 - a_1c_2 & c_1b_2+a_1a_2 \end{bmatrix} 
= \mathcal{S} \left( \begin{bmatrix} c_1b_2+a_1a_2 & d_1b_2+b_1a_2 \\ c_1d_2 + a_1c_2 & d_1d_2 + b_1c_2 \end{bmatrix} \right) \\ = 
\mathcal{S} \left( \begin{bmatrix} a_2 & b_2 \\ c_2 & d_2 \end{bmatrix} \begin{bmatrix} a_1 & b_1 \\ c_1 & d_1 \end{bmatrix} \right) ,
\end{gather*}
as desired.
\end{proof}

We are now in a position to give the expression of the isomorphism $\nu$ that makes the diagram $(\ref{diagrama_cuaternio})$ commutative. As explained in subsection \ref{euclidean_dihedral},  in \cite{VedDeu21} a precise description of the euclidean dual of a $D_n$-code over $\Fq$ is given (see Theorem \ref{dih_euclidean_dual}). To this end, the authors used a simplified version, adapted to the euclidean case, of the diagram given in (\ref{diagrama}) (see  \cite[Lemma 1]{VedDeu21}). That is why  the ``dihedral part'' of our isomorphism $\nu$ will be denoted as $\varphi$ (see Theorem \ref{theo_commutative}) and its expression will be given without a proof.

\begin{theorem}
\label{theo_commutative_quaternion}
Assume \emph{Hypotheses~\ref{hipotesis_descomposicion_FqQn}}. The diagram \emph{(\ref{diagrama_cuaternio})} is commutative if $$\nu = \bigoplus_{i=1}^{r+s} \varphi_i \oplus \bigoplus_{j=1}^{t+k} \phi_j  $$

\begin{small}
$$\begin{cases}
\varphi_i(x \oplus y) := x \oplus y & \text{if } i= 1 \\

\varphi_i(X) := \left(\sigma_i\circ \mathcal{K} \circ \sigma_i^{-1}\right) (X) & \text{if } 2\leq i \leq r \\

\varphi_i(X) := \mathcal{K}(X) & \text{if } r+1\leq i \leq r+s \\

\phi_j(x) := x^q & \text{if } j=1 \\

\phi_j(X) := \left(\gamma_j \circ \mathcal{S}\circ \gamma_j^{-1}\right)(X) & \text{if } 2\leq j \leq t \text{ and } 4\,\mid\, \deg(g_j) \\

\phi_j(X) := \left(\theta_j \circ \mathcal{S}\circ \theta_j^{-1}\right)(X) & \text{if } 2\leq j \leq t \text{ and } 4\,\nmid\, \deg(g_j) \\

\phi_j(X) := \mathcal{S}(X) & \text{if } t+1\leq j \leq t+k
\end{cases},$$ 
\end{small}

\smallskip 

\noindent where $\sigma_i,\gamma_j$ and $\theta_j$ are the maps in \emph{Lemmas~\ref{lemma:gamma} and \ref{lemma:theta}}.
\end{theorem}

\begin{proof}
Consider the presentation $Q_{n} = \langle a, b \, | \, a^{2n} = 1, b^2 = a^n, b^{-1}ab = a^{-1} \rangle$  for the generalised quaternion group of order $4n$. Then 
$$
Q_n=\{1,a,a^2,...,a^{2n-1},b,ab,a^2b,...,a^{2n-1}b\},
$$ 
so any element $u\in \Fq[Q_n]$ can be written as 
$$
u=P_1(a) + P_2(a)a^n + \left( P_3(a) + P_4(a)a^n \right) b,
$$
for certain polynomials $P_1,P_2,P_3,P_4\in\Fq[\mx]$ of degree at most $n-1$. We aim to show that  $$
(\psi\circ\operatorname{inv})(u)=  \psi(\hat{u}) = \nu(\psi(u)),
$$
for all $u\in \Fq[Q_n]$. The ``dihedral part'' of this equality corresponds to $1\leq i \leq r+s$, that is $\rho_i(\hat{u})=\varphi_i(\rho_i(u))$ and it
is already done in \cite[Lemma 1]{VedDeu21}.  Therefore, we only focus on $1\leq j\leq t+k$, and so we aim to show $\eta_j(\hat{u})=\phi_j(\eta_j(u))$ for these values of $j$. Let us argue case by case. Note firstly that  in $Q_n$ one has $(a^mb)^{-1}=(ba^{-m})^{-1}=a^mb^3=a^{n+m}b$, which leads to
$$
\hat{u} = P_1(a^{-1}) + P_2(a^{-1})a^n + \left( P_4(a) + P_3(a)a^n \right) b.
$$

\medskip

\noindent \underline{$j=1$.}
 \begin{align*}
\eta_1(\hat{u}) &= \eta_1\left(P_1(a^{-1}) + P_2(a^{-1})a^n + \left( P_4(a) + P_3(a)a^n \right) b\right) \\ &= P_1((-1)^{-1})-P_2((-1)^{-1}) + \sqrt{-1}(P_4(-1)-P_3(-1)) \\
&= P_1(-1)-P_2(-1) - \sqrt{-1}(P_3(-1)-P_4(-1))
\end{align*}

\noindent At this point, we recall that $(\sqrt{-1})^q=-\sqrt{-1}$, since $\sqrt{-1}\in \Fqr{2}\smallsetminus \Fq$ by Lemma~\ref{raiz-1} and $\Fqr{2}^{\times}$ has only two elements of order $4$. Now we compute the right side of the equality $\psi(\hat{u}) = \nu(\psi(u))$  for $j=1$.
\begin{align*}
\phi_1(\eta_1(u)) &= \phi_1\left(P_1(-1)-P_2(-1) + \sqrt{-1}(P_3(-1)-P_4(-1))\right) \\
&= \left(P_1(-1)-P_2(-1)\right)^q + (\sqrt{-1})^q(P_3(-1)-P_4(-1))^q\\
&= P_1(-1)-P_2(-1) - \sqrt{-1}(P_3(-1)-P_4(-1))
\end{align*}

\noindent \underline{$2\leq j \leq t$.} If $4\, \mid \, \deg(g_j)$, then  by Theorem \ref{theo_quat_GaoYue} v), the equality $\psi(\hat{u}) = \nu(\psi(u))$ remains as follows.

\begin{align*}
\eta_j(\hat{u}) &= \eta_j\left(P_1(a^{-1}) + P_2(a^{-1})a^n + \left( P_4(a) + P_3(a)a^n \right) b\right) \\
&= \gamma_j \left( \begin{bmatrix} P_1(\beta_j^{-1}) - P_2(\beta_j^{-1}) & \sqrt{-1}\left(P_4(\beta_j) - P_3(\beta_j)\right) \\ \sqrt{-1}\left(P_4(\beta_j^{-1}) - P_3(\beta_j^{-1})\right) & P_1(\beta_j) - P_2(\beta_j) \end{bmatrix} \right)
\end{align*}

\begin{align*}
\phi_j(\eta_j(u)) &= \phi_j \left( \gamma_j \left( \begin{bmatrix} P_1(\beta_j) - P_2(\beta_j) & \sqrt{-1}\left(P_3(\beta_j) - P_4(\beta_j)\right) \\ \sqrt{-1}\left(P_3(\beta_j^{-1}) - P_4(\beta_j^{-1})\right) & P_1(\beta_j^{-1}) - P_2(\beta_j^{-1}) \end{bmatrix} \right) \right)\\
&= \left( \gamma_j \circ \mathcal{S} \right) \left(\begin{bmatrix} P_1(\beta_j) - P_2(\beta_j) & \sqrt{-1}\left(P_3(\beta_j) - P_4(\beta_j)\right) \\ \sqrt{-1}\left(P_3(\beta_j^{-1}) - P_4(\beta_j^{-1})\right) & P_1(\beta_j^{-1}) - P_2(\beta_j^{-1}) \end{bmatrix} \right) \\
&= \gamma_j \left( \begin{bmatrix} P_1(\beta_j^{-1}) - P_2(\beta_j^{-1}) & \sqrt{-1}\left(P_4(\beta_j) - P_3(\beta_j)\right) \\ \sqrt{-1}\left(P_4(\beta_j^{-1}) - P_3(\beta_j^{-1})\right) & P_1(\beta_j) - P_2(\beta_j) \end{bmatrix} \right)
\end{align*}

\noindent It can be analogously proved the equality whenever $4\, \nmid \, \deg(g_j)$. Finally we have:

\medskip

\noindent \underline{$t+1\leq j \leq t+k$.}
\begin{align*}
\eta_j(\hat{u}) &= \eta_j\left( P_1(a^{-1}) + P_2(a^{-1})a^n + \left( P_4(a) + P_3(a)a^n \right) b \right) \\
&= \begin{bmatrix} P_1(\beta_j^{-1}) - P_2(\beta_j^{-1}) & P_3(\beta_j) - P_4(\beta_j) \\ P_4(\beta_j^{-1}) - P_3(\beta_j^{-1}) & P_1(\beta_j) - P_2(\beta_j) \end{bmatrix}
\end{align*}

\begin{align*}
\phi_j(\eta_j(u)) &= \phi_j \left( \begin{bmatrix} P_1(\beta_j) - P_2(\beta_j) & P_4(\beta_j) - P_3(\beta_j) \\[2mm] P_3(\beta_j^{-1}) - P_4(\beta_j^{-1}) & P_1(\beta_j^{-1}) - P_2(\beta_j^{-1}) \end{bmatrix} \right) \\
&= \mathcal{S} \left(\begin{bmatrix} P_1(\beta_j) - P_2(\beta_j) & P_4(\beta_j) - P_3(\beta_j) \\[2mm] P_3(\beta_j^{-1}) - P_4(\beta_j^{-1}) & P_1(\beta_j^{-1}) - P_2(\beta_j^{-1}) \end{bmatrix} \right) \\
&= \begin{bmatrix} P_1(\beta_j^{-1}) - P_2(\beta_j^{-1}) & P_3(\beta_j) - P_4(\beta_j) \\ P_4(\beta_j^{-1}) - P_3(\beta_j^{-1}) & P_1(\beta_j) - P_2(\beta_j) \end{bmatrix}
\end{align*}
\end{proof}

\medskip

Now it is enough to apply $\nu$ in the previous result to the  descomposition of the corresponding ideal in $\Fq[Q_n]$ of a $Q_n$-code to obtain its euclidean dual. This is the main result of this section.

\begin{theorem}
\label{quat_euclid_orthogonal}
Assume \emph{Hypotheses~\ref{hipotesis_descomposicion_FqQn}}.  Let $\mathcal{C}$ be a $Q_n$-code  over $\Fq$, and $I_{\cC}$ be its corresponding ideal in $\mathbb{F}_q[Q_n]$.  Consider the decomposition of $I_{\cC}$  through the isomorphism $\psi$ given in \emph{Theorem~\ref{theo_quat_GaoYue}}, that is
$$
I_{\cC}\overset{\psi}{\cong} \displaystyle \bigoplus_{i=1}^{r+s} I_i 
\oplus \displaystyle\bigoplus_{j=1}^{t+k} L_j 
 \subseteq \displaystyle \bigoplus_{i=1}^{r+s} A_i 
\oplus \displaystyle\bigoplus_{j=1}^{t+k} B_j .
$$
Then $I_{\cC^{\perp_e}}\subseteq \mathbb{F}_q[Q_n]$ verifies that 
$$
I_{\cC^{\perp_e}}\overset{\psi}{\cong} \displaystyle \bigoplus_{i=1}^{r+s} I_i^{\perp_e} 
\oplus \displaystyle \bigoplus_{j=1}^{t+k} L_j^{\perp_e}, 
$$ 
where all $I_i^{\perp_e}$ 
are given as in \emph{Theorem~\ref{dih_euclidean_dual}}, and for $1\leq j\leq t+k$ the ideals $L_j^{\perp_e}\subseteq B_j$ are as follows:  
$$
L_j^{\perp_e}:=
\left\lbrace\begin{array}{ll}
B_j, & \text{if } L_j=\pmb{0} \\*[4mm]
\pmb{0}, & \text{if } L_j=B_j \\*[4mm]
L_j, & \text{otherwise}
\end{array}\right. 
.$$
\end{theorem}

\begin{proof}
Notice that the ``dihedral part'' of the statement is already proved in Theorem~\ref{dih_euclidean_dual}, so we may focus only on $1\leq j \leq t+k$. In virtue of (\ref{eq:orthogonal_ideal_euclidean}) and Theorems~\ref{theo_commutative_quaternion} and~\ref{theo_quat_GaoYue}, it is enough to apply $\nu$ (more concretely, $\phi_j$) to $\operatorname{Ann}_r(L_j)$ for each possible ideal $L_j$ of $B_j$.
Note that, if $L_j=\langle N_j \rangle$, where $N_j\in B_j$ is a matrix in row reduced echelon form, then $\operatorname{Ann}_r(L_j)$ is just 
\[
\operatorname{Ann}_r(L_j)=\operatorname{Ann}_r(\langle N_j\rangle)=\{Y \in B_j\ |\ N_jY=0\}.
\]
We argue case by case.

\medskip

\noindent \underline{$j=1$.} In this case $B_1=\Fq(\sqrt{-1})$, so  the only possible ideals are $\pmb{0}$ and the whole $B_1$. Certainly the right annihilator of $\pmb{0}$ is $B_1$ and vice versa. Hence, applying $\phi_1:x\mapsto x^q$ yields the desired conclusion in this case.

\bigskip

\noindent \underline{$2\leq j \leq t$ with $4\mid \deg(g_j)$.} The possibilities for the matrix generator $N_j$ of the ideal $L_j\subseteq B_j$, in row reduced echelon form, are $$\begin{bmatrix}
0&0\\0&0
\end{bmatrix}, \qquad \begin{bmatrix}
0&1\\0&0
\end{bmatrix}, \qquad \begin{bmatrix}
1 & \lambda_j \\ 0&0 
\end{bmatrix} \qquad \text{and} \qquad \begin{bmatrix}
1&0\\0&1
\end{bmatrix},$$ where  $\lambda_j\in\Fq(\beta_j+\beta_j^{-1})$. It is easy to check that $\operatorname{Ann}_r(L_j)$ can be generated en each case, respectively, by $$\begin{bmatrix}
1&0\\0&1
\end{bmatrix}, \qquad \begin{bmatrix}
0&1\\0&0
\end{bmatrix}, \qquad \begin{bmatrix}
0 & -\lambda_j \\ 0&1
\end{bmatrix} \qquad \text{and} \qquad \begin{bmatrix}
0&0\\0&0
\end{bmatrix}.$$ 
Finally, since $L_j^{\perp_e}=\phi_j(\operatorname{Ann}_r(L_j))$, we should apply $\phi_j:X\mapsto (\gamma_j\circ \mathcal{S}\circ \gamma_j^{-1})(X)$ to them in order to get the generators of $L_j^{\perp_e}$.
Clearly, it is enough to make the computations for the proper ideals. First, using Lemma~\ref{antiautomorphisms}, we compute
\begin{eqnarray*}
( \gamma_j\circ \mathcal{S}\circ \gamma_j^{-1}) \left( \begin{bmatrix}
0&1\\0&0 \end{bmatrix}\right) &= &  \gamma_j\circ \mathcal{S}\left(Z_j\begin{bmatrix}
0&1\\ 0&0 
\end{bmatrix}Z_j^{-1} \right)  \\
& = & \gamma_j\left(\mathcal{S}(Z_j)^{-1}\begin{bmatrix}
0&-1\\ 0&0 
\end{bmatrix}\mathcal{S}(Z_j)\right) \\
& = & Z_j^{-1}\mathcal{S}(Z_j)^{-1}\begin{bmatrix}
0&-1\\ 0&0 
\end{bmatrix}\mathcal{S}(Z_j)Z_j.
\end{eqnarray*}
Hence if  $L_j = \langle \left[\begin{smallmatrix} 0&1\\0&0\end{smallmatrix}\right]\rangle$, then $L_j^{\perp_e}$ 
is generated by the above matrix product. But $L_j^{\perp_e}$ is a left ideal, so to obtain a generator it is enough to compute
\begin{eqnarray*}
\begin{bmatrix}
0&-1\\ 0&0 
\end{bmatrix}\mathcal{S}(Z_j)Z_j &=& \begin{bmatrix}
0&-1\\ 0&0 
\end{bmatrix}\begin{bmatrix}
-\beta_j^{-1}&\beta_j\\-\sqrt{-1}&\sqrt{-1}
\end{bmatrix}\begin{bmatrix}
\sqrt{-1}&-\beta_j\\\sqrt{-1}&-\beta_j^{-1}
\end{bmatrix} \\
& = & \begin{bmatrix}
\sqrt{-1}&-\sqrt{-1}\\0&0
\end{bmatrix}\begin{bmatrix}
\sqrt{-1}&-\beta_j\\\sqrt{-1}&-\beta_j^{-1}
\end{bmatrix} \\
& = & \begin{bmatrix}
0&-\sqrt{-1}(\beta_j-\beta_j^{-1})\\ 0&0
\end{bmatrix}.
\end{eqnarray*}
The row reduced echelon form of this last matrix is $\left[\begin{smallmatrix} 0&1\\0&0\end{smallmatrix}\right]$, because $2\leq j \leq t$ and so $\beta_j\neq \beta_j^{-1}$. We have thus deduced  $L_j=L_j^{\perp_e}$ when $L_j = \langle \left[\begin{smallmatrix} 0&1\\0&0\end{smallmatrix}\right]\rangle$. Second, we have 
\begin{eqnarray*}
 (\gamma_j\circ \mathcal{S}\circ \gamma_j^{-1}) \left( \begin{bmatrix}
0&-\lambda_j\\0&1 \end{bmatrix}\right) &= &  \gamma_j\circ \mathcal{S}\left(Z_j\begin{bmatrix}
0&-\lambda_j\\ 0&1 
\end{bmatrix}Z_j^{-1} \right)  \\
& = & Z_j^{-1}\mathcal{S}(Z_j)^{-1}\begin{bmatrix}
1&\lambda_j\\ 0&0
\end{bmatrix}\mathcal{S}(Z_j)Z_j.
\end{eqnarray*}
Moreover
\begin{eqnarray*}
\begin{bmatrix}
1&\lambda_j\\ 0&0
\end{bmatrix}\mathcal{S}(Z_j)Z_j &=& \begin{bmatrix}
1&\lambda_j\\ 0&0
\end{bmatrix}\begin{bmatrix}
-\beta_j^{-1}&\beta_j\\-\sqrt{-1}&\sqrt{-1}
\end{bmatrix}\begin{bmatrix}
\sqrt{-1}&-\beta_j\\\sqrt{-1}&-\beta_j^{-1}
\end{bmatrix} \\
& = & \begin{bmatrix}
-\beta_j^{-1}-\lambda_j\sqrt{-1}& \beta_j+\lambda_j\sqrt{-1} \\ 0&0
\end{bmatrix}\begin{bmatrix}
\sqrt{-1}&-\beta_j\\\sqrt{-1}&-\beta_j^{-1}
\end{bmatrix} \\
& = & \begin{bmatrix}
\sqrt{-1}(\beta_j-\beta_j^{-1})&\lambda_j\sqrt{-1}(\beta_j-\beta_j^{-1})&\\ 0&0
\end{bmatrix},
\end{eqnarray*}
whose row reduced echelon form is $\left[\begin{smallmatrix}
1 & \lambda_j \\ 0&0 
\end{smallmatrix}\right]$. It follows    $L_j=L_j^{\perp_e}$ when $L_j = \langle \left[\begin{smallmatrix} 1 & \lambda_j \\ 0&0\end{smallmatrix}\right]\rangle$.

\bigskip

\noindent \underline{$2\leq j \leq t$ with $4\nmid \deg(g_j)$.} Here we argue as in the previous situation but applying $\phi_j:X\mapsto (\theta_j\circ \mathcal{S}\circ \theta_j^{-1})(X)$. Therefore, basically, we must compute 
\begin{equation}\label{coordinates}
(\theta_j\circ \mathcal{S}\circ \theta_j^{-1}) \left( \begin{bmatrix}
0&1\\0&0 \end{bmatrix}\right) \qquad \text{and} \qquad (\theta_j\circ \mathcal{S}\circ \theta_j^{-1}) \left( \begin{bmatrix}
0&-\lambda_j\\0&1 \end{bmatrix}\right).
\end{equation}

From the second expression of $S_j$ given in Lemma~\ref{lemma:theta} (a), it is easy to see that the next set is a basis of $S_j$ over $\Fq(\beta_j+\beta_j^{-1})$:
$$
\mathcal{B}_1:=\left\lbrace \begin{bmatrix} 1&0\\0&1\end{bmatrix}, \begin{bmatrix} 0&1\\-1&0\end{bmatrix}, \begin{bmatrix} \sqrt{-1}&0\\0&-\sqrt{-1}\end{bmatrix}, \begin{bmatrix} 0&\sqrt{-1}\\\sqrt{-1}&0\end{bmatrix}\right\rbrace.
$$ 
Moreover, following the definition of $\theta_j$ in Lemma~\ref{lemma:theta} (b), we see that 
$$
\mathcal{B}_2:=\left\lbrace \begin{bmatrix} 1&0\\0&1\end{bmatrix}, \begin{bmatrix} 0&1\\-1&0\end{bmatrix}, \begin{bmatrix} u&v\\v&-u\end{bmatrix}, \begin{bmatrix} -v&u\\u&v\end{bmatrix} \right\rbrace
$$
is clearly a basis of ${\cal M}_2(\Fq(\beta_j+\beta_j^{-1})$. It follows that, if the coordinates of $X\in {\cal M}_2(\Fq(\beta_j+\beta_j^{-1})$ in the basis $\mathcal{B}_2$ are $(a_1,a_2,a_3,a_4)$, then these are also the coordinates of $\theta_j^{-1} (X)$ in $\mathcal{B}_1$. Thus, since ${\cal S}$ is an homomorphism, it easy to check that the coordinates of $(\mathcal{S}\circ \theta_j^{-1}) (X)$ in $\mathcal{B}_1$ are $(a_1,-a_2,-a_3,-a_4)$, which in addition coincide with the coordinates of $(\theta_j\circ \mathcal{S}\circ \theta_j^{-1}) (X)$ in $\mathcal{B}_2$.  We use this argument to compute $(\ref{coordinates})$.

One can easily check that the coordinates of $\left[\begin{smallmatrix} 0&1\\0&0\end{smallmatrix}\right]$ in $\mathcal{B}_2$ are $(0,\tfrac{1}{2}, \tfrac{-v}{2}, \tfrac{-u}{2})$, so its image by $\theta_j\circ \mathcal{S}\circ \theta_j^{-1}$ is 
$$
(\theta_j\circ \mathcal{S}\circ \theta_j^{-1}) \left( \begin{bmatrix}
0&1\\0&0 \end{bmatrix}\right) = -\frac{1}{2} \begin{bmatrix} 0&1\\-1&0\end{bmatrix}  +\frac{v}{2}\begin{bmatrix} u&v\\v&-u\end{bmatrix} + \frac{u}{2} \begin{bmatrix} -v&u\\u&v\end{bmatrix} = \begin{bmatrix}  0 & -1 \\ 0 & 0  \end{bmatrix}.
$$Consequently in this case it also holds $L_j=L_j^{\perp_e}$ when $L_j = \langle \left[\begin{smallmatrix} 0 & 1 \\ 0&0\end{smallmatrix}\right]\rangle$.

Finally, the coordinates of $\left[\begin{smallmatrix} 0& -\lambda_j\\0&1\end{smallmatrix}\right]$ in $\mathcal{B}_2$ are $\tfrac{1}{2}(1,-\lambda_j,u+v\lambda_j, u\lambda_j-v)$,  so 
\begin{eqnarray*}
(\theta_j\circ \mathcal{S}\circ \theta_j^{-1}) \left( \begin{bmatrix}
0&-\lambda_j\\0&1 \end{bmatrix}\right) &= & \frac{1}{2}\begin{bmatrix} 1&0\\0&1\end{bmatrix} +\frac{\lambda_j}{2} \begin{bmatrix} 0&1\\-1&0\end{bmatrix}  -\frac{u+v\lambda_j}{2}\begin{bmatrix} u&v\\v&-u\end{bmatrix} + \frac{v-u\lambda_j}{2} \begin{bmatrix} -v&u\\u&v\end{bmatrix} \\
&=& \begin{bmatrix}  1 & \lambda_j \\ 0 & 0  \end{bmatrix}, \\
\end{eqnarray*}
and it also follows $L_j=L_j^{\perp_e}$ when $L_j=\langle \left[\begin{smallmatrix} 1 & \lambda_j \\ 0&0\end{smallmatrix}\right]\rangle $.   


\bigskip

\noindent \underline{$t+1\leq j \leq t+k$.} In this last case $\phi_j:X\mapsto \mathcal{S}(X)$, and straightforward computations yield the desired conclusion.
\end{proof}

\medskip

As an application of Theorem~\ref{quat_euclid_orthogonal}, we can enumerate euclidean self-orthogonal $Q_n$-codes over $\Fq$. In doing so, we have unfortunately detected a mistake in \cite[Theorem 4.5]{GaoYue21}  where it is claimed that this number is always a power of $3$; this will be further illustrated in Example \ref{ex-error}.

\begin{corollary}
\label{cor_quat_self}
Assume \emph{Hypotheses~\ref{hipotesis_descomposicion_FqQn}}. Following the notation in \emph{Theorem~\ref{quat_euclid_orthogonal}}, consider the next expression of an ideal $I_{\cC}\subseteq\Fq[Q_n]$ associated with a $Q_n$-code $\mathcal{C}$ over $\Fq$: 
$$
I_{\cC}\overset{\psi}{\cong} \displaystyle \bigoplus_{i=1}^{r+s} I_i 
\oplus \displaystyle\bigoplus_{j=1}^{t+k} L_j 
 \subseteq \displaystyle \bigoplus_{i=1}^{r+s} A_i 
\oplus \displaystyle\bigoplus_{j=1}^{t+k} B_j 
$$
Then $\mathcal{C}\subseteq \mathcal{C}^{\perp_e}$ if and only if the following conditions are satisfied:
\begin{enumerate}[label=\emph{\roman*)}]
\setlength{\itemsep}{0mm}
\item  $I_1= \pmb{0}\oplus \pmb{0}$.
\item If $2\leq i \leq r$, then $I_i=\pmb{0}$.
\item If $r+1\leq i \leq r+s$, then  $I_i\in \{ \pmb{0} , \langle  \left[\begin{smallmatrix} 0&1 
\\0&0\end{smallmatrix}\right]\rangle, \langle  \left[\begin{smallmatrix} 1&0\\0&0\end{smallmatrix}\right]\rangle \}$.
\item If $1\leq j\leq t+k$, then  $L_j$ 
can be any ideal strictly contained in $B_j$.
\end{enumerate}
In particular, there are exactly $3^{s}\displaystyle\prod_{j=2}^{t}(q^{\deg(g_j)/2}+2)\displaystyle\prod_{j=t+1}^{t+k}(q^{\deg(g_j)}+2)$ euclidean self-orthogonal $Q_n$-codes over $\Fq$.
\end{corollary}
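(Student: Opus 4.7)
The plan is to reduce the condition $\cC \subseteq \cC^{\perp_E}$, via the direct sum decomposition of $I_{\cC}$, to checking componentwise inclusions $\langle M_i\rangle_{A_i}\subseteq \langle \widetilde{M_i}\rangle_{A_i}$ for every $1\leq i\leq r+s$ and $\langle N_j\rangle_{B_j}\subseteq \langle \widetilde{N_j}\rangle_{B_j}$ for every $1\leq j\leq t+k$, and then to apply Theorem~\ref{quat_euclid_orthogonal} case by case. The recurring tool throughout will be the standard observation that the left ideal of $\mathcal{M}_2(F)$ generated by a rank-$1$ matrix $M$ equals the set of matrices whose rows lie in the row span of $M$, so membership reduces to checking proportionality of first rows.

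For the dihedral part, following the four sub-cases of Theorem~\ref{dih_euclidean_dual} (and noting $\zeta(n)=1$ since $n$ is odd), the $i=1$ case is immediate from the swap $\pmb{0}\leftrightarrow\Fq$ on each summand of $\Fq\oplus\Fq$. For $2\leq i\leq r$, the only delicate sub-case will be $M_i=\left[\begin{smallmatrix}1&\lambda_i\\0&0\end{smallmatrix}\right]$: matching the first row of $M_i$ with that of $\widetilde{M}_i$ leads to $\lambda_i^2+(\alpha_i+\alpha_i^{-1})\lambda_i+1=0$, whose roots are $-\alpha_i$ and $-\alpha_i^{-1}$. Since these do not lie in $\Fq(\alpha_i+\alpha_i^{-1})\cong\Fqr{\deg(f_i)/2}$, no such $\lambda_i$ exists; this is the one genuinely technical step of the dihedral analysis and is the main obstacle of the whole proof. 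The remaining sub-cases ($M_i\in\{\pmb{0},\left[\begin{smallmatrix}0&1\\0&0\end{smallmatrix}\right],I\}$) are ruled out by a direct inspection using $\car(\Fq)\neq 2$. For $r+1\leq i\leq r+s$ the same procedure yields exactly the three admissible matrices in the statement, with the middle case $\left[\begin{smallmatrix}1&0\\0&0\end{smallmatrix}\right]$ arising from the equation $\lambda_i=-\lambda_i$.

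For the quaternion part, Theorem~\ref{quat_euclid_orthogonal} makes the analysis automatic: the dual of any strictly proper non-zero ideal is itself, while $\pmb{0}$ and $B_j$ are interchanged, so $\langle N_j\rangle\subseteq\langle\widetilde{N_j}\rangle$ holds precisely when $\langle N_j\rangle\subsetneq B_j$. The counting then follows by multiplication: $1$ admissible ideal per $1\leq i\leq r$, $3$ admissible ideals per $r+1\leq i\leq r+s$, $1$ admissible ideal for $j=1$ (since $B_1\cong\Fqq$ is a field whose only proper ideal is $\pmb{0}$), and, for each remaining $j$, the number of proper ideals of $\mathcal{M}_2(\Fqr{m_j})$ listed in row reduced echelon form, namely $q^{m_j}+2$, where $m_j=\deg(g_j)/2$ for $2\leq j\leq t$ and $m_j=\deg(g_j)$ for $t+1\leq j\leq t+k$. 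Assembling these contributions yields the stated cardinality.
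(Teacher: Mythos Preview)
Your proof is correct and follows essentially the same approach as the paper: reduce to componentwise inclusions via the Wedderburn--Artin decomposition, handle the dihedral blocks using Theorem~\ref{dih_euclidean_dual} (with the key technical step being the quadratic equation $\lambda_i^2+(\alpha_i+\alpha_i^{-1})\lambda_i+1=0$ whose roots $-\alpha_i,-\alpha_i^{-1}$ lie outside $\Fq(\alpha_i+\alpha_i^{-1})$), and handle the quaternion blocks directly from Theorem~\ref{quat_euclid_orthogonal}.

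One small slip in the write-up: for $2\leq i\leq r$ you say the remaining sub-cases $M_i\in\{\pmb{0},\left[\begin{smallmatrix}0&1\\0&0\end{smallmatrix}\right],I\}$ are all ``ruled out'', but of course $M_i=\pmb{0}$ is precisely the unique \emph{admissible} choice (as your own counting ``$1$ admissible ideal per $1\leq i\leq r$'' confirms). You mean that $\left[\begin{smallmatrix}0&1\\0&0\end{smallmatrix}\right]$ and $I$ are ruled out by inspection while $\pmb{0}$ is trivially admissible.
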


\begin{proof}
It is enough to check which ideals we can take in Theorem~\ref{quat_euclid_orthogonal} (see also Theorem~\ref{dih_euclidean_dual}) such that they are contained in the corresponding euclidean orthogonal, for each index $i$ and $j$:
\begin{itemize}
	\item For $i=1$ and for $r+1\leq i \leq r+s$, the claim is clear from Theorem~\ref{dih_euclidean_dual}.
	\item For $2\leq i \leq r$, we claim that the unique possibility is  $I_i=\pmb{0}$;  
and for that aim, it is clearly enough to show that $$\begin{bmatrix} \alpha_i+\alpha_i^{-1}+2\lambda_i &-2-(\alpha_i+\alpha_i^{-1})\lambda_i\\0&0 \end{bmatrix}\notin \left\langle \begin{bmatrix} 1&\lambda_i\\0&0\end{bmatrix}\right\rangle_{A_i}.$$ Arguing by contradiction, then $-2-(\alpha_i+\alpha_i^{-1})\lambda_i=\lambda_i(\alpha_i+\alpha_i^{-1}+2\lambda_i)$, so $\lambda_i^2+(\alpha_i+\alpha_i^{-1})\lambda_i+1=0$. But the roots of the polynomial $\mx^2+(\alpha_i+\alpha_i^{-1})\mx+1\in\Fq(\alpha_i+\alpha_i^{-1})[\mx]$ are $-\alpha_i, -\alpha_i^{-1}\notin \Fq(\alpha_i+\alpha_i^{-1})$, a contradiction.
	\item Finally, for $1\leq j \leq t+k$, the assertion directly follows from Theorem~\ref{quat_euclid_orthogonal}.
\end{itemize}
\end{proof}

\smallskip

\begin{remark}
It is also possible to provide, from Theorem~\ref{quat_euclid_orthogonal}, the full description of the euclidean hull of any $Q_n$-code over $\Fq$ via its decomposition within the semisimple group algebra $\Fq[Q_n]$. In particular, all euclidean LCD $Q_n$-codes over $\Fq$ can be computed and enumerated.
\end{remark}


\section{Some numerical examples}
\label{sec:examples}

This final section is devoted to illustrate how the theoretical results stated in this paper can be applied. In the first examples that follow, we will construct some  hermitian self-orthogonal dihedral codes arising from the group algebra $\Fqq[D_n]$. Then we will apply the hermitian CSS construction (see subsection~\ref{pre:CSS}) to obtain quantum-error correcting codes which achieve the best distance for their length and dimension and, therefore, are optimal codes. Additionally, in the last two examples we illustrate the incorrect formulae appearing in \cite[Theorem 4.3]{CaoCaoFu23} and \cite[Theorem 4.5]{GaoYue21}, respectively, on the number of hermitian self-orthogonal $D_n$-codes over $\Fqq$ and on the number of euclidean self-orthogonal $Q_n$-codes over $\Fq$.

\begin{example}
Let $q=3$ and $n=16$. Consider the group algebra $\F_{9}[D_{16}]$, where $D_{16}=\langle a,b \: | \: a^{16}=b^2=1, bab=a^{-1}\rangle$ is a dihedral group of order $32$. Let $\omega$ be a primitive element of $\F_9$. Since $\car(\F_9)=3$ does not divide $n=16$, according to Theorem~\ref{theo_dih_refined}, the Wedderburn-Artin's decomposition of $\F_{9}[D_{16}]$ depends on the factorisation of $\mx^{16}-1\in\F_9[\mx]$ into irreducible polynomials: 
$$
\mx^{16}-1=\underset{f_1}{(\mx-1)}\underset{f_2}{(\mx+1)}\underset{f_3}{(\mx+\omega)}\underset{f_4}{(\mx+\omega^2)}\underset{\overline{f_3}}{(\mx+\omega^3)}\underset{f_3^{\dagger}}{(\mx+\omega^5)}\underset{f_4^{*}=\overline{f_4}}{(\mx+\omega^6)}\underset{f_3^{*}}{(\mx+\omega^7)}\cdot$$ 
$$\cdot\underset{f_5}{(\mx^2+\omega)}\underset{\overline{f_5}}{(\mx^2+\omega^3)}\underset{f_5^{\dagger}}{(\mx^2+\omega^5)}\underset{f_5^{*}}{(\mx^2+\omega^7)}.  
$$
Hence, following the notation in Hypotheses~\ref{hipotesis_descomposicion_Fq2Dn}, we get $J_0=\{1,2\}, J_1=J_2=\emptyset, J_3=\{4\}$ and $J_4=\{3,5\}$. Now by Theorem~\ref{theo_dih_refined} and Remark \ref{remark_fields} we have that $\F_{9}[D_{16}]$ can be decomposed as 
$$\F_{9}[D_{16}] \overset{\rho}{\cong} \underset{A_1}{(\F_9 \oplus \F_9)} \oplus \underset{A_2}{(\F_9 \oplus \F_9)} \oplus \underset{A_3}{(\mathcal{M}_2(\F_9)\oplus \mathcal{M}_2(\F_9))} \oplus \underset{A_4}{\mathcal{M}_2(\F_9)}\oplus \underset{A_5}{(\mathcal{M}_2(\F_{9^2})\oplus \mathcal{M}_2(\F_{9^2}))}, $$
and $\rho$ is given by the generators of $D_{16}$ as follows:
$$ \rho(a) = \left(1, 1 , -1 , -1 ,  \begin{pmatrix} \omega^5 & 0 \\ 0& \omega^3 \end{pmatrix} , \begin{pmatrix} \omega^7 & 0 \\ 0& \omega\end{pmatrix}  , \begin{pmatrix} \omega^6 & 0\\ 0 & \omega^2 \end{pmatrix} ,  \begin{pmatrix} \xi^{25} & 0 \\ 0& \xi^{55} \end{pmatrix} , \begin{pmatrix} \xi^{75} & 0 \\ 0& \xi^5\end{pmatrix}\right),$$ where $\xi$ denotes a primitive element of $\F_{9^2}$, and 
$$ \rho(b) = \left(1, -1 , 1 , -1 , \begin{pmatrix} 0&1\\ 1 & 0 \end{pmatrix} , \begin{pmatrix} 0&1\\ 1 & 0 \end{pmatrix} , \begin{pmatrix} 0&1\\ 1 & 0\end{pmatrix}  , \begin{pmatrix} 0&1\\ 1 & 0 \end{pmatrix} , \begin{pmatrix} 0&1\\ 1 & 0\end{pmatrix} \right).$$ 
Consider now the $D_{16}$-code $\cC$ over $\F_9$ such that 
$$ I_{\cC} \cong \pmb{0} \oplus \pmb{0} \oplus \pmb{0} \oplus \pmb{0} \oplus \pmb{0} \oplus \left\langle \begin{pmatrix} 1& 1 \\ 0& 0\end{pmatrix}\right\rangle \oplus \left\langle \begin{pmatrix} 1& \omega^7 \\ 0& 0\end{pmatrix}\right\rangle\oplus \left\langle \begin{pmatrix} 1& \xi^{14} \\ 0& 0\end{pmatrix}\right\rangle \oplus \left\langle \begin{pmatrix} 1& \xi^2 \\ 0& 0\end{pmatrix}\right\rangle.$$ The dimension of $\cC$ can be easily computed via (\ref{eq:dimension}), and indeed $\dim_{\F_9}(\cC)= 12$. Using the software \begin{sc}Magma\end{sc} \cite{magma}, we were able to construct this concrete $\rho$ and its inverse, and so we computed the next generator element of $I_{\cC}$ in $\F_9[D_{16}]$: 
$$
\omega^3 a + \omega^7 a^2 + \omega a^3 + a^5 + \omega^5 a^6 + a^7 + a^8 + \omega^3 a^9 + \omega^2 a^{10} + \omega a^{11} + a^{13} + \omega^6 a^{14} + a^{15} + \omega b + ab + \omega a^2b$$
$$ + \omega^2 a^3b + \omega^6 a^4b + \omega^2 a^5b - a^6b + \omega a^7b + \omega a^8b + \omega^7 a^9b + \omega^3 a^{10}b + \omega^2 a^{11}b + \omega^6 a^{13}b + \omega^3 a^{14}b + \omega^3 a^{15}b.$$
Moreover, we obtained its minimum distance, which is 12, that is, $\cC$ is a $D_{16}$-code over $\F_9$ with parameters $[32,12,12]_9$. Notice that this code is hermitian self-orthogonal by Theorem~\ref{dih-self-hermitic}; therefore, in virtue of the CSS construction (see Theorem \ref{theo:CSS} of subsection~\ref{pre:CSS}), and making computations with \begin{sc}Magma\end{sc}, we obtain a quantum $D_{16}$-code $\mathcal{Q}$ with parameters $[[32,8,8]]_3$. We point out that $\mathcal{Q}$ achieves the best known minimum distance for its length and dimension according to Grassl's tables \cite{GrasslTable}. 
\end{example}

\smallskip

\begin{example}
Consider again $\F_9[D_{16}]$, as in the previous example, and take in this occassion the $D_{16}$-code over $\F_9$ associated with 
$$ 
I_{\cC} \cong \pmb{0} \oplus \pmb{0} \oplus \pmb{0} \oplus \pmb{0} \oplus \left\langle \begin{pmatrix} 1& \omega^4 \\ 0& 0\end{pmatrix}\right\rangle \oplus \pmb{0} \oplus \left\langle \begin{pmatrix} 1& \omega^7 \\ 0& 0\end{pmatrix}\right\rangle \oplus \pmb{0} \oplus \left\langle \begin{pmatrix} 1& \xi^{23} \\ 0& 0\end{pmatrix}\right\rangle.
$$ 
It follows by (\ref{eq:dimension}) that $\dim_{\F_9}(\cC)=8$. Using \begin{sc}Magma\end{sc} we checked that its parameters are $[32,8,19]_9$, and a generator of $I_{\cC}$ is:
$$1 - a + \omega^2 a^2 - a^3 + \omega^6 a^4 + \omega^5 a^5 + \omega^6 a^6 + \omega^7 a^7 - a^9 + a^{10} - a^{11} + \omega^2 a^{12} + \omega^5 a^{13} + a^{14} + \omega^7 a^{15} $$ 
$$- b - ab + a^2b + \omega^3 a^3b + \omega^6 a^5b - a^6b - a^7b + \omega^5 a^8b + \omega^7 a^9b - a^{11}b + \omega^5 a^{12}b + \omega a^{13}b + \omega^3 a^{14}b + a^{15}b.$$
This code is also hermitian self-orthogonal by Theorem~\ref{dih-self-hermitic}, so one can construct with \begin{sc}Magma\end{sc} the associated quantum code $\mathcal{Q}$. In this case $\mathcal{Q}$ has as parameters $[[32,16,6]]_3$, which is again an optimal code according to \cite{GrasslTable}.

Similarly, one may construct many other optimal quantum dihedral codes over $\F_3$. For instance, we can build one with parameters $[[20,12,4]]_3$, trough the hermitian self-orthogonal $D_{10}$-code over $\F_9$ corresponding to the ideal in $\F_9[D_{10}]$ generated by the element $$
1 + \omega^2 a + \omega a^2 + a^5 + \omega^7 a^6 + \omega^5 a^7 + \omega a^8 + a^9 + b + a b + \omega a^2 b + \omega^5 a^3 b + \omega^7 a^4 b + a^5 b + \omega a^8 b + \omega^2 a^9 b.
$$
\end{example}

\smallskip

\begin{example}
\label{ex_cao}
This example aims to illustrate how our alternative approach, via the group algebra structure, to construct and enumerate dihedral (hermitian self-orthogonal) codes reduces the computations given in \cite{CaoCaoFu23}; compare this example with \cite[Example 5.1]{CaoCaoFu23}.

Take $q=2$ and $n=7$, and consider the group algebra $\F_4[D_7]$. Note that $\mx^7-1=\mx^7+1\in\F_4[\mx]$ can be factorised into irreducible polynomials as follows: $$\mx^7+1 = \underset{f_1}{(\mx+1)}\underset{f_2=\overline{f_2}}{(\mx^3+\mx+1)}\underset{f_2^{*}}{(\mx^3+\mx^2+1)}.$$ According to the notation in Hypotheses~\ref{hipotesis_descomposicion_Fq2Dn}, we thus have $J_0=\{1\}$, $J_2=\{2\}$ and $J_1=J_3=J_4=\emptyset$; moreover, by Theorem~\ref{theo_dih_refined}, the Wedderburn-Artin's decomposition of $\F_4[D_7]$ is: $$\F_4[D_7]\cong \underset{A_1}{\F_4[C_2]} \oplus \underset{A_2}{\mathcal{M}_2(\F_{4^3})},$$ where $C_2$ denotes the cyclic group generated by $\left[\begin{smallmatrix} 0&1\\1&0\end{smallmatrix}\right]$. Observe that there are only three ideals in $A_1=\F_4[C_2]$, being  $\langle\left[\begin{smallmatrix} 1&1\\0&0\end{smallmatrix}\right]\rangle$ the unique proper ideal. Moreover, each ideal of $A_2=\mathcal{M}_2(\F_{4^3})$ is generated by a matrix in row reduced echelon form, and there are $4^3+3=67$ of them. Consequently, there are $3\cdot 67=201$ possible $D_7$-codes over $\F_4$, which are isomorphic to $$ \langle M_1\rangle_{A_1} \oplus \langle M_2\rangle_{A_2},$$ where $M_1\in \{ \left[\begin{smallmatrix} 0&0\\0&0\end{smallmatrix}\right], \left[\begin{smallmatrix} 1&1\\0&0\end{smallmatrix}\right],\left[\begin{smallmatrix} 1&0\\0&1\end{smallmatrix}\right]\}$ and $M_2\in \{ \left[\begin{smallmatrix} 0&0\\0&0\end{smallmatrix}\right],\left[\begin{smallmatrix} 0&1\\0&0\end{smallmatrix}\right],\left[\begin{smallmatrix} 1&\lambda\\0&0\end{smallmatrix}\right],\left[\begin{smallmatrix} 1&0\\0&1\end{smallmatrix}\right]\}$ with $\lambda\in \F_{4^3}$. In particular, by Corollary~\ref{cor_dih_sel}, twenty of them are hermitian self-orthogonal codes.
\end{example}

\smallskip

\begin{example}
\label{example-caocao}
Consider in this occasion the group algebra $\F_4[D_9]$, so $q=2$ and $n=9$. We can factorise the polynomial $\mx^9-1=\mx^9+1\in\F_4[\mx]$ into irreducible factors as follows, where $\omega$ is a primitive element of $\F_4$: $$\mx^9+1 = \underset{f_1}{(\mx+1)}\underset{f_2}{(\mx+\omega)}\underset{f_2^{*}=\overline{f_2}}{(\mx+\omega^2)}\underset{f_3}{(\mx^3+\omega)}\underset{f_3^{*}=\overline{f_3}}{(\mx^3+\omega^2)}.$$ Following the notation in Hypotheses~\ref{hipotesis_descomposicion_Fq2Dn} we have $J_0=\{1\}$, $J_3=\{2,3\}$ and $J_1=J_2=J_4=\emptyset$. In particular, by Theorem~\ref{theo_dih_refined} we deduce \begin{equation}\label{eq:ex} \F_4[D_9]\cong \underset{A_1}{\F_4[C_2]} \oplus \underset{A_2}{\mathcal{M}_2(\F_{4})} \oplus \underset{A_3}{\mathcal{M}_2(\F_{4^3})},\end{equation} where $C_2$ denotes the cyclic group generated by $\left[\begin{smallmatrix} 0&1\\1&0\end{smallmatrix}\right]$. Now Corollary~\ref{cor_dih_sel} ensures that in this group algebra there are $2\cdot (2+2)\cdot(2^3+2)=80$ hermitian self-orthogonal $D_{9}$-codes over $\F_4$, contrary to the formula in \cite[Theorem 4.3]{CaoCaoFu23} which asserts the amount $32$ of such codes. In fact, our computations with \textsc{Magma} together with Theorem~\ref{dih-self-hermitic}~v) and Proposition~\ref{prop_solutions}~(c) confirm that all the ideals of (\ref{eq:ex}) with the following structures correspond to hermitian self-orthogonal codes: $$ \pmb{0} \oplus \left\langle \begin{pmatrix}1& \omega^{k_1} \\ 0&0\end{pmatrix} \right\rangle\oplus  \left\langle \begin{pmatrix}1& \xi^{7k_2} \\ 0&0\end{pmatrix} \right\rangle \qquad \text{and} \qquad \left\langle \begin{pmatrix}1&1\\0&0 \end{pmatrix}\right\rangle \oplus \left\langle \begin{pmatrix}1& \omega^{k_1} \\ 0&0\end{pmatrix} \right\rangle\oplus  \left\langle \begin{pmatrix}1& \xi^{7k_2} \\ 0&0\end{pmatrix} \right\rangle,$$ for some primitive element $\xi$ of $\F_{4^3}$, for every $k_1\in \{0,1,2\}$ and for every $k_2\in\{0,1,...,8\}$. Certainly there are more than the $32$ claimed in \cite[Theorem 4.3]{CaoCaoFu23}.

\end{example}

\smallskip

\begin{example}
\label{ex-error}
Let $q=11$ and $n=7$, and consider the group algebra $\F_{11}[Q_7]$ of a generalised quaternion group $Q_7=\langle a, b\: | \: a^{14}=1, b^2=a^7, b^{-1}ab=a^{-1}\rangle$ of order $28$. Following the notation in Hypotheses~\ref{hipotesis_descomposicion_FqQn}, since $n$ is odd, we can factorise in $\F_{11}[\mx]$ the polynomials $\mx^7-1$ and $\mx^7+1$ as:
$$ \mx^7-1 = \underset{f_1}{(\mx-1)}\underset{f_2}{(\mx^3+5\mx^2+4\mx-1)}\underset{f_2^{*}}{(\mx^3+7\mx^2+6\mx-1)} ,$$
$$ \mx^7+1 = \underset{g_1}{(\mx+1)}\underset{g_2}{(\mx^3+4\mx^2+6\mx+1)}\underset{g_2^{*}}{(\mx^3+6\mx^2+4\mx+1)} .$$
In particular,  $r=t=1$ and $s=k=1$. Moreover, as $q\equiv 3 \text{ (mod 4)}$, then in virtue of Theorem~\ref{theo_quat_GaoYue} we deduce that \begin{equation}\label{eq:ex2}\F_{11}[Q_7] \overset{\psi}{\cong} \underset{A_1}{(\F_{11}\oplus \F_{11})}\oplus \underset{A_2}{\mathcal{M}_2(\F_{11^3})} \oplus \underset{B_1}{\F_{11}(\sqrt{-1}) }\oplus\underset{B_2}{ \mathcal{M}_2(\F_{11^3})},\end{equation} and $\psi$ is given by the generators of $Q_7$ as follows: 
$$\psi(a) = \left(1 , 1 , \begin{pmatrix} \xi^{570} & 0 \\ 0 & \xi^{760} \end{pmatrix} , -1 , \begin{pmatrix} \xi^{95} & 0 \\ 0 & \xi^{1235} \end{pmatrix}\right),$$ where $\xi$ denotes a primitive element of $\F_{11^3}$, and 
$$\psi(b) = \left(1 , -1 , \begin{pmatrix} 0&1\\1&0 \end{pmatrix} , \sqrt{-1} , \begin{pmatrix} 0&-1\\1&0 \end{pmatrix}\right).$$
Now by Corollary~\ref{cor_quat_self}, we can affirm that there are $3(11^3+2)=3999$ euclidean self-orthogonal $Q_7$-codes over $\F_{11}$. Consequently, as aforementioned, there is an error in the formula provided at \cite[Theorem 4.5]{GaoYue21} which ensures the existence of $9$ codes with such property (see also \cite[Example~5.2~(2)]{GaoYue21}). Indeed, our computations with \textsc{Magma} together with Corollary~\ref{cor_quat_self} confirm that all the ideals of (\ref{eq:ex2}) with the following structure correspond to euclidean self-orthogonal codes: $$ \pmb{0}\oplus \pmb{0}\oplus \pmb{0}\oplus \pmb{0}\oplus \left\langle \begin{pmatrix} 1&\lambda \\ 0&0 \end{pmatrix} \right\rangle, \; \text{ with } \lambda \in \F_{11^3}.$$ Observe that there are more than the $9$ claimed in \cite[Example~5.2~(2)]{GaoYue21}.
\end{example}


\end{document}